\documentclass[a4paper,reqno]{amsart}
\usepackage{amssymb, mathrsfs, amscd, amsaddr} 

\theoremstyle{plain}
\newtheorem{thm}{Theorem}
\newtheorem{lem}{Lemma}

\usepackage[english]{babel}
\usepackage[autostyle, english = american]{csquotes}
\MakeOuterQuote{"}

\usepackage{hyperref}
\hypersetup{
    colorlinks,%
    citecolor=red,%
    filecolor=purple,%
    linkcolor=blue,%
    urlcolor=blue
}

\newcommand\numberthis{\addtocounter{equation}{1}\tag{\theequation}}

\renewcommand{\v}[1]{\ensuremath{\mathbf{#1}}} 
\newcommand{\gv}[1]{\pmb{#1}} 
 
\newcommand{\mc}[1]{\mathcal{#1}} 
\newcommand{\mb}[1]{\mathbb{#1}} 
\newcommand{\ms}[1]{\mathscr{#1}} 
\newcommand{\mr}[1]{\mathrm{#1}} 
\newcommand{\mf}[1]{\mathfrak{#1}} 
\newcommand{\p}{\partial}

\newcommand{\fC}{\ms{C}}

\newcommand{\cL}{\mc{L}}
\newcommand{\fL}{\ms{L}}
\newcommand{\fD}{\ms{D}}

\newcommand{\J}{\gv{\mc{J}}}
\newcommand{\T}{\mc{T}}
\newcommand{\I}{\mc{I}}
\newcommand{\Z}{\mc{Z}}
\newcommand{\R}{\mb{R}}
\newcommand{\C}{\mb{C}}
\newcommand{\E}{\mc{E}}
\newcommand{\X}{\mc{X}}
\newcommand{\s}{\mf{s}}
\newcommand{\Ham}{\mc{H}}
\newcommand{\dd}{\mr{d}}

\newcommand{\gvsig}{\gv{\sigma}}

\newcommand{\ul}[1]{\underline{#1}}

\newcommand{\BW}[1]{\bigwedge\nolimits^{\!#1}}

\DeclareMathOperator*{\esssup}{ess\,sup}
\DeclareMathOperator{\curl}{curl}
\DeclareMathOperator{\diver}{div}
\DeclareMathOperator{\re}{Re}

\DeclareMathOperator{\Hproj}{\mc{P}}


\allowdisplaybreaks
\raggedbottom

\begin{document}
\title[]{Time Global Finite-Energy Weak Solutions to the Many-Body Maxwell-Pauli Equations}
\author{T. Forrest Kieffer}
\address{School of Mathematics, Skiles Building, Georgia Institute of Technology, Atlanta GA 30332-0160, USA. Email: \href{mailto:tkieffer3@gatech.edu}{tkieffer3@gatech.edu}}

\date{May 24, 2019}

\begin{abstract}
We study the quantum mechanical many-body problem of $N$ nonrelativistic electrons interacting with their self-generated classical electromagnetic field and $K$ static nuclei. The system of coupled equations governing the dynamics of the electrons and their self-generated electromagnetic field is referred to as the many-body Maxwell-Pauli equations. Here we construct time global, finite-energy, weak solutions to the many-body Maxwell-Pauli equations under the assumption that the fine structure constant $\alpha$ and the atomic numbers are not too large. The particular assumptions on the size of $\alpha$ and the atomic numbers ensure that we have energetic stability of the many-body Pauli Hamiltonian, i.e., the ground state energy is finite and uniformly bounded below with lower bound independent of the magnetic field and the positions of the nuclei. This work serves as an initial step towards understanding the connection between the energetic stability of matter and the wellposedness of the corresponding dynamical equations. 
\end{abstract}

\maketitle


\section{The Many-Body Maxwell-Pauli Equations}\label{sec:intro} 

The three-dimensional many-body Maxwell-Pauli (MP) equations are a system of nonlinear, coupled partial differential equations describing the time evolution of $N$ nonrelativistic electrons interacting with both their classical self-generated electromagnetic field and $K$ static (infinitely heavy) nuclei. In the Coulomb gauge it reads
\begin{align} 
& i \partial_t \psi = H (\v{A}) \psi , \label{eq:MBMP1} \\
& \square \v{A} = 8 \pi \alpha^2 \Hproj{ \J [ \psi , \v{A} ] } , \label{eq:MBMP2} \\
& \diver{\v{A}} = 0 . \label{eq:MBMP3} 
\end{align}
In (\ref{eq:MBMP1}), $\psi (t) \in \BW{N} [L^2 (\R^3)]^2$ is the Fermionic many-body wave function at time $t$ of the electrons ($\BW{N}$ is the $N$-fold antisymmetric tensor product), $\v{A} (t) : \R^3 \rightarrow \R^3$ is the total magnetic vector potential at time $t$ generated by the electrons, $H(\v{A})$ is the many-body Pauli Hamiltonian defined by
\begin{align}\label{def:manybody_Hamiltonian}
H (\v{A}) (t) = \sum_{j = 1}^N \T_j (\v{A}) (t) + V(\underbar{\v{R}} , \Z) + F [\v{A} , \partial_t \v{A}] (t) ,
\end{align}
where
\begin{align*}
\T_j (\v{A})(t) = I \otimes \cdots \otimes I \otimes [\gv{\sigma} \cdot (\v{p}_j + \v{A}_j (t))]^2 \otimes I \otimes \cdots \otimes I
\end{align*}
is the Pauli operator corresponding to the $j^{\mr{th}}$-electron and where $[\gvsig \cdot (\v{p}_j + \v{A}_j (t) ) ]^2$ is appearing in the $j$th factor of the tensor product, $\v{p}_j = - i \nabla_{\v{x}_j}$ is the conjugate momentum of the $j^{\mr{th}}$-electron, $\v{A}_j (t) = \v{A} ( t,  \v{x}_j )$ is the total magnetic vector potential at time $t$ evaluated at the position $\v{x}_j \in \R^3$ of the $j^{\mr{th}}$-electron, $\gvsig = (\sigma^1 , \sigma^2 , \sigma^3)$ is the vector of Pauli matrices with
\begin{align*}
\sigma^1 = \left( \begin{array}{cc}
0 & 1 \\
1 & 0 
\end{array} \right) , ~~~~~~ \sigma^2 = \left( \begin{array}{cc}
0 & -i \\
i & 0 
\end{array} \right) , ~~~~~~ \sigma^3 = \left( \begin{array}{cc}
1 & 0 \\
0 & -1 
\end{array} \right) ,
\end{align*}
$\ul{\v{R}} = (\v{R}_1 , \cdots , \v{R}_K)$ denotes the collection of the centers $\v{R}_i \in \R^3$ of the $K$ nuclei with $\v{R}_i \neq \v{R}_j$ for $i \neq j$, $\Z = (Z_1 , \cdots , Z_K)$ denotes the collection of the atomic numbers $Z_i \geq 0$ of the $K$ nuclei, $V( \underbar{\v{R}} , \Z ) : \R^{3N} \rightarrow \R$ denotes the sum total of the electron-electron, electron-nuclei, and nuclei-nuclei Coulomb potential interaction and is given by
\begin{align}\label{def:totelectrostatpot}
V ( \ul{\v{R}} , \Z ) (\ul{\v{x}}) = \sum_{1 \leq i < j \leq N} \frac{1}{|\v{x}_i - \v{x}_j|} - \sum_{i = 1}^N \sum_{j = 1}^K \frac{Z_j}{| \v{x}_i - \v{R}_j |} + \sum_{1 \leq i < j \leq K} \frac{Z_i Z_j}{|\v{R}_i - \v{R}_j|} ,
\end{align}
where $\ul{\v{x}} = (\v{x}_1 , \cdots , \v{x}_N) \in \R^{3N}$ is the collection of position coordinates of the $N$ electrons, and $F [\v{A} , \partial_t \v{A}]$ is the electromagnetic field energy and equals 
\begin{align}\label{def:field_energy}
F [ \v{A} , \partial_t \v{A} ] (t) = \frac{1}{8 \pi} \int_{\R^3} \left( \frac{1}{\alpha^2} \left| \curl \v{A} (t, \v{y}) \right|^2 + 4 \left| \partial_t \v{A} (t, \v{y}) \right|^2 \right) \dd \v{y} .
\end{align} 
In (\ref{eq:MBMP2}), $\Hproj = \curl{ \curl { (- \Delta)^{-1} }}$ is the Leray-Helmholtz projection onto divergence-free vector fields, $\square = (2 \alpha)^2 \partial_t^2 - \Delta$ is the d'Alembert operator, and $\J [\psi , \v{A}] (t) = \sum_{j = 1}^N \v{J}_j [ \psi , \v{A} ] (t) : \R^3 \rightarrow \R^3$ is the total probability current density of the electrons, where $\v{J}_j [\psi , \v{A}]$ is the probability current density of the $j^{\mr{th}}$-electron and is defined by
\begin{align}\label{def:prob_current_compact}
\v{J}_j [\psi , \v{A}] (t) (\v{x}) = - \re{ \int \langle \gvsig \psi_{\ul{\v{z}}_j'} (t) , \gvsig \cdot (\v{p} + \v{A} (t)) \psi_{\ul{\v{z}}_j'} (t) \rangle_{\C^2} (\v{x}) \dd \ul{\v{z}}_j' } .
\end{align}
In (\ref{def:prob_current_compact}), for $j = 1 , \cdots , N$, $\v{z}_j = (\v{x}_j , s_j) \in \R^3 \times \{1,2\}$, $\ul{\v{z}}_j' = (\v{z}_1 , \cdots , \v{z}_{j-1} , \v{z}_{j+1} , \cdots , \v{z}_N)$, $\dd \v{z}_i \equiv \sum_{s_i = 1}^2 \dd \v{x}_i$, and $\psi_{\ul{\v{z}}_j'} : \R^3 \rightarrow \C^2$ has component functions
\begin{align*}
\psi_{\ul{\v{z}}_j'} (\v{x} , s) = \psi (\v{z}_1 ; \cdots ; \v{x} , s ; \cdots ; \v{z}_N) , \hspace{1cm} s = 1,2 .
\end{align*}

\textit{Units}. The length unit is half the Bohr radius $\ell = \hbar^2 / (2 m e^2)$, the energy unit is $4$ Rydbergs $= 2 m e^4 / \hbar^2 = 2 m \alpha^2 c^2$, and the time unit is $\tau = \hbar / (4 ~ \mr{Rydbergs}) = \hbar^3 / (4 m e^4)$, where $\alpha = e^2 / \hbar c$ is Sommerfeld's dimensionless fine structure constant. The magnetic field $\v{B} = \curl{\v{A}}$ is in units of $e / (\alpha \ell^2)$ and one has $1/(c \tau) = 2 \alpha / \ell$. The field energy $F[\v{A} , \partial_t \v{A}] = ( \| \curl{ \v{A} } \|_2^2 + c^{-2} \| \partial_t \v{A} \|_2^2 ) / 8 \pi$ in these units is given by (\ref{def:field_energy}). Throughout the paper we will think of $\alpha$ as a parameter that can take any positive real value. 

Our study of (\ref{eq:MBMP1}) through (\ref{eq:MBMP3}) is motivated by the results on the energetic stability of matter in magnetic fields as developed in \cite{FrohlichLiebLoss-1986, LiebLoss1986, LossYau1986, Fefferman1995, LiebLossSolovej1995}. In particular, J.~Fr\"{o}hlich, E.~H.~Lieb, and M.~Loss in 1986 showed that the ground state energy of the Pauli Hamiltonian in the ($N=K=1$)-case, namely $H(\v{A}) = [ \gvsig \cdot (\v{p} + \v{A}) ]^2 - Z / |\v{x}| + F [ \v{A} , \v{0}]$, is finite and independent of the magnetic field $\v{A}$ when $Z$ is below a critical charge $Z_c$ and $- \infty$ when $Z$ exceeds $Z_c$ \cite{FrohlichLiebLoss-1986}. In other words, the one-electron atom in a magnetic field is not energetically stable when the atomic number $Z$ is too large. More generally, E.~H.~Lieb, M.~Loss, and J.~P.~Solovej in 1995 proved that, if $\alpha \leq 0.06$ and $\alpha^2 \max{\Z} \leq 0.041$, the ground state energy of the full many-body Pauli Hamiltonian $H(\v{A})$ is uniformly bounded below by $- C (N+K)$, where $C$ is a constant depending \textit{only} on $\alpha$ and $\Z$ \cite{LiebLossSolovej1995}. Considering these results on energetic stability, it seems natural to ask whether finiteness of the ground state energy has an influence on the wellposedness of the corresponding dynamical equations. Specifically, how does the existence (or nonexistence) of solutions to (\ref{eq:MBMP1}) through (\ref{eq:MBMP3}) depend on $Z_c$ in the $(N=K=1)$-case and, more generally, the size of $\Z$ and $\alpha$ in the $(\max{\{N , K\}} > 1)$-case? The aim of this paper is to make progress on these questions by constructing finite-energy weak solutions to (\ref{eq:MBMP1}) through (\ref{eq:MBMP3}) which are time global under the conditions $\alpha \leq 0.06$ and $\alpha^2 \max{\Z} \leq 0.041$ (see Theorem \ref{thm:weak_solns_MBMP}). 

As of this writing, there seems to be no existence theory of solutions to (\ref{eq:MBMP1}) through (\ref{eq:MBMP3}) for any initial data, \textit{even} in the single electron case with no nuclei. To contrast this, we point out that there is an extensive literature studying the closely related Maxwell-Schr\"{o}dinger (MS) (see, e.g., \cite{Nakamitsu1986, tsutsumi1993, guo1995, Ginibre2003, Shimomura2003, Ginibre2006, Ginibre2007, Nakamura2005, Nakamura2007, ginibre2008, Bejenaru2009, WADA2012, Petersen2014, QMHD-Antonelli2017}). In the Coulomb gauge, the MS equations read
\begin{align} 
& i \partial_t \psi = \left( (\v{p} + \v{A})^2 + F [\v{A} , \partial_t \v{A}] \right) \psi , \label{eq:MS1} \\
&  \square \v{A} = - 8 \pi \alpha^2 \Hproj{ \re{ \langle \psi , (\v{p} + \v{A}) \psi \rangle_{\C} } } , \label{eq:MS2} \\
& \diver{\v{A}} = 0 , \label{eq:MS3} 
\end{align}
where $\psi : \R^3 \rightarrow \C$ is the single-particle wave function without spin. Notably, M.~Nakamura and T.~Wada in 2007 proved the global existence of unique smooth solutions to (\ref{eq:MS1}) through (\ref{eq:MS3}) \cite{Nakamura2005, Nakamura2007}. In order to obtain time global solutions to the MS equations, the authors in \cite{Nakamura2005} first establish local wellposedness by linearizing (\ref{eq:MS1}) through (\ref{eq:MS3}) and applying a contraction mapping argument. Using a Koch-Tzvetkov type estimate on the Schr\"{o}dinger piece $e^{i \Delta t}$, the authors in \cite{Nakamura2007} obtain time local solutions in Sobolev spaces of low regularity and thereby improve upon the local wellposedness theory developed in \cite{Nakamura2005}. The lower regularity solutions are sufficiently close to the energy class so that, together with energy conservation, they may conclude the solutions exist for all time. A direct adaptation of the contraction mapping argument found in \cite{Nakamura2005} to prove the local existence of solutions to (\ref{eq:MBMP1}) through (\ref{eq:MBMP3}) in even the $(N=1 , K=0)$-case appears to break down due to the physical effects of spin, as we now describe. 

Consider the one-body MP equations, namely the $(N =1 , K = 0)$-case of (\ref{eq:MBMP1}) through (\ref{eq:MBMP3}), which read 
\begin{align} 
& i \partial_t \psi = \left( [\gvsig \cdot (\v{p} + \v{A})]^2 + F [\v{A} , \partial_t \v{A}] \right) \psi , \label{eq:OBMP1} \\
&  \square \v{A} = - 8 \pi \alpha^2 \Hproj{ \re{ \langle \gvsig \psi , \gvsig \cdot (\v{p} + \v{A}) \psi \rangle_{\C^2} } } , \label{eq:OBMP2} \\
& \diver{\v{A}} = 0 . \label{eq:OBMP3} 
\end{align}
The only difference between the magnetic Schr\"{o}dinger equation (\ref{eq:MS1}) and Pauli equation (\ref{eq:OBMP1}) comes from the coupling between the spin of the electron and the magnetic field $\v{B} = \curl{\v{A}}$, as seen through the identity 
\begin{align}\label{eq:expandedPauli}
[ \gvsig \cdot (\v{p} + \v{A}) ]^2 = (\v{p} + \v{A})^2 + \gvsig \cdot \v{B} .
\end{align}
Similarly, the only difference between the probability current densities on the right hand sides of (\ref{eq:MS2}) and (\ref{eq:OBMP2}) is the inclusion of the spin current, namely $\curl{\langle \psi , \gvsig \psi \rangle_{\C^2}}$, appearing in the identity
\begin{align}\label{eq:expandedCurrent}
\re{ \langle \gvsig \psi , \gvsig \cdot (\v{p} + \v{A}) \psi \rangle_{\C^2} } = \re{ \langle \psi , (\v{p} + \v{A}) \psi \rangle_{\C^2} } -\frac{1}{2} \curl{\langle \psi , \gvsig \psi \rangle_{\C^2}} .
\end{align}
As mentioned, a direct adaptation of the contraction mapping argument in \cite{Nakamura2005} seems to break down due to the presence of spin-magnetic field coupling and the spin current in (\ref{eq:expandedPauli}) and (\ref{eq:expandedCurrent}), respectively. This is perhaps surprising from a strict PDE point of view since the magnetic field $\v{B}$ and the spin current $\curl{\langle \psi , \gvsig \psi \rangle}$ are formed from only first-order derivatives of $\v{A}$ and $\psi$, respectively. Nevertheless, such a strategy seems to bottleneck as it becomes necessary to estimate $\| \langle \psi , \curl{ \gvsig \varphi } \rangle_{\C^2} \|_2$ by $\| \psi \|_{1,2} \| \varphi \|_{2}$, and such an estimate is impossible in general. In \cite{Nakamura2005}, the authors manage to make such an estimate on the similar term $\re{ \langle \psi , \v{p} \varphi \rangle_{\C} }$ appearing in (\ref{eq:MS2}) by utilizing the projection operator $\Hproj$ and observing that $\Hproj{ ( \psi \nabla \varphi ) } = - \Hproj{ (\varphi \nabla \psi) }$. Such a trick is impossible for the spin current as the projection operator $\Hproj$ acts as the identity on a pure curl: $\Hproj{ \curl } \equiv \curl$. 

In order to circumvent these difficultlies, we combine the methods in \cite{Nakamura2005} with ideas from the 1995 work on the MS equations by Y.~Guo, K.~Nakamitsu, and W.~Strauss \cite{guo1995}. In the latter article, the authors consider an $\varepsilon$-modified version of the MS equations that read 
\begin{align} 
& \partial_t \psi = - (i + \varepsilon) \left[ (\v{p} + \v{A})^2 + F [ \v{A} , \partial_t \v{A} ] \right] \psi , \label{eq:MS1_epsilon} \\
&  \square \v{A} = - 8 \pi \alpha^2 \Hproj{ \re{ \langle \psi , (\v{p} + \v{A}) \psi \rangle_{\C} } } , \label{eq:MS2_epsilon} \\
& \diver{\v{A}} = 0 . \label{eq:MS3_epsilon} 
\end{align}
By taking advantage of the regularity-improving, dispersive properties of the heat kernel $e^{\varepsilon t \Delta}$ and the dissipative charge and energy associated with the $\varepsilon$-modified MS equations, the authors in \cite{guo1995} are able to prove the existence of low regularity time global solutions to (\ref{eq:MS1_epsilon}) through (\ref{eq:MS3_epsilon}). Then, by using a compactness argument to consider the $\varepsilon \rightarrow 0$ limit, the authors prove these low regularity time global solutions to (\ref{eq:MS1_epsilon}) through (\ref{eq:MS3_epsilon}) converge to time global finite-energy weak solutions to (\ref{eq:MS1}) through (\ref{eq:MS3}). 

The consideration of \cite{guo1995}, therefore, leads us to introduce our own $\varepsilon$-modified version of the many-body MP equations (see (\ref{eq:MBMP1_epsilon}) through (\ref{eq:MBMP3_epsilon})). Using the contraction mapping argument in \cite{Nakamura2005} we are able to establish the local wellposedness of this $\varepsilon$-modified many-body MP system in appropriate Sobolev spaces (see Theorem \ref{thm:local_exist_MBMP_epsilon}). Then, using the energetic stability of the many-body Pauli Hamiltonian (which requires assumptions on the size of $\alpha$ and $\max{\Z}$), we argue that the Coulomb energy $V[\phi^{\varepsilon} (t)] = \langle \phi^{\varepsilon} (t) , V (\ul{\v{R}} , \Z) \phi^{\varepsilon} (t) \rangle_{L^2}$ evaluated along a local solution $\phi^{\varepsilon} (t)$ to the $\varepsilon$-modified system is uniformly bounded by a constant depending on $N$, $K$, $\alpha$, $\Z$, $\ul{\v{R}}$, and the initial data, but \textit{independent} of $\varepsilon$ and $t$ (see Lemma \ref{lem:bound_on_coulomb}). Using this bound, we argue that low regularity local-in-time solutions to the $\varepsilon$-modified many-body MP system exist for all time (see Theorem \ref{thm:MBMP_epsilon_dissipation-laws}). Finally, by mimicing the compactness method in \cite{guo1995} we are able to take the $\varepsilon \rightarrow 0$ limit of these low regularity time global solutions to the $\varepsilon$-modified many-body MP equations and obtain a time global finite-energy weak solutions to the original many-body MP equations.

The vital step in the proof strategy outlined above is showing the Coulomb energy $V[\phi^{\varepsilon} (t)]$ along a solution $\phi^{\varepsilon} (t)$ to the $\varepsilon$-modified many-body MP equations is uniformly bounded. The existence of such a uniform bound on the Coulomb energy is only true when the total energy is uniformly bounded below with lower bound independent of the magnetic field and the positions of the nuclei. As mentioned earlier, the latter constraint has been shown to be true under certain assumptions on the size of the atomic numbers $\Z$ and the fine structure constant $\alpha$ \cite{LiebLossSolovej1995}. Below we describe the results of \cite{FrohlichLiebLoss-1986} and \cite{LiebLossSolovej1995} in greater detail, clarifying the definition of the critical charge $Z_c$, which is relevant for stability in the $(N=K=1)$-case, and the exact restrictions on the size of $\alpha$ and $\max{\Z}$ for energetic stability in the many-body case.

Following \cite{FrohlichLiebLoss-1986}, we consider the function space 
\begin{align}\label{def:function_space_C}
\fC_N := \left\lbrace (\psi , \v{A}) \in \BW{N} [H^1 (\R^3)]^2 \times \dot{H}^1 (\R^3 ; \R^3) ~ : ~ \| \psi \|_2 = 1 , ~ \diver{\v{A}} = 0 \right\rbrace .
\end{align} 
The \textit{critical charge} $Z_c$ is then defined as
\begin{align}\label{def:Zc}
Z_c := \inf{\left\lbrace \frac{F[\v{A} , \v{0}]}{\langle \psi , | \cdot |^{-1} \psi \rangle_{L^2}} ~ : ~ (\psi , \v{A}) \in \fC_1 ~~ \mr{and} ~~ \gvsig \cdot (\v{p} + \v{A}) \psi = 0  \right\rbrace } .
\end{align}
An important observation is that $Z_c < \infty$, as there exist nontrivial finite-energy solutions $(\psi , \v{A})$ to the \textit{zero mode} equation $\gvsig \cdot (\v{p} + \v{A}) \psi = 0$ (see, for example, \cite{LossYau1986}). Defining the energy of a pair $(\psi , \v{A}) \in \fC_1$ as
\begin{align*}
E [\psi , \v{A}] = \| \gvsig \cdot (\v{p} + \v{A}) \psi \|_2^2  - Z \langle \psi , | \cdot |^{-1} \psi \rangle_{L^2} + F [ \v{A} , \v{0} ], 
\end{align*}
we can state the main result of \cite{FrohlichLiebLoss-1986}:
\begin{align}\label{eq:one-electron_stability_estimate}
\inf{ \left\lbrace E [ \psi , \v{A} ] ~ : ~ (\psi , \v{A}) \in \fC_1 \right\rbrace } = \left\lbrace \begin{array}{cc}
\mr{finite} & \mr{when} ~ Z < Z_c \\
- \infty & \mr{when} ~ Z > Z_c
\end{array} \right. .
\end{align}
In words, the ground state energy of the one-body Pauli Hamiltonian $[ \gvsig \cdot (\v{p} + \v{A}) ]^2 - Z / |\v{x}| + F [ \v{A} , \v{0}]$ is uniformly bounded below \textit{independent} of the magnetic field $\v{B} = \curl{\v{A}}$. Now, consider a many-body pair $(\psi , \v{A}) \in \fC_N$ with $N > 1$. For such a state we formally define the total energy $E$ as the quadratic form $\langle \psi , H (\v{A}) \psi \rangle_{L^2}$, where $H(\v{A})$ is given by (\ref{def:manybody_Hamiltonian}) and we consider only the magnetic field energy $F[\v{A} , \v{0}]$. According to Theorem 1 of \cite{LiebLossSolovej1995}, if $\alpha \leq 0.06$ and $\alpha^2 \max{\Z} \leq 0.041$, then
\begin{align}\label{eq:general_stability_estimate}
E \geq - C(\alpha) N^{1/3} K^{2/3}  ,
\end{align}
where $C (\alpha) > 0$ is a constant depending only on $\alpha$ and $\Z$. That is, for small enough $\max{\Z}$ and $\alpha$, the total energy $E$ associated with the many-body Pauli Hamiltonian $H(\v{A})$ is bounded below with lower bound independent of the magnetic field $\v{B} = \curl{\v{A}}$ \textit{and} the positions of the nuclei $\ul{\v{R}}$. We note that the antisymmetry condition in the definition of $\fC_N$ (\ref{def:function_space_C}) is crucial, as minimizing with respect to Bosonic (completely symmetric) wave functions results in collapse (i.e., the ground state is $- \infty$). 

\begin{thm}[Time Global Finite-Energy Weak Solutions]\label{thm:weak_solns_MBMP}
Suppose $\alpha \leq 0.06$ and $\alpha^2 \max{\Z} \leq 0.041$. Given $(\psi_0 ,\v{a}_0 , \dot{\v{a}}_0) \in \BW{N} [H^1 (\R^3 )]^2 \times  H^1 (\R^3 ; \R^3) \times L^2 (\R^3 ; \R^3)$ with $\| \psi_0 \|_2 = 1$ and $\diver{\v{a}_0} = \diver{\dot{\v{a}}_0} = 0$, there exists at least one solution finite-energy weak solution
\begin{align*}
(\psi , \v{A} , \partial_t \v{A}) \in C^{\mr{w}} ( \R_+ ; \BW{N} [H^1 (\R^3 )]^2 \times H^1 (\R^3 ; \R^3)  \times L^2 (\R^3 ; \R^3) )
\end{align*}
to (\ref{eq:MBMP1}) through (\ref{eq:MBMP3}) such that the initial conditions $\left. (\psi , \v{A} , \partial_t \v{A}) \right|_{t = 0} = (\psi_0 , \v{a}_0 , \dot{\v{a}}_0)$ are satisfied.
\end{thm}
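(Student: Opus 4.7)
Following the strategy outlined above, I would introduce an $\varepsilon$-regularized version of (\ref{eq:MBMP1})--(\ref{eq:MBMP3}) by replacing the Schr\"odinger flow $i\partial_t$ by $(i+\varepsilon)^{-1}\partial_t$ (equivalently, adding a dissipative term $-\varepsilon[H(\v{A})\psi - F[\v{A},\partial_t\v{A}]\psi]$ on the right-hand side of (\ref{eq:MBMP1}), along the lines of (\ref{eq:MS1_epsilon})), keeping (\ref{eq:MBMP2}) and (\ref{eq:MBMP3}) unchanged. The regularizing effect of the heat semigroup $e^{\varepsilon t \Delta}$ restores enough parabolic smoothing to run a contraction mapping argument of the Nakamura--Wada type \cite{Nakamura2005}, compensating for the spin-current obstruction described around (\ref{eq:expandedPauli})--(\ref{eq:expandedCurrent}); this provides local-in-time strong solutions $(\phi^\varepsilon, \v{A}^\varepsilon)$ to the $\varepsilon$-modified system in a high-regularity Sobolev space, as asserted in Theorem \ref{thm:local_exist_MBMP_epsilon}.

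Next, I would derive the dissipative balance laws for the $\varepsilon$-modified system: the $L^2$-norm of $\phi^\varepsilon$ decreases and a modified total energy (kinetic $+$ Coulomb $+$ field) is nonincreasing in $t$. Combined with the Lieb--Loss--Solovej stability estimate (\ref{eq:general_stability_estimate}), which is available precisely under the hypotheses $\alpha \leq 0.06$ and $\alpha^2 \max \Z \leq 0.041$, the decreasing total energy yields an upper bound on $\langle \phi^\varepsilon(t), [\sum_j \T_j(\v{A}^\varepsilon)+F[\v{A}^\varepsilon,\partial_t\v{A}^\varepsilon]]\phi^\varepsilon(t)\rangle_{L^2}$ in terms of the initial energy and a constant independent of $\v{A}^\varepsilon$, $\ul{\v{R}}$, $t$, and $\varepsilon$. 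This is the content of Lemma \ref{lem:bound_on_coulomb} and immediately controls the Coulomb term $V[\phi^\varepsilon(t)]$. Feeding this uniform bound into a standard blow-up alternative extends the local solutions $(\phi^\varepsilon,\v{A}^\varepsilon)$ to all of $\R_+$, giving Theorem \ref{thm:MBMP_epsilon_dissipation-laws}.

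For the final step, I would pass to the limit $\varepsilon \to 0$ by a compactness/weak-convergence argument in the spirit of \cite{guo1995}. The uniform energy bounds just obtained imply that $\{\phi^\varepsilon\}_{\varepsilon>0}$ is bounded in $L^\infty_{\mr{loc}}(\R_+;\BW{N}[H^1(\R^3)]^2)$ and $\{(\v{A}^\varepsilon,\partial_t\v{A}^\varepsilon)\}_{\varepsilon>0}$ is bounded in $L^\infty_{\mr{loc}}(\R_+; H^1\times L^2)$. Using the equations themselves to control the time derivatives and invoking the Aubin--Lions lemma on bounded space-time domains, one extracts a subsequence converging weakly-* to a triple $(\psi,\v{A},\partial_t\v{A})$ in the stated class, with enough strong convergence on compacts to pass to the limit in the nonlinear terms $\J[\phi^\varepsilon,\v{A}^\varepsilon]$, $\Hproj$, and in the quadratic-in-$\v{A}$ pieces of $H(\v{A})$. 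Weak-$*$ continuity in time is recovered in the standard way via an equicontinuity argument based on the PDE. Matching the initial data follows from the uniform bounds and weak convergence.

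\textbf{Main obstacle.} The principal difficulty is the uniform-in-$\varepsilon$ control of the Coulomb energy: without the spin-independent Lieb--Loss--Solovej stability bound (\ref{eq:general_stability_estimate}), the negative Coulomb attraction and the spin--magnetic coupling $\gvsig\cdot\v{B}$ in (\ref{eq:expandedPauli}) could drive the kinetic-plus-field energy to $+\infty$ along the regularized flow, destroying every subsequent compactness argument. The smallness assumptions $\alpha\leq 0.06$ and $\alpha^2\max\Z\leq 0.041$ enter here and only here in an essential way. A secondary, more technical hurdle is verifying that the $\varepsilon$-modified system is amenable to the Nakamura--Wada contraction despite the spin current $\curl\langle\psi,\gvsig\psi\rangle_{\C^2}$; this is where the parabolic smoothing from the $\varepsilon$-regularization is needed to absorb terms of the form $\|\langle\psi,\curl\gvsig\varphi\rangle_{\C^2}\|_2$ which cannot be controlled by the Helmholtz-projection trick of \cite{Nakamura2005}.
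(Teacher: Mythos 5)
Your overall architecture matches the paper's: an $\varepsilon$-regularized system solved locally by a heat-kernel-smoothed contraction, uniform bounds from energy dissipation plus the Lieb--Loss--Solovej stability estimate, and a Guo--Nakamitsu--Strauss compactness passage to $\varepsilon\to 0$. However, there are two genuine gaps in the regularization you propose.

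First, you keep the wave equation (\ref{eq:MBMP2}) unchanged and attribute the resolution of the spin-current obstruction to the parabolic smoothing of $e^{\varepsilon t\Delta}$. That smoothing acts only on the matter equation; it does nothing for the source term of the wave equation, which has no regularizing propagator. The current $\J[\phi,\v{A}]$ contains $\curl\langle\phi,\gvsig\phi\rangle_{\C^2}$ and therefore lies only in $H^{m-2}$ (Lemma \ref{lem:Estimates-KG}), whereas the energy estimate of Lemma \ref{lem:energy_for_A_general} requires the source in $L^1_t H^{m-1}$ to return $\v{K}\in H^m$; since $\Hproj\curl=\curl$, the Nakamura--Wada integration-by-parts trick is unavailable, and the iteration loses a derivative in the field variable and does not close. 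The paper's fix is to mollify the current by $\Lambda_\varepsilon^{-1}=(1-\varepsilon\Delta)^{-1/2}$ in (\ref{eq:MBMP2_epsilon}) and, to keep the energy identity exact, to evaluate the Pauli operators and the current at $\tilde{\v{A}}=\Lambda_\varepsilon^{-1}\v{A}$. This extra regularization of the field equation is essential and is absent from your plan.

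Second, your modification of (\ref{eq:MBMP1}) omits the term $\varepsilon\,\phi^\varepsilon\,\E[\phi^\varepsilon,\v{A}^{\varepsilon},\partial_t\v{A}^{\varepsilon}]$ of (\ref{eq:MBMP1_epsilon}). Without it the $L^2$-norm is not conserved, and it does not simply decrease either: unlike the Maxwell--Schr\"odinger case of \cite{guo1995}, here $\langle\phi,H(\v{A})\phi\rangle$ is not sign-definite because of the Coulomb attraction, so $\|\phi^\varepsilon(t)\|_2$ can grow. Since the stability estimate (\ref{eq:general_stability_estimate}) applies to normalized states in $\fC_N$, and since the clean dissipation identity (\ref{eq:Diss_Energy}) (whose dissipation term is the nonnegative variance $\|\Ham\phi\|_2^2-\langle\phi,\Ham\phi\rangle^2$) relies on $\|\phi\|_2=1$, your version of the regularization would break the key step. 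Relatedly, the uniform Coulomb bound is not ``immediate'' from the upper bound on the energy: an upper and lower bound on $T+V+F$ still allows $T\to+\infty$ with $V\to-\infty$; ruling this out is exactly the content of the scaling argument in Lemma \ref{lem:bound_on_coulomb}, which yields $(V+F)^2\leq 4C(\alpha)T$ and a contradiction via comparison at two values of $\alpha$.
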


To clarify, when $N = 1$ and $K = 0$, no restriction on $\alpha$ is necessary for the conclusion on Theorem \ref{thm:weak_solns_MBMP} to hold true. Indeed, the total energy $\| \gvsig \cdot (\v{p} + \v{A}) \psi \|_2^2 + F [\v{A} , \partial_t \v{A}]$ in this case is always positive. Similary, if $N = K = 1$ no assumption on $\alpha$ is necessary and we only need to assume $Z < Z_c$, where $Z$ is the single nuclear charge present and $Z_c$ is the critical charge (\ref{def:Zc}). 

As described in brief earlier, to prove Theorem \ref{thm:weak_solns_MBMP} the first step is to consider an $\varepsilon$-modified version of (\ref{eq:MBMP1}) through (\ref{eq:MBMP3}), referred to as the $\varepsilon$-modified many-body MP equations. It reads  
\begin{align}
& \p_t \phi^{\varepsilon} =  - (i + \varepsilon) \Ham^{\varepsilon} (\v{A}^{\!\varepsilon}) \phi^{\varepsilon} + \varepsilon \phi^{\varepsilon} \E [\phi^{\varepsilon} , \v{A}^{\!\varepsilon} , \partial_t \v{A}^{\!\varepsilon}] , \label{eq:MBMP1_epsilon} \\
& \square \v{A}^{\!\varepsilon} =  8 \pi \alpha^2 \Lambda^{-1}_{\varepsilon} \Hproj{ \J [ \phi^{\varepsilon} , \tilde{\v{A}}^{\!\varepsilon} ] } , \label{eq:MBMP2_epsilon} \\
& \diver{\v{A}^{\!\varepsilon}} = 0 , ~~~~  \tilde{\v{A}}^{\!\varepsilon} = \Lambda_{\varepsilon}^{-1} \v{A}^{\!\varepsilon} , \label{eq:MBMP3_epsilon} 
\end{align}
where $\Lambda_{\varepsilon} = \sqrt{1 - \varepsilon \Delta}$, $\Ham^{\varepsilon} (\v{A}^{\!\varepsilon})$ is the $\varepsilon$-modified Hamiltonian
\begin{align}\label{def:epsilon_Hamiltonian}
\Ham^{\varepsilon} (\v{A}^{\!\varepsilon}) = \sum_{j = 1}^N \T_j (\tilde{\v{A}}^{\!\varepsilon}) + V( \ul{\v{R}} , \Z ) + F [\v{A}^{\!\varepsilon} , \partial_t \v{A}^{\!\varepsilon}] ,
\end{align} 
and $\E [\phi^{\varepsilon} , \v{A}^{\!\varepsilon} , \partial_t \v{A}^{\!\varepsilon}]$ is the total energy of the $\varepsilon$-modified system
\begin{align}\label{def:epsilon_tot_energy}
\E [\phi^{\varepsilon} , \v{A}^{\!\varepsilon} , \partial_t \v{A}^{\!\varepsilon}] = T [ \phi^{\varepsilon} , \tilde{\v{A}}^{\!\varepsilon} ] + V [\phi^{\varepsilon} ] + F [ \v{A}^{\!\varepsilon} , \partial_t \v{A}^{\!\varepsilon}] \| \phi^{\varepsilon} \|_2^2 ,
\end{align}
with
\begin{align}\label{def:epsilon_tot_kinetic}
T [ \phi^{\varepsilon} , \v{A}^{\!\varepsilon} ] = \sum_{j = 1}^N \| \gvsig_j \cdot (\v{p}_j + \v{A}_{j}^{\!\varepsilon}) \phi^{\varepsilon} \|_2^2
\end{align}
being the total kinetic energy,
\begin{align}\label{def:epsilon_tot_coulomb}
V [\phi^{\varepsilon}] = \langle \phi^{\varepsilon} , V (\ul{\v{R}} , \Z) \phi^{\varepsilon} \rangle_{L^2}
\end{align}
being the Coulomb energy, and $F [ \v{A}^{\!\varepsilon} , \partial_t \v{A}^{\!\varepsilon} ]$ is the field energy defined by (\ref{def:field_energy}). For the remainder of the paper we will drop the dependence on $\varepsilon$ when it is not needed. Note that the Pauli operators $\T_j$ in the definition (\ref{def:epsilon_Hamiltonian}) of $\Ham (\v{A})$ are evaluated at the regularized vector potential $\tilde{\v{A}}$, whereas the field energy $F$ is evaluated at $(\v{A}, \partial_t \v{A})$. Similarly, note that the probability current density $\J$ in (\ref{eq:MBMP2_epsilon}) is evaluated at $\tilde{\v{A}}$. These choices are made so that the total energy (\ref{def:epsilon_tot_energy}) is dissipative under the time evolution of (\ref{eq:MBMP1_epsilon}) through (\ref{eq:MBMP3_epsilon}) (see Theorem \ref{thm:MBMP_epsilon_dissipation-laws}). 

The space of initial conditions we will consider for the $\varepsilon$-modified MP system is
\begin{align*}
\X_0^m = & \left\lbrace (\psi_0 , \v{a}_0 , \dot{\v{a}}_0) \in [H^m (\R^{3N}) ]^{2^N} \oplus H^m (\R^3 ; \R^3) \oplus H^{m - 1} (\R^3 ; \R^3) \right. \\
& \hspace{0.25cm} \left. \text{s.t.} ~ \diver{\v{a}_0} = \diver{\dot{\v{a}}_0} = 0 \right\rbrace . \numberthis \label{def:initconds}
\end{align*}
Combining the regularity improving estimates of the heat kernel $e^{\varepsilon t \Delta}$ (see Lemma \ref{lem:Heat-Kernel}) with the contraction mapping argument in \cite{Nakamura2005}, we prove the following local wellposedness result for (\ref{eq:MBMP1_epsilon}) through (\ref{eq:MBMP3_epsilon}).
\begin{thm}[Local Wellposedness of the $\varepsilon$-Modified MP Equations]\label{thm:local_exist_MBMP_epsilon}
Fix $m \in [1 , 2]$ and $\varepsilon > 0$. Given initial data $(\phi_0 , \v{a}_0 , \dot{\v{a}}_0) \in \X_0^m$, there exists a maximal time interval $\I = [0 , T_{\mr{max}})$ and a unique solution
\begin{align*}
(\phi , \v{A}) \in C_{\I} [ H^m (\R^{3N}) ]^{2^N} \times [ C_{\I} H^m (\R^3 ; \R^3) \cap C^1_{\I} H^{m-1} (\R^3 ; \R^3) ]
\end{align*}
to (\ref{eq:MBMP1_epsilon}) through (\ref{eq:MBMP3_epsilon}) such that the initial conditions 
\begin{align*}
(\phi (0) , \v{A} (0) , \partial_t \v{A} (0)) = (\phi_0 , \v{a}_0 , \dot{\v{a}}_0)
\end{align*}
are satisfied and the blow-up alternative holds: either $T_{\mr{max}} = \infty$ or $T_{\mr{max}} < \infty$ and 
\begin{align*}
\limsup_{t \rightarrow T_{\mr{max}}}  \| (\phi (t) , \v{A} (t) , \partial_t \v{A} (t) ) \|_{H^m \oplus H^m \oplus H^{m-1}} = \infty .
\end{align*} 
Furthermore, if $\{ (\phi_0^j , \v{a}_0^j , \dot{\v{a}}_0^j) \}_{j \geq 1} \in \X_0^m$ converges, as $j \rightarrow \infty$, to $(\phi_0 , \v{a}_0 , \dot{\v{a}}_0) \in \X_0^1$ in $H^1 \oplus H^1 \oplus L^2$, then the corresponding sequence of solutions $\{ (\phi^j , \v{A}^j , \partial_t \v{A}^j) \}_{j \geq 1}$ converges in $H^1 \oplus H^1 \oplus L^2$, at each $t \in \I$, to the solution $(\phi , \v{A} , \partial_t \v{A})$ corresponding to the initial datum $(\phi_0 , \v{a}_0 , \dot{\v{a}}_0)$.
\end{thm}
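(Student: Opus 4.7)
\medskip

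\textbf{Proof Proposal.} The plan is to recast \eqref{eq:MBMP1_epsilon}--\eqref{eq:MBMP3_epsilon} as coupled Duhamel integral equations and then apply a contraction mapping argument in the Banach space
\[
X_T^m := C_{[0,T]} [H^m(\R^{3N})]^{2^N} \times \Bigl( C_{[0,T]} H^m(\R^3;\R^3) \cap C^1_{[0,T]} H^{m-1}(\R^3;\R^3) \Bigr) ,
\]
for a sufficiently small $T > 0$, on a closed ball determined by the initial data. For the wave function, I first write $\Ham^\varepsilon(\tilde{\v{A}}) \phi = -\Delta_{3N} \phi + \cL(\tilde{\v{A}}) \phi$, where $\Delta_{3N} = \sum_j \Delta_{\v{x}_j}$ and $\cL(\tilde{\v{A}})$ gathers all terms of order $\leq 1$ in $\phi$ (namely $2 \tilde{\v{A}}_j \cdot \v{p}_j$, $|\tilde{\v{A}}_j|^2$, $\gvsig_j \cdot \tilde{\v{B}}_j$, the Coulomb piece $V(\ul{\v{R}},\Z)$ and the scalar multiplier $F[\v{A},\p_t \v{A}] - \E$). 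The Duhamel formulation becomes
\begin{align*}
\phi(t) &= e^{(i+\varepsilon)t\Delta_{3N}} \phi_0 - (i+\varepsilon) \int_0^t e^{(i+\varepsilon)(t-s)\Delta_{3N}} \bigl( \cL(\tilde{\v{A}}) \phi \bigr)(s) \, \dd s \\
& \quad {} + \varepsilon \int_0^t e^{(i+\varepsilon)(t-s)\Delta_{3N}} \bigl( \phi \, \E[\phi,\v{A},\p_t\v{A}] \bigr)(s) \, \dd s ,
\end{align*}
while $\v{A}$ is obtained from the standard Kirchhoff/d'Alembert representation for $\square \v{A} = 8\pi\alpha^2 \Lambda_\varepsilon^{-1} \Hproj \J[\phi,\tilde{\v{A}}]$ with initial data $(\v{a}_0,\dot{\v{a}}_0)$. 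Call the resulting map $\Phi:(\phi,\v{A})\mapsto(\tilde\phi,\tilde{\v{A}})$.

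The main work is to show $\Phi$ is a contraction on a ball in $X_T^m$. For the Schr\"odinger part, I would invoke the regularity-improving heat estimate stated as Lemma \ref{lem:Heat-Kernel}, of the form $\| e^{(i+\varepsilon)t\Delta}f \|_{H^{s}} \lesssim (\varepsilon t)^{-(s-r)/2} \|f\|_{H^r}$ for $s \geq r$, which lets one absorb the loss of derivatives produced by the source $\cL(\tilde{\v{A}})\phi$. The decisive observation is that $\tilde{\v{A}} = \Lambda_\varepsilon^{-1} \v{A}$ is a full derivative smoother than $\v{A}$; so if $\v{A} \in H^m$ then $\tilde{\v{A}} \in H^{m+1}$, hence $\tilde{\v{B}} = \curl \tilde{\v{A}} \in H^m$, and the spin-magnetic coupling $\gvsig_j \cdot \tilde{\v{B}}_j \phi$ as well as $(2\tilde{\v{A}}_j \cdot \v{p}_j + |\tilde{\v{A}}_j|^2)\phi$ can be controlled in $H^{m-1}$ via Sobolev product estimates in three dimensions (Kato--Ponce, Hardy, and Sobolev embedding), which is precisely what is needed to close the loop after paying a small power of $t$ through the heat estimate. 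The Coulomb potential $V(\ul{\v{R}},\Z)$ is a fixed multiplication operator mapping $H^1 \to L^2$ by Hardy's inequality, and preserves $H^m$ for $m\in[1,2]$ (using $|\v{x}|^{-1} \in L^{3,\infty}_{\mathrm{loc}}$ combined with boundedness away from the singularities), so it contributes only a linear-in-$\phi$ bounded term.

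For the Maxwell part, the Duhamel representation for $\square\v{A}$ yields the standard energy estimate
\[
\| \v{A}(t) \|_{H^m} + \| \p_t \v{A}(t) \|_{H^{m-1}} \lesssim_\alpha \| \v{a}_0 \|_{H^m} + \| \dot{\v{a}}_0 \|_{H^{m-1}} + \int_0^t \| \Lambda_\varepsilon^{-1} \Hproj \J[\phi,\tilde{\v{A}}](s) \|_{H^{m-1}} \, \dd s .
\]
Since $\Lambda_\varepsilon^{-1}$ gains a derivative, we only need $\J[\phi,\tilde{\v{A}}] \in H^{m-2}$; the worst term inside $\J$ is the spin-current $\curl \langle\psi,\gvsig\psi\rangle$, which lies in $H^{m-1}$ provided $\phi \in H^m$, and the paramagnetic and diamagnetic pieces are handled by the same product estimates. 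Combining these bounds and choosing $T$ small (depending on $\varepsilon$, $\alpha$, $\ul{\v{R}}$, $\Z$, $N$, $K$ and $\|(\phi_0,\v{a}_0,\dot{\v{a}}_0)\|_{\X_0^m}$), both the self-mapping and Lipschitz conditions follow, giving existence and uniqueness on $[0,T]$.

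The blow-up alternative is then a standard continuation argument: as long as $\|(\phi(t),\v{A}(t),\p_t\v{A}(t))\|_{H^m\oplus H^m\oplus H^{m-1}}$ remains finite, one can restart the fixed-point scheme from time $t$ with a uniform lower bound on the new existence step. For the final continuous-dependence statement, I would run the above argument at the $m=1$ level to obtain a local time $T_1$ that depends only on the $H^1\oplus H^1\oplus L^2$ norm of the datum; then for initial data $(\phi_0^j,\v{a}_0^j,\dot{\v{a}}_0^j)\to(\phi_0,\v{a}_0,\dot{\v{a}}_0)$ in that topology, the differences of the corresponding Duhamel iterates are controlled by a Gronwall argument on a common time interval, and a finite-cover/bootstrap procedure extends convergence up to any $t\in\I$. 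The main obstacle I expect is setting up the product estimates cleanly so that the time-integrable singularity $(\varepsilon t)^{-(s-r)/2}$ from the heat kernel exactly matches the derivative loss from the spin-current/spin-magnetic terms; this is where the regularization $\tilde{\v{A}}=\Lambda_\varepsilon^{-1}\v{A}$ is crucial, and where the Nakamura contraction strategy would otherwise break down in the unregularized system.
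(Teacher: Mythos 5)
Your overall architecture --- Duhamel formulation, smoothing from the parabolic part of $e^{(i+\varepsilon)t\Delta}$, the wave energy estimate with the $\Lambda_\varepsilon^{-1}$ gain absorbing the derivative loss in the current, a contraction on a ball, continuation for the blow-up alternative, and a Gronwall argument at the $H^1$ level for continuous dependence --- is the same as the paper's. However, your treatment of the Coulomb term contains a genuine gap for $m>1$: multiplication by $V(\ul{\v{R}},\Z)$ does \emph{not} preserve $H^m$, and does not even map $H^2$ into $H^1$. Differentiating $|\v{x}_i-\v{R}_j|^{-1}\phi$ once produces $|\v{x}_i-\v{R}_j|^{-2}\phi$, and since the Rellich inequality fails in three dimensions, $|\cdot|^{-2}\phi\notin L^2$ for generic $\phi\in H^2$; the fact that $|\cdot|^{-1}$ is locally weak-$L^3$ only yields the $H^1\to L^2$ mapping. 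Since your scheme closes the fixed point by gaining exactly one derivative through the $L^2$-based heat estimate you quote, you would need $V(\ul{\v{R}},\Z)\phi\in H^{m-1}$, which fails at $m=2$. The paper's fix (Lemma \ref{lem:Estimates-Coulomb}) is to measure $V(\ul{\v{R}},\Z)\phi$ in $W^{m-1,\frac{5}{4}}$ --- the exponent $\tfrac{5}{4}$ is chosen so that $|\cdot|^{-2}$ is locally $L^{\frac{5}{4}}$ --- and then to invoke the full $L^r\to W^{m,p}$ dispersive estimate of Lemma \ref{lem:Heat-Kernel} (not merely its $H^r\to H^s$ special case) to recover both the derivative and the Lebesgue exponent at the integrable cost $t^{-9/20}(1+t^{-1/2})$, applied variable-by-variable on $\R^{3N}$ via the mixed-norm bounds (\ref{eq:Estimate-Coulomb-appen_3})--(\ref{eq:Estimate-Coulomb-appen_4}). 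This is exactly the step that forces the restriction $m\le 2$, so it cannot be dismissed as a bounded linear term.

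Two further discrepancies are worth flagging. First, your claim that the spin current lies in $H^{m-1}$ is too strong (for $m=1$ and $\phi\in H^1$ it is only the curl of a $W^{1,3/2}$ density, hence in $L^{3/2}\subset H^{-1}$, not $L^2$); the statement you actually need --- $\J[\phi,\tilde{\v{A}}]\in H^{m-2}$ so that $\Lambda_\varepsilon^{-1}\Hproj\J\in H^{m-1}$ --- is correct and is what Lemma \ref{lem:Estimates-KG} proves. Second, the paper does not use the extra derivative carried by $\tilde{\v{A}}=\Lambda_\varepsilon^{-1}\v{A}$ to tame the spin coupling $\gvsig\cdot\v{B}\,\phi$: its Pauli estimates (Lemma \ref{lem:Estimates-Pauli}) hold for any $\v{A}\in H^m$ by measuring $\cL_j(\v{A})\phi$ in $W^{m-1,3/2}$ and again using the $L^{3/2}\to H^m$ heat smoothing; the regularization is exploited only on the wave-equation source and to make the energy dissipative. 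Relatedly, the paper runs the contraction in the weaker $H^1\oplus H^1\oplus L^2$ metric on an $H^m$-ball (its difference estimates are proved only at the $H^1$ level), which delivers the continuous-dependence statement as a byproduct; contracting in the full $X_T^m$ norm, as you propose, would require $H^m$-level Lipschitz estimates that are neither established in the paper nor obviously compatible with your Coulomb treatment.
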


The limited range of regularity, namely $m \in [1, 2]$, in Theorem \ref{thm:local_exist_MBMP_epsilon} comes from controlling the Coulomb term $V(\ul{\v{R}} , \Z) \phi$ in (\ref{eq:MBMP1_epsilon}) (see Lemma \ref{lem:Estimates-Coulomb}). We can, in fact, prove Theorem \ref{thm:local_exist_MBMP_epsilon} for $m$ up to $\frac{5}{2} - \delta$, $\delta > 0$. However, doing so seems to be an unnecessary mathematical generality and has no bearing on the validly of Theorem \ref{thm:weak_solns_MBMP}. With Theorem \ref{thm:local_exist_MBMP_epsilon} at our disposal, we would then like to consider the limit $\varepsilon \rightarrow 0$ of the low regularity $(m=1)$ solutions to (\ref{eq:MBMP1_epsilon}) through (\ref{eq:MBMP3_epsilon}) and in the limit produce solutions to the original MP equations. However, one potential obstruction to considering the $\varepsilon \rightarrow 0$ limit is that the local time interval of existence $[0 , T_{\mr{max}})$ in Theorem \ref{thm:local_exist_MBMP_epsilon} might shrink to zero as $\varepsilon \rightarrow 0$. It is therefore necessary to prove that the low regularity time local solutions to (\ref{eq:MBMP1_epsilon}) through (\ref{eq:MBMP3_epsilon}) are, in fact, time global. This is accomplished via the dissipation of the total energy $\E$ of the $\varepsilon$-modified system and the stability estimate (\ref{eq:general_stability_estimate}). 

\begin{thm}[Dissipation of energy for the $\varepsilon$-modified MP equations]\label{thm:MBMP_epsilon_dissipation-laws}
Fix $\varepsilon > 0$ and $m \in [1 , 2]$. Let $(\phi_0 , \v{a}_0 , \dot{\v{a}}_0) \in \X^m_0$ with $\phi_0 \in \BW{N} [H^m (\R^3)]^2$ and $\| \phi_0 \|_2 = 1$. Let $(\phi , \v{A}) \in C_{\I} H^m \times [ C_{\I} H^m \cap C^1_{\I} H^{m-1} ]$ be the corresponding solution to (\ref{eq:MBMP1_epsilon}) through (\ref{eq:MBMP3_epsilon}) provided by Theorem \ref{thm:local_exist_MBMP_epsilon}. Then $\phi (t)$ remains completely antisymmetric and normalized for $t \in \I$, and, if $m = 2$,
\begin{align}\label{eq:Diss_Energy}
\E [\phi , \v{A} , \partial_t \v{A}] (t) + 2 \varepsilon \int_0^t \left[ \| \Ham (\v{A} (\tau )) \phi (\tau) \|_2^2 - \langle \phi (\tau) , \Ham (\v{A} (\tau ))  \phi (\tau) \rangle^2_{L^2} \right] \dd \tau = \E_0 , 
\end{align}
for all $t \in \I$, where $\E_0 = \E [\phi_0 , \v{a}_0 , \dot{\v{a}}_0]$. Moreover, if $\alpha \leq 0.06$ and $\alpha^2 \max{\Z} \leq 0.041$, then 
\begin{align}
 \| \nabla \phi (t) \|_2 \leq C_1 , \hspace{1cm} F [\v{A} , \partial_t \v{A}] (t) \leq C_2 , \hspace{1cm} \| \v{A} (t) \|_2 \leq C_3 ( 1 +  t ) , \label{eq:uniform-bounds}  
\end{align}
for all $t \in \I$, where $C_1 , C_2 , C_3 > 0$ are constants depending on $N$, $K$, $\Z$, $\alpha$, and the initial data, but not $\varepsilon$ or $t$. As a consequence, for $m = 1$ and for each fixed $\varepsilon > 0$, the solution $(\phi , \v{A})$ exists for all $t \in \R_+$. 
\end{thm}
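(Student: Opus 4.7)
The plan is to establish the identity (\ref{eq:Diss_Energy}) at regularity $m=2$ by a direct energy computation, deduce the uniform bounds (\ref{eq:uniform-bounds}) by combining it with the many-body stability estimate (\ref{eq:general_stability_estimate}) and Lemma \ref{lem:bound_on_coulomb}, and then transfer everything to $m=1$ by density together with the continuous-dependence clause of Theorem \ref{thm:local_exist_MBMP_epsilon}. For the preliminary claims: $\Ham(\v{A})$, the scalar $\E[\phi,\v{A},\p_t\v{A}]$, and the Coulomb potential $V(\ul{\v{R}},\Z)$ are all permutation-symmetric in the electron coordinates, so antisymmetry of $\phi(t)$ propagates by the uniqueness clause of Theorem \ref{thm:local_exist_MBMP_epsilon}. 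Pairing (\ref{eq:MBMP1_epsilon}) with $\phi$ in $L^2$, using self-adjointness of $\Ham(\v{A})$ and $\E\in\R$, gives the scalar ODE $\tfrac{d}{dt}\|\phi\|_2^2 = 2\varepsilon\,\E\,(\|\phi\|_2^2-1)$, so $\|\phi_0\|_2=1$ forces $\|\phi(t)\|_2=1$ on $\I$.

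With $\|\phi\|_2\equiv 1$ one has $\E(t)=\langle\phi,\Ham(\v{A})\phi\rangle$; differentiating and using (\ref{eq:MBMP1_epsilon}) together with the self-adjointness of $\Ham$ yields
\begin{equation*}
\tfrac{d}{dt}\E \;=\; -2\varepsilon\bigl(\|\Ham\phi\|_2^2 - \E^2\bigr) \;+\; \langle\phi,(\p_t\Ham)\phi\rangle.
\end{equation*}
The crux of the proof is that the last term vanishes. I would expand $\p_t\Ham = \sum_j \p_t\T_j(\tilde{\v{A}}) + \p_t F[\v{A},\p_t\v{A}]$. Working from $\T_j=[\gvsig\cdot(\v{p}_j+\tilde{\v{A}}_j)]^2$ and comparing with the definition (\ref{def:prob_current_compact}) of the spin-including current, the kinetic piece simplifies to $\sum_j\langle\phi,\p_t\T_j\phi\rangle = -2\int \p_t\tilde{\v{A}}\cdot\J[\phi,\tilde{\v{A}}]\,d\v{y}$. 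The field piece, after using the wave equation (\ref{eq:MBMP2_epsilon}) to eliminate $\p_t^2\v{A}$, integrating by parts via $\diver\v{A}=0$, and moving the self-adjoint operator $\Lambda_\varepsilon^{-1}$ off the source term (with $\Hproj\p_t\v{A}=\p_t\v{A}$ since $\diver\p_t\v{A}=0$), evaluates to $\p_t F = +2\int\p_t\tilde{\v{A}}\cdot\J[\phi,\tilde{\v{A}}]\,d\v{y}$. The two expressions cancel exactly, which is precisely the point of the matched insertions of $\Lambda_\varepsilon^{-1}$ in (\ref{eq:MBMP2_epsilon}) and in the argument of each $\T_j$ in (\ref{def:epsilon_Hamiltonian}); without this alignment the dissipation would be spoiled. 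Integrating in time then gives (\ref{eq:Diss_Energy}); $m=2$ regularity is what is needed for $\|\Ham\phi\|_2$ to be finite.

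From (\ref{eq:Diss_Energy}) and the non-negativity of its integrand we read off $\E(t)\le\E_0$ on $\I$. Since the first paragraph puts $(\phi(t),\tilde{\v{A}}(t))$ into $\fC_N$, the hypotheses $\alpha\le 0.06$ and $\alpha^2\max\Z\le 0.041$ license the many-body stability estimate (\ref{eq:general_stability_estimate}):
\begin{equation*}
T[\phi,\tilde{\v{A}}] + V[\phi] + F[\tilde{\v{A}},\v{0}] \;\ge\; -C(\alpha)N^{1/3}K^{2/3}.
\end{equation*}
Because $\Lambda_\varepsilon^{-1}$ is an $L^2$ contraction, $F[\tilde{\v{A}},\v{0}]\le F[\v{A},\p_t\v{A}]$, and Lemma \ref{lem:bound_on_coulomb} controls $|V[\phi(t)]|$ uniformly in $\varepsilon$ and $t$; combining these with $\E(t)\le\E_0$ pinches $T[\phi,\tilde{\v{A}}]+F[\v{A},\p_t\v{A}]$ between two constants independent of $\varepsilon$ and $t$, which yields $C_2$ immediately. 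The bound $C_1$ on $\|\nabla\phi\|_2$ follows by writing $T[\phi,\tilde{\v{A}}] = \sum_j \|(\v{p}_j+\tilde{\v{A}}_j)\phi\|_2^2 + \langle\phi,\gvsig_j\cdot\curl\tilde{\v{A}}_j\phi\rangle$ and using the $F$ bound with standard Sobolev estimates to absorb the spin--magnetic field coupling term; finally, $\|\p_t\v{A}(t)\|_2\le C\sqrt{F}$ integrated in time produces $\|\v{A}(t)\|_2\le\|\v{a}_0\|_2+Ct$ and hence $C_3$.

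Together with the blow-up alternative of Theorem \ref{thm:local_exist_MBMP_epsilon}, the three uniform bounds force the $m=2$ solution to be global. To reach $m=1$, I would approximate $(\phi_0,\v{a}_0,\dot{\v{a}}_0)\in\X_0^1$ by a sequence in $\X_0^2$ converging in $H^1\oplus H^1\oplus L^2$; since $C_1,C_2,C_3$ depend only on that lower norm of the initial data (and on $N,K,\alpha,\Z$), the approximating $m=2$ solutions satisfy (\ref{eq:uniform-bounds}) uniformly in the index and are all global. The continuous-dependence clause of Theorem \ref{thm:local_exist_MBMP_epsilon} then transfers both the bounds and global existence to the $m=1$ solution of the original initial data. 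The main obstacle throughout is the cancellation $\langle\phi,(\p_t\Ham)\phi\rangle=0$ in the second step; everything else is bookkeeping once that identity and the stability estimate are in hand.
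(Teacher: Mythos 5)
Your proposal is correct and follows essentially the same route as the paper: normalization via the scalar ODE for $\|\phi\|_2^2$, the dissipation identity via the exact cancellation of $\langle\phi,(\partial_t\Ham)\phi\rangle$ (the kinetic piece against $\partial_t F$ computed from the wave equation, with the matched $\Lambda_\varepsilon^{-1}$ insertions), the uniform bounds from $\E(t)\le\E_0$, the stability estimate, and Lemma \ref{lem:bound_on_coulomb}, and the transfer to $m=1$ via the continuous-dependence clause of Theorem \ref{thm:local_exist_MBMP_epsilon}. The only differences are cosmetic (e.g., you expand the Pauli square to reach $\|\nabla\phi\|_2$ where the paper uses the triangle inequality plus Young, and you integrate $\|\partial_t\v{A}\|_2$ where the paper differentiates $\|\v{A}\|_2^2$).
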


As can be seen from Theorem \ref{thm:MBMP_epsilon_dissipation-laws}, it is the uniform bounds (\ref{eq:uniform-bounds}) that require a restriction on $\alpha$ and $\Z$. Proving those bounds is what requires control of the Coulomb energy $V[ \phi^{\varepsilon} (t)]$, and the reasons such bounds are important are two-fold. First, and as already mentioned in the paragraph preceeding Theorem \ref{thm:MBMP_epsilon_dissipation-laws}, for each fixed $\varepsilon > 0$, it is necessary to have time-independent bounds on $(\phi^{\varepsilon} (t) , \v{A}^{\!\varepsilon} (t) , \partial_t \v{A}^{\!\varepsilon} (t))$ in $H^1 \times H^1 \times L^2$-norm in order to apply the blow-up alternative of Theorem \ref{thm:local_exist_MBMP_epsilon} and assert the $m = 1$ solutions of Theorem \ref{thm:local_exist_MBMP_epsilon} exist for all time. Second, in order to apply the compactness argument found in \cite{guo1995} (take the $\varepsilon \rightarrow 0$ limit), we need $\varepsilon$-independent bounds on $(\phi^{\varepsilon} (t) , \v{A}^{\!\varepsilon} (t) , \partial_t \v{A}^{\!\varepsilon} (t))$ in $H^1 \times H^1 \times L^2$-norm to apply the Banach-Alaoglu Theorem and extract a weak$^*$ converging subsequence. This weak$^*$ limit will be shown to be a finite-energy weak solution to (\ref{eq:MBMP1}) through (\ref{eq:MBMP3}), thus yielding a proof of Theorem \ref{thm:weak_solns_MBMP}. We emphasize that the complete antisymmetry of $\phi (t)$ is crucial, as otherwise we cannot make use of the stability result (\ref{eq:general_stability_estimate}) to control the Coulomb energy.

This paper is organized as follows: In \S\ref{sec:notation} we clarify our notation, define what we mean by a (weak) solution, and recall standard estimates in Sobolev spaces, including those for the heat kernel and wave equation. \S\ref{sec:contraction} is divided into two subsections: \S\ref{sub:estimates} and \S\ref{sub:metricspace}. In \S\ref{sub:estimates} we prove several estimates for the right hand sides of (\ref{eq:MBMP1_epsilon}) through (\ref{eq:MBMP3_epsilon}) in various Sobolev spaces. Such estimates are crucial to the proof of Theorem \ref{thm:local_exist_MBMP_epsilon}. In \S\ref{sub:metricspace} we introduce the metric space on which the Banach fixed point theorem will be applied, and then give a proof of Theorem \ref{thm:local_exist_MBMP_epsilon}. In \S\ref{sec:conserve} we provide a proof that the Coulomb energy is uniformly bounded, and use this result to prove Theorem \ref{thm:MBMP_epsilon_dissipation-laws}. Finally, \S\ref{sec:compact} is devoted to completing the proof of Theorem \ref{thm:weak_solns_MBMP}. 


\section*{Acknowledgements}

This work was partially supported by U.S. National Science Foundation grant DMS 1600560. Furthermore, we would like to thank Professor Michael Loss for many helpful discussions and, in particular, the proof of Lemma \ref{lem:bound_on_coulomb}.


\section{Notation, Definitions, and Mathematical Preliminaries}\label{sec:notation}

If $a , b \in \R$, $a \lesssim b$ means that there exists a universal constant $C > 0$ such that $a \leq C b$. For $p \in [1 , \infty)$ and $s \geq 0$, we will denote by $L^p = L^p (\R^d)$ the usual Lebesgue space, $W^{s,p} \equiv W^{s , p} (\R^d)$ the usual Sobolev space equipped with the norm $\| f \|_{s,p} \equiv \| (1 - \Delta)^{\frac{s}{2}} f \|_{L^p}$, and $\dot{W}^{s,p} \equiv \dot{W}^{s , p} (\R^d)$ the homogeneous Sobolev space equipped with the seminorm $\| f \|_{\dot{W}^{s,p}} \equiv \| (- \Delta)^{\frac{s}{2}} f \|_{p}$. When $s = 0$, we simply write $\| f \|_p$ and when $p = 2$ we will use the notation $H^s \equiv W^{s,2}$, $\dot{H}^s \equiv \dot{W}^{s,2}$. The negative index Sobolev spaces $H^{-s} (\R^d) \equiv ( H^s (\R^d) )^*$, for $s > 0$, are equipped with the usual norm $\| f \|_{-s,2} = \sup{ \{ \| f \eta \|_1 : \| \eta \|_{s,2} = 1 \} }$. Occasionally, we will find it convenient to work with the direct sum of Sobolev spaces $H^s \oplus H^s \oplus H^{s-1}$, which is equipped with the usual norm and abbreviated $\| \cdot \|_{s,2 \oplus s,2 \oplus s-1 , 2}$.

Let $\I \subset \R$ be a (possibly infinite) interval and $(X , \| \cdot \|_X)$ be a reflexive Banach space. Then $C_{\I} X \equiv C(\I ; X)$, $C^1_{\I} X \equiv C^1 (\I ; X)$, and $C^{\mr{w}}_{\I} X \equiv C^{\mr{w}} (\I ; X)$ denote the space of strongly continuous, strongly continuously differentiable, and weakly continuous mappings from $\I$ to $X$, respectively. For $p \in [1,\infty]$, $L^p_{\I} X \equiv L^p (\I ; X)$ denotes the space of strongly Lebesgue measurable functions $g : \I \rightarrow X$ with the property that
\begin{align*}
\| g \|_{L^p_{\I} X} = \left\lbrace \begin{array}{lcr}
\displaystyle \left( \int_{\I} \| g (s) \|_X^p ds \right)^{\frac{1}{p}} & \mr{for} & 1 \leq p < \infty \\
\displaystyle \esssup_{s \in \I} \| g (s) \|_X & \mr{for} & p = \infty
\end{array} \right. 
\end{align*}
is finite. For notational convenience, when $X = W^{s,q} (\R^d)$ for some $s$ and $q$ and the interval $\I$ is understood, we write $\| g \|_{L^p_{\I} W^{s,q}} \equiv \| g \|_{p ; s ,q}$, and when $s = 0$, we simply write $\| g \|_{L^p_{\I} L^q} \equiv \| g \|_{p ; q}$.

For us $\fD' (\R_+ ; X)$ denotes the space of distributions from $\R_+ = [0 , \infty)$ to $X$. That is, $\fD' (\R_+ ; X)$ is the set of strongly continuous linear maps from $C_c^{\infty} (\R_+)$ to $X$, where $C_c^{\infty} (\R_+)$ is equipped with uniform convergence on compact sets. When $g \in L_{\mr{loc}}^1 (\R_+ ; X)$ we denote the corresponding distribution in $\fD' (\R_+ ; X)$ defined via the formula
\begin{align*}
C_c^{\infty} (\R_+) \ni \phi \mapsto \int_{\R_+} g (s) \phi (s) \dd s \in X
\end{align*}
by the same symbol. By a \textit{weak} solution to (\ref{eq:MBMP1}) through (\ref{eq:MBMP3}) we mean a distributional solution $(\psi , \v{A})$ in the space $\fD' ( \R_+ ; [H^{-1} (\R^{3N})]^{2^N}) \times \fD' ( \R_+ ; H^{-1} (\R^{3} ; \R^3))$. Similarly, the solutions constructed in Theorem \ref{thm:local_exist_MBMP_epsilon} are considered to be distributional solutions in $\fD' ( \R_+ ; [H^{m-2} (\R^{3N})]^{2^N}) \times \fD' ( \R_+ ; H^{m-2} (\R^{3} ; \R^3))$ when $m < 2$ and satisfy (\ref{eq:MBMP1_epsilon}) through (\ref{eq:MBMP3_epsilon}) pointwise a.e. when $m = 2$. 

When considering vector fields $\v{A} \in L^p (\R^3 ; \R^3)$, $\v{A} = (A^1 , A^2 , A^3)$, we write
\begin{align*}
\| \v{A} \|_p^p = \sum_{j = 1}^3 \| A^j \|_{p}^p = \sum_{j = 1}^3 \int_{\R^3} | A^j (\v{x}) |^p \dd \v{x} .
\end{align*}
Likewise, the $L^p$-norm of gradients of vector fields is defined by
\begin{align*}
\| \nabla \v{A} \|_p^p = \sum_{j = 1}^3 \| \nabla A^j \|_{p}^p  = \sum_{i,j = 1}^3 \int_{\R^3} | \partial_{x^i} A^j (\v{x}) |^p  \dd \v{x} .
\end{align*}
We will frequently use the identity $\| \curl{ \v{A} } \|_2 = \| \nabla \v{A} \|_2$, when $\diver{\v{A}} = 0$ and $\v{A} \in \dot{H}^1 (\R^3 ; \R^3)$. When discussing many-body wave functions, we always consider $H^s (\R^{3N} ; \C^{2^N}) \simeq [H^s (\R^{3N})]^{2^N} \equiv \bigotimes^N [H^s (\R^3)]^2$ through the canonical isomorphism, and we recall that $\BW{N} [H^s (\R^3)]^2$ denotes the closed subspace of $\bigotimes^N [H^s(\R^3)]^2$ consisting of completely antisymmetric many-body wave functions. Similar to vector fields, the $L^p$-norm of a many-body wave function $\psi$ is defined as
\begin{align*}
\| \psi \|_p^p = \int |\psi (\ul{\v{z}})|^p \dd \ul{\v{z}} \equiv \sum_{s_1 = 1}^2 \cdots \sum_{s_N =1}^2 \int_{\R^{3N}} | \psi ( \v{x}_1 , s_1 ; \cdots ; \v{x}_N , s_N ) |^p \dd \ul{\v{x}} .
\end{align*}

Throughout the paper (and, in particular, \S\ref{sec:contraction}) we will make repeated use of Sobolev inequalities, dispersive estimates for the heat kernel, the Strichartz estimate for the wave equation, and the Kato-Ponce commutator estimate. The Sobolev inequalities are completely standard, but they are worth recalling here. Let $1 \leq p \leq q$, $s \geq 0$. If $s p < d$ and $f \in W^{s,p} (\R^d)$, then
\begin{align*}
\| f \|_{q} \lesssim \| f \|_{s , p} \hspace{0.5cm} \mr{when} \hspace{0.5cm} p \leq q \leq \frac{d p}{d -sp} .
\end{align*}
The other valuable estimates mentioned above are listed as a series of lemmas below.
\begin{lem}[Generalized Kato-Ponce inequality] \label{lem:Kato-Ponce}
Suppose $1 < p < \infty$, $s \geq 0$, $\alpha \geq 0$, $\beta \geq 0$ and $\frac{1}{p_i} + \frac{1}{q_i}  = \frac{1}{p}$ with $i = 1,2$, $1 < q_1 \leq \infty$, $1 < p_2 \leq \infty$. If $\phi \in W^{s+\alpha , p_1} \cap W^{- \beta , p_2}$ and $\psi \in W^{s+ \beta , q_2} \cap W^{- \alpha , q_1}$, then
\begin{align*}
\| (1 - \Delta)^{\frac{s}{2}} (\phi \psi) \|_{p} \lesssim & \| (1 - \Delta)^{\frac{s+\alpha}{2}} \phi \|_{p_1} \| (1 - \Delta)^{- \frac{\alpha}{2}} \psi \|_{q_1} \\
& \hspace{2cm} + \| (1 - \Delta)^{- \frac{\beta}{2}}  \phi \|_{p_2} \| (1 - \Delta)^{\frac{s+\beta}{2}} \psi \|_{q_2} .
\end{align*}
The same conclusion holds for $(1 -\Delta)^{\frac{s}{2}}$ replaced with $(-\Delta)^{\frac{s}{2}}$.
\end{lem}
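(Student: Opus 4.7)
The plan is to prove the estimate via a Littlewood-Paley paraproduct decomposition, the standard approach to Kato-Ponce-type fractional Leibniz rules. Let $\{\Delta_j\}_{j \geq -1}$ be an inhomogeneous Littlewood-Paley decomposition of the identity, with $\Delta_j$ localized in frequency at $|\xi| \sim 2^j$ for $j \geq 0$, and for a fixed large $N_0 \in \mb{N}$ set $S_j = \sum_{k \leq j-N_0}\Delta_k$. Decompose the product into three paraproduct pieces,
\begin{align*}
\phi\psi = \Pi_{\mr{HL}} + \Pi_{\mr{LH}} + \Pi_{\mr{HH}},
\end{align*}
where $\Pi_{\mr{HL}} = \sum_j (\Delta_j\phi)(S_j\psi)$, $\Pi_{\mr{LH}} = \sum_j (S_j\phi)(\Delta_j\psi)$, and $\Pi_{\mr{HH}} = \sum_{|j-k|<N_0}(\Delta_j\phi)(\Delta_k\psi)$.

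For the off-diagonal piece $\Pi_{\mr{HL}}$, each summand has Fourier support in an annulus of radius $\sim 2^j$, so applying $(1-\Delta)^{s/2}$ is equivalent to multiplication by $2^{js}$ up to a bounded Fourier multiplier. The Littlewood-Paley square-function characterization of $W^{s,p}$, combined with the Fefferman-Stein vector-valued maximal inequality, yields
\begin{align*}
\|(1-\Delta)^{s/2}\Pi_{\mr{HL}}\|_p \lesssim \Bigl\| \Bigl( \sum_j 4^{js} |\Delta_j\phi|^2 |S_j\psi|^2 \Bigr)^{1/2} \Bigr\|_p.
\end{align*}
The key redistribution step is the Bernstein-type pointwise bound $|S_j\psi(x)| \lesssim 2^{j\alpha}\, M\bigl((1-\Delta)^{-\alpha/2}\psi\bigr)(x)$, valid because $(1-\Delta)^{\alpha/2}$ acts essentially as multiplication by $2^{j\alpha}$ on the low-frequency support of $S_j\psi$. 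Absorbing the factor $2^{j\alpha}$ into $2^{js}$ and applying H\"older's inequality with $\tfrac{1}{p} = \tfrac{1}{p_1} + \tfrac{1}{q_1}$ produces
\begin{align*}
\|(1-\Delta)^{s/2}\Pi_{\mr{HL}}\|_p \lesssim \|(1-\Delta)^{(s+\alpha)/2}\phi\|_{p_1}\,\|(1-\Delta)^{-\alpha/2}\psi\|_{q_1},
\end{align*}
which is the first term of the asserted inequality. A symmetric argument, now shifting $\beta$ derivatives from $\phi$ onto $\psi$, estimates $\Pi_{\mr{LH}}$ by the second term.

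The main technical obstacle is the diagonal piece $\Pi_{\mr{HH}}$: the product $(\Delta_j\phi)(\Delta_k\psi)$ with $|j-k|<N_0$ has output frequencies spread throughout $[0,2^{j+N_0}]$ rather than being concentrated at scale $2^j$, so the square-function identity does not apply directly. The remedy is to insert an output-side Littlewood-Paley decomposition into scales $\ell \leq j+N_0$, use $s \geq 0$ together with Bernstein's inequality in $L^p$ to absorb the factor $2^{\ell s}$ without loss, and then distribute the derivative shift using either $\alpha$ or $\beta$ to match one of the two target terms; the resulting geometric sum over $\ell$ converges and is controlled by the same H\"older product as above, giving absorption into either term in the statement. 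The homogeneous case with $(-\Delta)^{s/2}$ follows by replacing the inhomogeneous decomposition by the homogeneous one throughout, using the homogeneous Besov characterization of $\dot{W}^{s,p}$ in place of its inhomogeneous counterpart.
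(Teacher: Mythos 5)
The paper does not actually prove this lemma: its ``proof'' is a one-line citation to Theorem 2 of the reference, so there is no argument in the text to compare yours against step by step. Your sketch is the standard Littlewood--Paley paraproduct proof of the fractional Leibniz rule with a derivative shift, and it is sound in outline; in particular the redistribution bound $|S_j\psi(x)|\lesssim 2^{j\alpha}\,M\bigl((1-\Delta)^{-\alpha/2}\psi\bigr)(x)$ is exactly the mechanism that lets the negative-order norms appear on the right-hand side, which is the whole point of the ``generalized'' version of the inequality. Two places would need care in a full write-up. First, in the $\Pi_{\mathrm{HL}}$ estimate the maximal function factors out of the $j$-sum, so you only need the scalar Hardy--Littlewood bound on $L^{q_1}$ (valid since $q_1>1$) rather than Fefferman--Stein, while the square-function characterization is applied to $\phi$ in $L^{p_1}$; one should check that the hypotheses force $p_1,q_2\in(1,\infty)$, which they do. Second, your treatment of $\Pi_{\mathrm{HH}}$ sums $2^{\ell s}$ over output scales $\ell\le j+N_0$, which is a convergent geometric sum only for $s>0$; at the endpoint $s=0$ one must argue differently, e.g.\ via the embedding $B^0_{p,1}\hookrightarrow L^p$ applied to the $\ell$-decomposition of $\Pi_{\mathrm{HH}}$, or by treating $s=0$ separately (where the $\alpha=\beta=0$ case reduces to H\"older). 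Neither point is a gap in the idea, only in the level of detail; your argument buys an actual self-contained proof where the paper simply defers to the literature.
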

\begin{proof}
See \cite[Theorem 2]{KatoPonce}. 
\end{proof}

\begin{lem}[Dispersive Estimates for the Heat Kernel] \label{lem:Heat-Kernel}
For any $m \geq 0$, $1 < r \leq p \leq \infty$, and $f \in L^r (\R^d)$ we have
\begin{align*}
\| e^{t \Delta} f \|_{m , p} \lesssim t^{- \frac{d}{2} \left( \frac{1}{r} - \frac{1}{p} \right)} \left( 1 +  t^{- \frac{m}{2} } \right) \| f \|_{r} 
\end{align*}
\end{lem}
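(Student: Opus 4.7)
The plan is to reduce the claim to the standard Gaussian convolution estimate. For $m = 0$, I would start from the explicit heat kernel $K_t(x) = (4\pi t)^{-d/2}\exp(-|x|^2/(4t))$ and the direct Gaussian computation $\|K_t\|_q = c_q\, t^{-(d/2)(1 - 1/q)}$; then Young's convolution inequality with $1 + 1/p = 1/r + 1/q$ delivers $\|e^{t\Delta} f\|_p \lesssim t^{-(d/2)(1/r - 1/p)} \|f\|_r$, which is the advertised bound with the factor $1 + t^{-m/2}$ replaced by $1$.

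For $m > 0$, my plan is the semigroup factorization
$$
(1-\Delta)^{m/2} e^{t\Delta} = \bigl[(1-\Delta)^{m/2} e^{t\Delta/2}\bigr] \circ e^{t\Delta/2},
$$
which reduces the lemma to the uniform-in-$p$ operator estimate
$$
\bigl\|(1-\Delta)^{m/2} e^{s\Delta}\bigr\|_{L^p \to L^p} \lesssim 1 + s^{-m/2}, \qquad 1 \leq p \leq \infty,
$$
composed with the $m = 0$ case applied to $e^{t\Delta/2}$. This operator bound in turn follows from Young's inequality once I control the $L^1$ norm of the convolution kernel $L_{m,s}$ whose Fourier transform equals $(1+|\xi|^2)^{m/2} e^{-s|\xi|^2}$: the claim is $\|L_{m,s}\|_1 \lesssim 1 + s^{-m/2}$. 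For $s \geq 1$ I would factor the symbol as $[(1+|\xi|^2)^{m/2} e^{-|\xi|^2}] \cdot e^{-(s-1)|\xi|^2}$, writing $L_{m,s}$ as the convolution of the fixed Schwartz function $L_{m,1}$ with the heat kernel at time $s-1$, so $\|L_{m,s}\|_1 \leq c_m$. For $s < 1$, parabolic rescaling $\xi = \eta/\sqrt s$ gives $L_{m,s}(x) = s^{-d/2} M_{m,s}(x/\sqrt s)$ with $\widehat{M_{m,s}}(\eta) = (1+|\eta|^2/s)^{m/2} e^{-|\eta|^2}$, hence $\|L_{m,s}\|_1 = \|M_{m,s}\|_1$; the bound $(1+|\eta|^2/s)^{m/2} \leq s^{-m/2}(1+|\eta|^2)^{m/2}$ (valid for $s \leq 1$, $m \geq 0$) combined with the integration-by-parts identity $|y|^{2N}|M_{m,s}(y)| \leq C_N \|\Delta_\eta^N \widehat{M_{m,s}}\|_{L^1}$ yields the pointwise decay $|M_{m,s}(y)| \lesssim s^{-m/2}(1 + |y|)^{-2N}$, which integrates to $\|M_{m,s}\|_1 \lesssim s^{-m/2}$ as soon as $2N > d$.

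The main obstacle is the small-$s$ estimate for non-integer $m$: one must carefully track the $s$-dependence of multiple $\eta$-derivatives of $(1+|\eta|^2/s)^{m/2}$ to verify $\|\Delta_\eta^N \widehat{M_{m,s}}\|_1 \lesssim s^{-m/2}$ uniformly in $N$. This is transparent when $m = 2k$ is an even integer, where $(1+|\eta|^2/s)^k$ expands as the polynomial $\sum_{j=0}^k \binom{k}{j} s^{-j} |\eta|^{2j}$ and each term contributes a Hermite-type function with $L^1$-norm of order $s^{-j} \leq s^{-k} = s^{-m/2}$; for general fractional $m \geq 0$, either a complex-interpolation argument between neighboring integer exponents, or a direct region-split analysis on $\{|\eta| \lesssim \sqrt s\}$ versus $\{|\eta| \gtrsim \sqrt s\}$ where the symbol $(1+|\eta|^2/s)^{m/2}$ has genuinely different sizes, closes the estimate and completes the proof.
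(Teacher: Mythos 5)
The paper does not actually prove this lemma; it only cites \cite[Ch.~2, Eq.~2.15]{WangBaoxiang2011Hamf}. Your architecture is exactly the standard one behind that reference: Young's inequality with the Gaussian for $m=0$, the semigroup splitting $e^{t\Delta}=e^{(t/2)\Delta}e^{(t/2)\Delta}$, reduction to the kernel bound $\|L_{m,s}\|_1\lesssim 1+s^{-m/2}$, the factorization through the fixed Schwartz function $L_{m,1}$ for $s\ge 1$, and parabolic rescaling for $s<1$. All of that is sound.

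The one genuine gap is the step you yourself flag as the ``main obstacle,'' and it is worse than a bookkeeping issue: the claim $\|\Delta_\eta^N\widehat{M_{m,s}}\|_{1}\lesssim s^{-m/2}$ \emph{uniformly in $N$} is false. Near $\eta=0$ a derivative of order $|\alpha|$ of $(1+|\eta|^2/s)^{m/2}$ costs $s^{-|\alpha|/2}(1+|\eta|^2/s)^{(m-|\alpha|)/2}$, so once $|\alpha|=2N>m$ the ball $\{|\eta|\lesssim\sqrt s\}$ contributes $\sim s^{-N+d/2}$ to the $L^1$ norm, which exceeds $s^{-m/2}$ precisely when $2N>m+d$; since your argument also requires $2N>d$ for the integrability of $(1+|y|)^{-2N}$, there is no admissible integer $N$ when $m<2$ (for $d$ even; $m<1$ for $d$ odd), and even at the borderline $2N=m+d$ the annulus $\sqrt s\lesssim|\eta|\lesssim 1$ produces a $\log(1/s)$ loss. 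Note that the case the paper actually uses, $m=1$, $d=3$, forces the borderline choice $2N=4$ and hence sits exactly in this logarithmic regime, so the gap is not confined to exotic parameters. Your proposed region split $\{|\eta|\lesssim\sqrt s\}$ versus $\{|\eta|\gtrsim\sqrt s\}$ does repair this if implemented with smooth dyadic cutoffs (on the piece $|\xi|\sim 2^j$ the symbol $(1+|\xi|^2)^{m/2}e^{-s|\xi|^2}$ has kernel of $L^1$ norm $\lesssim 2^{jm}e^{-cs4^j}$, and summing gives $s^{-m/2}$ for $m>0$), but the complex-interpolation alternative is delicate because imaginary powers $(1-\Delta)^{iy/2}$ are not bounded on $L^1$ or $L^\infty$. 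A cleaner way to close the step for all $m\ge 0$ is the subordination identity
\begin{align*}
(1-\Delta)^{m/2}e^{s\Delta}=\frac{e^{s}}{\Gamma(k-\tfrac m2)}\int_0^\infty u^{\,k-\frac m2-1}\,(1-\Delta)^k e^{-(s+u)(1-\Delta)}\,\dd u ,\qquad k\in\mb{N},\ k>\tfrac m2 ,
\end{align*}
which reduces everything to integer powers of $(1-\Delta)$ acting on the explicit Gaussian, where $\|(1-\Delta)^k K_\tau\|_1\lesssim 1+\tau^{-k}$ is elementary; the $u$-integral then evaluates to $O(1+s^{-m/2})$ directly.
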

\begin{proof}
This is a standard result and a proof can be found in \cite[Chapter 2, Equation 2.15]{WangBaoxiang2011Hamf}.
\end{proof}
  
\begin{lem}[Energy Estimate for the Wave Equation] \label{lem:energy_for_A_general}
Let $k \in \{0, 1\}$ and $\I = [0,T]$ for some $T > 0$. Then for $m \in \R$, $(\v{a}_0 , \dot{\v{a}}_0) \in H^m (\R^3 ; \R^3) \times H^{m - 1} (\R^3 ; \R^3)$ and $\v{F} \in L^{1}_{\I} H^{m -1} (\R^3 ; \R^3) $ the function
\begin{align*}
\v{K} (t) = \dot{\s} (t) \v{a}_0 + \s (t) \dot{\v{a}}_0 + \frac{1}{4 \alpha^2} \int_0^t \s (t - \tau) \v{F} (\tau ) \dd \tau ,
\end{align*}
where $\dot{\s} (t) = \cos{(2 \alpha \sqrt{- \Delta} t)} : H^m \rightarrow H^m$ and $\s (t) = \frac{\sin{(2 \alpha \sqrt{- \Delta} t)}}{2 \alpha \sqrt{-\Delta}} : H^{m-1} \rightarrow H^m$ are defined a Fourier multipliers for $t \in \R$, is contained in $C_{\I} H^m (\R^3 ; \R^3) \cap C^1_{\I} H^{m - 1} (\R^3 ; \R^3)$ and satisfies the energy estimate
\begin{align*}
\max_{k \in \{0,1\}} \| \partial_t^k \v{K} \|_{\infty ; m - k , 2} \lesssim \| (\v{a}_0 , \dot{\v{a}}_0) \|_{m,2 \oplus m-1 , 2} + \| \v{F} \|_{1 ; m - 1 , 2} .
\end{align*}
\end{lem}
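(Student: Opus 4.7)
The plan is to exploit the fact that $\dot\s(t)$ and $\s(t)$ are Fourier multipliers with explicit symbols $\cos(2\alpha|\xi|t)$ and $\sin(2\alpha|\xi|t)/(2\alpha|\xi|)$, so the entire argument reduces via Plancherel's theorem to pointwise estimates on these symbols. The multiplier $\dot\s(t)$ is bounded by $1$, hence $\dot\s(t):H^{s}\to H^{s}$ with norm $\le 1$ for every $s\in\R$. For $\s(t)$ one combines $|\sin x|\le|x|$ at low frequency with $|\sin x|\le 1$ at high frequency to show that
\[
(1+|\xi|^{2})^{1/2}\left|\frac{\sin(2\alpha|\xi|t)}{2\alpha|\xi|}\right|\;\le\;C(\alpha,T),\qquad t\in[0,T],
\]
which is what allows $\s(t)$ to gain exactly one derivative, i.e.\ to map $H^{m-1}\to H^{m}$ boundedly and uniformly on $[0,T]$. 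The two time derivatives needed, namely $\partial_{t}\s(t)=\dot\s(t)$ and $\partial_{t}\dot\s(t)$ (whose symbol $-2\alpha|\xi|\sin(2\alpha|\xi|t)$ is dominated by $2\alpha|\xi|\lesssim(1+|\xi|^{2})^{1/2}$), are handled the same way, giving $\partial_{t}\s(t):H^{m-1}\to H^{m-1}$ and $\partial_{t}\dot\s(t):H^{m}\to H^{m-1}$ with uniform operator norms on $[0,T]$.

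With these symbol bounds in hand, Plancherel immediately controls the two homogeneous pieces of $\v{K}(t)$ and $\partial_{t}\v{K}(t)$ by $\|(\v{a}_{0},\dot{\v{a}}_{0})\|_{m,2\oplus m-1,2}$ uniformly for $t\in\I$. For the Duhamel term I plan to apply Minkowski's integral inequality to pull the $H^{m}$-norm inside the $\dd\tau$-integral and then use $\|\s(t-\tau)\|_{H^{m-1}\to H^{m}}\le C(\alpha,T)$ to dominate the result by a multiple of $\|\v{F}\|_{1;m-1,2}$. A key observation for the time derivative is that $\s(0)=0$, so Leibniz differentiation of the Duhamel integral produces no boundary term, leaving only $\int_{0}^{t}\dot\s(t-\tau)\v{F}(\tau)\,\dd\tau$, which is again estimated by Minkowski, now in $H^{m-1}$. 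Continuity of all these pieces in their respective Sobolev norms will be a routine application of the dominated convergence theorem on the Fourier side, using the bounded symbols as dominating functions together with the continuity of $t\mapsto\cos(2\alpha|\xi|t)$ and $t\mapsto\sin(2\alpha|\xi|t)/(2\alpha|\xi|)$ at each fixed $\xi$; for the Duhamel piece, continuity will follow from absolute continuity of the Bochner integral combined with the uniform operator-norm bound just established.

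There is no serious obstacle in this lemma; the only mild subtlety is the low-versus-high-frequency split that verifies $\s(t)$ genuinely gains one full derivative uniformly on compact time intervals, which is exactly the structural point that allows $\v{F}\in L^{1}_{\I}H^{m-1}$ (rather than $L^{1}_{\I}H^{m}$) to suffice on the right-hand side of the energy estimate.
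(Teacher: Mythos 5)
Your argument is correct, and it is essentially the standard proof of the result the paper simply cites (Theorem 2.6 of Tao's dispersive PDE notes) rather than proving: the paper's ``proof'' is a one-line reference, so there is nothing in the text to compare against beyond noting that your symbol-side Plancherel argument is exactly how that cited theorem is established. You correctly isolate the one genuine point, namely the low/high frequency split showing $(1+|\xi|^2)^{1/2}\,|\sin(2\alpha|\xi|t)|/(2\alpha|\xi|)\le C(\alpha,T)$ on $[0,T]$, which is why the constant in the inhomogeneous-Sobolev estimate must depend on $T$ and why $\v{F}\in L^1_{\I}H^{m-1}$ suffices. One small technical remark: to make the Leibniz differentiation of the Duhamel term rigorous with $\v{F}$ merely $L^1$ in time, it is cleanest to bound the boundary contribution by $\tfrac1h\int_t^{t+h}\|\s(t+h-\tau)\v{F}(\tau)\|_{m-1,2}\,\dd\tau\le\int_t^{t+h}\|\v{F}(\tau)\|_{m-1,2}\,\dd\tau\to 0$, using the symbol bound $|\sin(2\alpha|\xi|s)|/(2\alpha|\xi|)\le|s|$, rather than invoking ``$\s(0)=0$'' formally; this is a routine refinement of what you wrote, not a gap.
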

\begin{proof}
This lemma is stated as a special case of \cite[Theorem 2.6]{tao-dispersive2006}. For original proofs, see \cite{strichartz1977, GINIBRE199550}.
\end{proof}


\section{Local Well-posedness of the $\varepsilon$-Modified System: The Contraction Mapping Argument}\label{sec:contraction}

\subsection{Technical Estimates}\label{sub:estimates}

In this section we will derive several estimates for the right hand side of (\ref{eq:MBMP1_epsilon}) and (\ref{eq:MBMP2_epsilon}) in various Sobolev spaces. To obtain such estimates we will repeatedly make use of Lemma \ref{lem:Kato-Ponce} and \ref{lem:Heat-Kernel}. 

\begin{lem}[Estimates for the Pauli Term] \label{lem:Estimates-Pauli}
Let $m \in [1 , \infty)$ and $N \geq 1$. For all $(\phi , \v{A}) \in [H^m (\R^{3N}) ]^{2^N} \times H^m (\R^3 ; \R^3)$, with $\diver{\v{A}} = 0$, and for each $j \in \{ 1, \cdots , N \}$, the operator $\cL_j (\v{A})$ given by 
\begin{align}\label{def:D_j}
\cL_j (\v{A}) = 2 \v{A}_j \cdot \v{p}_j + | \v{A}_j |^2 + \gvsig_j \cdot \v{B}_j ,
\end{align}
where $\v{B} \equiv \curl{\v{A}}$, satisfies the estimates
\begin{align} \label{eq:Estimate-Pauli-1}
\| \cL_j (\v{A}) \phi \|_{m-1,\frac{3}{2}} \lesssim (1 + \|\v{A}\|_{m,2}) \| \v{A} \|_{m,2} \| \phi \|_{m,2} ,
\end{align}
and 
\begin{align} \label{eq:Estimate-Pauli-2}
\| e^{t \Delta} \cL_j ( \v{A} ) \phi \|_{m,2} \leq C_1 t^{- \frac{1}{4}} \left[ 1 + t^{-\frac{1}{2}} \right] \left( 1 + \| \v{A} \|_{m,2} \right) \| \v{A} \|_{m,2} \| \phi \|_{m,2} ,
\end{align}
for all $t > 0$, where $C_1$ is a constant depending on $m$ and $N$, but independent of $j$, $t$, $\phi$, and $\v{A}$. Furthermore, for $(\phi , \v{A}) , (\phi' , \v{A}') \in [H^1 (\R^{3N})]^{2^N} \times H^1 (\R^3 ; \R^3)$, with $\diver{\v{A}} = \diver{\v{A}'} = 0$, and each $j \in \{1 , \cdots , N\}$, we have
\begin{align*}
& \| e^{t \Delta} \left[ \cL_j ( \v{A} ) \phi - \cL_j ( \v{A}' ) \phi' \right] \|_{1,2}  \\
& \hspace{0.1cm} \leq C_2 t^{- \frac{1}{4}} \left[ 1 + t^{-\frac{1}{2}} \right] \left( ( 1 + \| \v{A} \|_{1,2} + \| \v{A}' \|_{1,2} ) \| \phi' \|_{1,2} + (1  + \| \v{A} \|_{1,2} ) \| \v{A} \|_{1,2}  \right) \\
& \hspace{0.5cm} \times \max{\{ \| \phi - \phi' \|_{1,2} \vee \| \v{A} - \v{A}' \|_{1,2} \}} , \numberthis \label{eq:Estimate-Pauli-3}
\end{align*}
for all $t > 0$, where $C_2$ is a constant depending on $m$ and $N$, but independent of $j$, $t$, $\phi$, $\phi'$, $\v{A}$, and $\v{A}'$.
\end{lem}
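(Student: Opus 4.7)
The plan is to establish the three estimates sequentially, treating \eqref{eq:Estimate-Pauli-1} as the basic building block from which \eqref{eq:Estimate-Pauli-2} follows by the dispersive heat-kernel bound and \eqref{eq:Estimate-Pauli-3} by a standard linearization argument.

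For \eqref{eq:Estimate-Pauli-1}, I would decompose $\cL_j(\v{A})\phi$ into its three natural summands according to \eqref{def:D_j} and estimate each in $W^{m-1,3/2}$. For the paramagnetic piece $\v{A}_j\cdot\v{p}_j\phi$, apply the Kato-Ponce inequality (Lemma \ref{lem:Kato-Ponce}) to distribute $(1-\Delta)^{(m-1)/2}$ between $\v{A}$ and $\v{p}_j\phi$ with conjugate exponents tuned so that, after invoking the Sobolev embedding $H^1(\R^3)\hookrightarrow L^6(\R^3)$ in the $\v{x}_j$-variable together with a Gagliardo-Nirenberg-type interpolation controlling the cross-variable dependence, both resulting factors are bounded by $\|\v{A}\|_{m,2}$ and $\|\phi\|_{m,2}$. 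For the diamagnetic piece $|\v{A}_j|^2\phi$ apply Kato-Ponce twice, first to $\v{A}\cdot\v{A}$ and then to the product with $\phi$; this is precisely what produces the quadratic prefactor $(1+\|\v{A}\|_{m,2})\|\v{A}\|_{m,2}$ on the right-hand side of \eqref{eq:Estimate-Pauli-1}. For the spin-magnetic coupling $\gvsig_j\cdot\v{B}_j\phi$, exploit the Coulomb-gauge identity $\|\v{B}\|_{m-1,2}\lesssim\|\v{A}\|_{m,2}$ and then apply a further Kato-Ponce inequality with exponents chosen as in the paramagnetic case.

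For \eqref{eq:Estimate-Pauli-2}, combine \eqref{eq:Estimate-Pauli-1} with Lemma \ref{lem:Heat-Kernel}. Factoring $(1-\Delta)^{m/2}e^{t\Delta}=e^{(t/2)\Delta}(1-\Delta)^{m/2}e^{(t/2)\Delta}$ and applying the heat-kernel dispersive estimate with $r=3/2$, $p=2$ passes from the $W^{m-1,3/2}$-norm controlled above to the desired $H^m$-bound on $e^{t\Delta}\cL_j(\v{A})\phi$. The factor $t^{-1/4}$ arises from the Sobolev-type integrability gain $L^{3/2}\to L^2$ in three dimensions, while the bracket $(1+t^{-1/2})$ reflects the additional half-derivative gain needed to raise regularity from $m-1$ to $m$.

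For \eqref{eq:Estimate-Pauli-3}, use the algebraic decomposition
\[
\cL_j(\v{A})\phi - \cL_j(\v{A}')\phi' = \cL_j(\v{A})(\phi-\phi') + \bigl(\cL_j(\v{A})-\cL_j(\v{A}')\bigr)\phi',
\]
together with the elementary factorization
\[
\cL_j(\v{A})-\cL_j(\v{A}') = 2(\v{A}-\v{A}')_j\cdot\v{p}_j + (\v{A}+\v{A}')_j\cdot(\v{A}-\v{A}')_j + \gvsig_j\cdot\curl(\v{A}-\v{A}')_j,
\]
obtained from $|\v{A}|^2-|\v{A}'|^2=(\v{A}+\v{A}')\cdot(\v{A}-\v{A}')$. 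Applying the $m=1$ versions of the bounds from the first two steps to each piece, with the symmetric sum $\v{A}+\v{A}'$ producing the $\|\v{A}\|_{1,2}+\|\v{A}'\|_{1,2}$ contribution, gives the stated Lipschitz estimate.

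The main obstacle is the bookkeeping in the first step: selecting Kato-Ponce exponents that simultaneously respect the mismatched dimensions of $\v{A}$ (living on $\R^3$) and $\phi$ (living on $\R^{3N}$), while keeping all intermediate norms controllable by $\|\v{A}\|_{m,2}$ and $\|\phi\|_{m,2}$ across the range $m\in[1,\infty)$. Once this splitting, concretely of the type $\|(1-\Delta)^{(m-1)/2}(\v{A}g)\|_{3/2}\lesssim \|\v{A}\|_{m-1,6}\|g\|_{2}+\|\v{A}\|_{6}\|g\|_{m-1,2}$, is chosen and verified via Sobolev embedding, the heat-kernel step and the difference estimate are essentially routine consequences.
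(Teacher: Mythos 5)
Your proposal is correct and follows essentially the same route as the paper: Kato--Ponce together with the Sobolev embedding $H^1(\R^3)\hookrightarrow L^6(\R^3)$ for (\ref{eq:Estimate-Pauli-1}), composition with the $L^{3/2}\to H^1$ heat-kernel smoothing of Lemma \ref{lem:Heat-Kernel} for (\ref{eq:Estimate-Pauli-2}), and the bilinear splitting $\cL_j(\v{A})(\phi-\phi')+(\cL_j(\v{A})-\cL_j(\v{A}'))\phi'$ estimated in $L^{3/2}$ for (\ref{eq:Estimate-Pauli-3}). The one place where the paper is more explicit than you are is precisely the cross-variable bookkeeping you flag as the main obstacle: rather than factoring $(1-\Delta)^{m/2}e^{t\Delta}$ on all of $\R^{3N}$, it bounds $\|\cdot\|_{m,2}\lesssim\sum_{k}\|(1-\Delta_{\v{x}_k})^{m/2}\,\cdot\,\|_2$ and applies the dispersive estimate only through the single three-dimensional semigroup $e^{t\Delta_{\v{x}_k}}$ (splitting $k=j$ from $k\neq j$, where the fractional derivative commutes past $\cL_j(\v{A})$), which is what keeps the singularity at $t^{-1/4}(1+t^{-1/2})$ instead of the $t^{-N/4}$ loss one would incur by smoothing in all $3N$ dimensions.
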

\begin{proof}
To show (\ref{eq:Estimate-Pauli-1}) it suffices to consider the case $N = 1$, as the general case follows in a similar fashion. We use Lemma \ref{lem:Kato-Ponce} and the Sobolev inequality $H^1 (\R^3) \subset L^r (\R^3)$, $2 \leq r \leq 6$, to find
\begin{align*}
\| \cL (\v{A}) f \|_{m-1 ,\frac{3}{2}} & \lesssim \| (1 - \Delta)^{\frac{m - 1}{2}} \v{A} \|_{6} \| \v{p} f \|_{2} + \| \v{A} \|_{6} \| (1 - \Delta)^{\frac{m - 1}{2}} \v{p} f \|_{2} \\
& \hspace{0.2cm} + \| (1 - \Delta)^{\frac{m - 1}{2}} \v{A} \|_{6} \| \v{A} f \|_{2}  + \| \v{A} \|_{6} \| (1 - \Delta)^{\frac{m - 1}{2}} (\v{A} f) \|_{2} \\
& \hspace{0.2cm} + \| (1 - \Delta)^{\frac{m - 1}{2}} \v{B} \|_{2} \| f \|_{6} + \| \v{B} \|_{2} \| (1 - \Delta)^{\frac{m - 1}{2}} f \|_{6} \\
& \lesssim \| \v{A} \|_{m,2} \| f \|_{1,2} + \| \v{A} \|_{1,2} \| f \|_{m,2} + \| \v{A} \|_{m,2} \| \v{A} \|_{6} \| f \|_{3}  \\
& \hspace{0.2cm} + \| \v{A} \|_{m,2} \| f \|_{1,2} + \| \v{A} \|_{1,2} \| f \|_{m,2} \\
& \hspace{0.2cm} + \| \v{A} \|_{1,2} \left( \| (1 - \Delta)^{\frac{m-1}{2}} \v{A} \|_{6} \| f \|_{3}  + \| \v{A} \|_{6} \| (1 - \Delta)^{\frac{m-1}{2}} f \|_{3} \right) \\
& \lesssim (1 + \|\v{A}\|_{m,2}) \| \v{A} \|_{m,2} \| f \|_{m,2} . 
\end{align*}
This proves (\ref{eq:Estimate-Pauli-1}). 

To prove (\ref{eq:Estimate-Pauli-2}), fix $j \in \{1 , \cdots , N\}$, and note that
\begin{align}\label{eq:Estimate-Pauli-4}
\| e^{t \Delta} \cL_j ( \v{A} ) \phi \|_{m,2} \lesssim \sum_{k = 1}^N \| (1 - \Delta_{\v{x}_k})^{\frac{m}{2}} e^{t \Delta} \cL_j (\v{A}) \phi \|_2 .
\end{align}
We separate into two cases: (a) $k \neq j$ and (b) $k = j$. For case (a) we use Lemma \ref{lem:Heat-Kernel} and (\ref{eq:Estimate-Pauli-1}) to find
\begin{align*}
& \| (1 - \Delta_{\v{x}_k})^{\frac{m}{2}} e^{t \Delta} \cL_j (\v{A}) \phi \|_2 \\
& \hspace{0.1cm} \leq \| (1 - \Delta_{\v{x}_k})^{\frac{1}{2}} e^{t \Delta_{\v{x}_k}} \cL_j (\v{A}) (1 - \Delta_{\v{x}_k})^{\frac{m-1}{2}} \phi \|_2 \\
& \hspace{0.1cm} \lesssim t^{- \frac{1}{4}} \left[ 1 + t^{- \frac{1}{2}} \right] \|  \cL_j (\v{A}) (1 - \Delta_{\v{x}_k})^{\frac{m-1}{2}} \phi \|_{\frac{3}{2}} \\
& \hspace{0.1cm} \lesssim t^{- \frac{1}{4}} \left[ 1 + t^{- \frac{1}{2}} \right]  (1 + \|\v{A}\|_{1,2}) \| \v{A} \|_{1,2} \| (1 - \Delta_{\v{x}_k})^{\frac{m-1}{2}} \phi \|_{1,2} \\
& \hspace{0.1cm} \lesssim t^{- \frac{1}{4}} \left[ 1 + t^{- \frac{1}{2}} \right]  (1 + \|\v{A}\|_{1,2}) \| \v{A} \|_{1,2} \| \phi \|_{m,2} . \numberthis \label{eq:Estimate-Pauli-5}
\end{align*}  
For case (b) we use Lemmas \ref{lem:Heat-Kernel} and \ref{lem:Kato-Ponce}, and the estimate (\ref{eq:Estimate-Pauli-1}), to find
\begin{align*}
\| (1 - \Delta_{\v{x}_j})^{\frac{m}{2}} e^{t \Delta} \cL_j (\v{A}) \phi \|_2 & = \| (1 - \Delta_{\v{x}_j})^{\frac{1}{2}} e^{t \Delta_{\v{x}_j}} (1 - \Delta_{\v{x}_j})^{\frac{m-1}{2}} ( \cL_j (\v{A}) \phi ) \|_2 \\
& \lesssim t^{- \frac{1}{4}} \left[ 1 + t^{- \frac{1}{2}} \right] \| (1 - \Delta_{\v{x}_j})^{\frac{m-1}{2}} ( \cL_j (\v{A}) \phi ) \|_{\frac{3}{2}} \\
& \lesssim t^{- \frac{1}{4}} \left[ 1 + t^{- \frac{1}{2}} \right] (1 + \|\v{A}\|_{m,2}) \| \v{A} \|_{m,2} \| \phi \|_{m,2} . \numberthis \label{eq:Estimate-Pauli-6}
\end{align*}  
Combining (\ref{eq:Estimate-Pauli-4}) through (\ref{eq:Estimate-Pauli-6}) we arrive at (\ref{eq:Estimate-Pauli-2}). 

To prove (\ref{eq:Estimate-Pauli-3}) we first write
\begin{align*}
\cL_j ( \v{A} ) \phi - \cL_j ( \v{A}' ) \phi' & = 2 \v{A}_j \cdot \v{p}_j (\phi - \phi') + |\v{A}_j|^2 (\phi - \phi' ) + \gvsig_j \cdot \v{B}_j (\phi - \phi' )  \\
& + 2 ( \v{A}_j - \v{A}'_j ) \cdot \v{p}_j \phi' + (|\v{A}|^2_j - |\v{A}'|^2_j) \phi' + \gvsig_j \cdot (\v{B}_j - \v{B}'_j) \phi' ,
\end{align*}
and then use H\"{o}lder's inequality and the Sobolev inequality $H^1 (\R^3) \subset L^r (\R^3)$, $2 \leq r \leq 6$, to find 
\begin{align*}
& \| \cL_j ( \v{A} ) \phi - \cL_j ( \v{A}' ) \phi' \|_{\frac{3}{2}} \\
& \hspace{0.1cm} \lesssim \| \v{A}_j \cdot \v{p}_j (\phi - \phi') \|_{\frac{3}{2}} + \| |\v{A}_j|^2 (\phi - \phi' ) \|_{\frac{3}{2}} + \| \v{B}_j (\phi - \phi' ) \|_{\frac{3}{2}}  \\
& \hspace{1cm} + \| ( \v{A}_j - \v{A}'_j ) \cdot \v{p}_j \phi' \|_{\frac{3}{2}} + \| (|\v{A}|^2_j - |\v{A}'|^2_j) \phi' \|_{\frac{3}{2}}  + \| (\v{B}_j - \v{B}'_j) \phi' \|_{\frac{3}{2}}  \\
& \hspace{0.1cm} \lesssim \| \v{A} \|_6 \| \phi - \phi' \|_{1, 2} + \| \v{A} \|_4^2 \| \phi - \phi' \|_{6}   + \| \v{B} \|_2 \| (\phi - \phi' ) \|_{6} + \| \v{A} - \v{A}' \|_6 \| \phi' \|_{1,2} \\
& \hspace{1cm} + \| \v{A} - \v{A}' \|_3 ( \| \v{A} \|_6 + \| \v{A}' \|_6 ) \| \phi' \|_6 + \| \v{B} - \v{B}' \|_2 \| \phi' \|_{6} \\
& \hspace{0.1cm} \lesssim \left\lbrace (1  + \| \v{A} \|_{1,2} ) \| \v{A} \|_{1,2} + ( 1 + \| \v{A} \|_{1,2} + \| \v{A}' \|_{1,2} ) \| \phi' \|_{1,2} \right\rbrace \\
& \hspace{1cm} \times \max{\{ \|\phi - \phi' \|_{1,2} , \| \v{A} - \v{A}' \|_{1,2} \}}  \numberthis \label{eq:Estimate-Pauli-7}
\end{align*}
Lemma (\ref{lem:Heat-Kernel}) gives
\begin{align*}
\| e^{t \Delta} \left[ \cL_j ( \v{A} ) \phi - \cL_j ( \v{A}' ) \phi' \right] \|_{1,2} \lesssim  t^{- \frac{1}{4}} \left[ 1 + t^{-\frac{1}{2}} \right] \|  \cL_j ( \v{A} ) \phi - \cL_j ( \v{A}' ) \phi' \|_{\frac{3}{2}}  ,
\end{align*}
which, together with (\ref{eq:Estimate-Pauli-7}) allows us to conclude (\ref{eq:Estimate-Pauli-3}).
\end{proof}


\begin{lem}[Estimates for the Coulomb Term] \label{lem:Estimates-Coulomb}
Fix $m \in [1,2]$. Then, for all $\psi \in H^m (\R^3 ; \C)$, we have 
\begin{align}\label{eq:Estimate-Coulomb-appen_1}
\| | \cdot |^{-1} \psi \|_{\frac{3}{2}} \lesssim \| \psi \|_{1,2}  
\end{align} 
and
\begin{align}\label{eq:Estimate-Coulomb-appen_2}
\| | \cdot |^{-1} \psi \|_{m-1 , \frac{5}{4}} \lesssim \| \psi \|_{m,2} .
\end{align} 
Moreover, for all $\psi \in H^m (\R^6 ; \C)$, we have
\begin{align}\label{eq:Estimate-Coulomb-appen_3}
\int_{\R^3} \left( \int_{\R^3} \left| \frac{ \psi (\v{x}_1 , \v{x}_2) }{| \v{x}_1 - \v{x}_2 |} \right|^{\frac{3}{2}} \dd \v{x}_1 \right)^{\frac{4}{3}} \dd \v{x}_2 \lesssim \| (1 - \Delta_{\v{x}_1})^{\frac{1}{2}} \psi \|_{2}^2 
\end{align}
and
\begin{align}\label{eq:Estimate-Coulomb-appen_4}
\int_{\R^3} \left( \int_{\R^3} \left| (1 - \Delta_{\v{x}_i})^{\frac{m-1}{2}} \frac{ \psi (\v{x}_1 , \v{x}_2) }{| \v{x}_1 - \v{x}_2 |} \right|^{\frac{5}{4}} \dd \v{x}_1 \right)^{\frac{8}{5}} \dd \v{x}_2 \lesssim \| (1 - \Delta_{\v{x}_1})^{\frac{m}{2}} \psi \|_{2}^2  .
\end{align}
Consequently, if $N, K \geq 1$, and $\Z$, $\ul{\v{R}}$ are defined as in \S\ref{sec:intro}, then, for all $\phi \in H^m (\R^{3N} ; \C)$, the function $V (\ul{\v{R}} , \Z) \phi$, where $V(\ul{\v{R}} , \Z)$ is given by (\ref{def:totelectrostatpot}), satisfies the estimate
\begin{align} \label{eq:Estimate-Coulomb}
\| e^{t \Delta} V (\ul{\v{R}} , \Z) \phi \|_{m , 2} \leq C \left\lbrace 1 +  \left(1 + t^{- \frac{1}{2} }\right) \left(t^{- \frac{9}{20}} + t^{-\frac{1}{4}} \right) \right\rbrace \| \phi \|_{m,2} , 
\end{align}
for all $t > 0$, where $C$ is a constant depending on $m$, $N$, $\ul{\v{R}}$, and $\Z$, but independent of $t$ and $\phi$.  
\end{lem}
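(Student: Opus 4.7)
The plan is to first establish the four Sobolev--Lebesgue inequalities (\ref{eq:Estimate-Coulomb-appen_1})--(\ref{eq:Estimate-Coulomb-appen_4}), which quantify how the Coulomb singularity is absorbed by Sobolev-based norms, and then combine them with the dispersive bounds of Lemma \ref{lem:Heat-Kernel} to deduce the master estimate (\ref{eq:Estimate-Coulomb}). For (\ref{eq:Estimate-Coulomb-appen_1}) I would split $|x|^{-1}=|x|^{-1}\chi_{|x|\le 1}+|x|^{-1}\chi_{|x|>1}$ into a locally singular piece (in $L^q(\R^3)$ for every $q<3$) and a far-field tail (in $L^q$ for every $q>3$), and apply H\"{o}lder's inequality with exponent pairs straddling $3$ (for instance $(L^{12/5},L^4)$ on the local piece and $(L^4,L^{12/5})$ on the tail); each side then reduces to the 3D Sobolev embedding $H^1(\R^3)\hookrightarrow L^p(\R^3)$ for $p\in[2,6]$. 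For (\ref{eq:Estimate-Coulomb-appen_2}) the same splitting works, but the $m-1\in[0,1]$ derivatives must be distributed between $|x|^{-1}$ and $\psi$ via the Kato--Ponce product rule (Lemma \ref{lem:Kato-Ponce}); the constraint $m\le 2$ enters precisely because $(1-\Delta)^{(m-1)/2}|x|^{-1}\chi_{|x|\le 1}$ must remain in an admissible Lebesgue class for the resulting H\"{o}lder step to close. The two-body bounds (\ref{eq:Estimate-Coulomb-appen_3}) and (\ref{eq:Estimate-Coulomb-appen_4}) follow from their one-body counterparts by fixing $\v{x}_2$, translating $\v{x}_1\mapsto \v{x}_1+\v{x}_2$, applying the one-body estimate in the $\v{x}_1$-slice, and integrating over $\v{x}_2$; the outer exponents $4/3$ and $8/5$ are calibrated so that the pointwise-in-$\v{x}_2$ Sobolev bound on the slice reassembles into the full $L^2$-based Sobolev norm on $\R^6$ after Minkowski.

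To obtain (\ref{eq:Estimate-Coulomb}) I would decompose $V(\ul{\v{R}},\Z)\phi$ into its nuclear-nuclear (constant), electron-nuclei, and electron-electron components. The constant piece is absorbed by the contractivity of the heat semigroup on $H^m$ and accounts for the leading ``$1$'' inside the bracket of (\ref{eq:Estimate-Coulomb}). For the two singular pieces I mimic the reduction used in the proof of Lemma \ref{lem:Estimates-Pauli}: exploit $e^{t\Delta}=\prod_k e^{t\Delta_{\v{x}_k}}$ together with the Fourier-side inequality $\|(1-\Delta)^{m/2}f\|_2\lesssim\sum_k\|(1-\Delta_{\v{x}_k})^{m/2}f\|_2$ on $H^m(\R^{3N})$ to reduce matters to single-coordinate heat-kernel estimates. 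Writing $(1-\Delta_{\v{x}_k})^{m/2}e^{t\Delta_{\v{x}_k}}=(1-\Delta_{\v{x}_k})^{1/2}e^{t\Delta_{\v{x}_k}}(1-\Delta_{\v{x}_k})^{(m-1)/2}$, applying Lemma \ref{lem:Heat-Kernel} in one coordinate with exponent $r=3/2$, and then invoking (\ref{eq:Estimate-Coulomb-appen_1}) or (\ref{eq:Estimate-Coulomb-appen_3}) on the resulting $L^{3/2}$-norm produces the factor $(1+t^{-1/2})t^{-1/4}$, while first pushing the $(m-1)$-derivatives inside and then applying the lemma with $r=5/4$ together with (\ref{eq:Estimate-Coulomb-appen_2}) or (\ref{eq:Estimate-Coulomb-appen_4}) produces the factor $(1+t^{-1/2})t^{-9/20}$. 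Summing over the $N$ coordinates and over the finitely many electron-electron and electron-nuclei pairs then yields (\ref{eq:Estimate-Coulomb}).

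The principal obstacle I expect is the electron-electron term: its two-coordinate singularity $|\v{x}_i-\v{x}_j|^{-1}$ forces the mixed-norm estimates (\ref{eq:Estimate-Coulomb-appen_3})--(\ref{eq:Estimate-Coulomb-appen_4}), and commuting fractional derivatives through this factor (necessary for the $t^{-9/20}$ regime) is where the ceiling $m\le 2$ must be respected, since for $m>2$ the derivative-weighted Coulomb factor no longer lies in a Lebesgue space compatible with the preceding H\"{o}lder/Kato--Ponce step. The slice-and-integrate argument for (\ref{eq:Estimate-Coulomb-appen_3})--(\ref{eq:Estimate-Coulomb-appen_4}) is the other place where one must be careful, as the outer integrability exponent must be matched exactly to the Sobolev exponent delivered by the inner one-body estimate.
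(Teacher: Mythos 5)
Your proposal is correct and follows essentially the same route as the paper: split $|\cdot|^{-1}$ over the unit ball and its complement with H\"{o}lder/Sobolev for (\ref{eq:Estimate-Coulomb-appen_1})--(\ref{eq:Estimate-Coulomb-appen_2}) (the paper uses Kato--Ponce at $m=2$ with $(-\Delta)^{1/2}|x|^{-1}=C|x|^{-2}$ and interpolates, rather than treating fractional $m-1$ directly), translate-and-slice for the two-body bounds, and then the coordinatewise heat-kernel reduction with $r=3/2$ and $r=5/4$ producing exactly the $t^{-1/4}$ and $t^{-9/20}$ factors. The only deviations are cosmetic choices of H\"{o}lder exponents.
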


\begin{proof}
Let $B_1$ denote the unit ball in $\R^3$, and $B_1^c = \R^3 \backslash B_1$. Using H\"{o}lder's inequality we find
\begin{align*}
\| |\cdot|^{-1} \psi \|_{\frac{3}{2}}^{\frac{3}{2}} & = \int_{B_1} \frac{|\psi (\v{x})|^{\frac{3}{2}}}{|\v{x}|^{\frac{3}{2}}} \dd \v{x} +  \int_{B_1^c} \frac{|\psi (\v{x})|^{\frac{3}{2}}}{|\v{x}|^{\frac{3}{2}}} \dd \v{x} \\
& \leq \left( \int_{B_1} |\v{x}|^{ - 2 } d \v{x} \right)^{\frac{3}{4}} \| \psi \|_{L^6 (B_1)}^{\frac{3}{2}} + \left( \int_{B_1^c} |\v{x}|^{- 6} \right)^{\frac{1}{4}} \| \psi \|_{L^2 (B_1^c)}^{\frac{3}{2}} \\
& \lesssim \|\psi \|_{6}^{ \frac{3}{2} } + \| \psi \|_{2}^{\frac{3}{2}} . \numberthis  \label{eq:Estimate-Coulomb-appen_1_proof}
\end{align*}
The estimate (\ref{eq:Estimate-Coulomb-appen_1_proof}) and the Sobolev inequality $\| \psi \|_{6} \lesssim \| \nabla \psi \|_{2}$ imply (\ref{eq:Estimate-Coulomb-appen_1}). 

For estimate (\ref{eq:Estimate-Coulomb-appen_2}) we focus on the case $m = 2$, as the $m = 1$ case is proved in the same way as (\ref{eq:Estimate-Coulomb-appen_1}) and then general case $m \in (1,2)$ will follow from interpolation. Below we will make use of the homogeneous Sobolev space $\dot{W}^{1, \frac{5}{4}} \equiv \dot{W}^{1,\frac{5}{4}} (\R^3)$ defined through the seminorms $\| f \|_{\dot{W}^{1,\frac{5}{4}}} \equiv \| (- \Delta)^{\frac{1}{2}} f \|_{\frac{5}{4}}$. As before, we write
\begin{align}\label{eq:Estimate-Coulomb-appen_5}
\| | \cdot |^{-1} \psi \|_{\dot{W}^{1 , \frac{5}{4}} }^{\frac{5}{4}} = \| | \cdot |^{-1} \psi \|_{\dot{W}^{1 , \frac{5}{4}}(B_1) }^{\frac{5}{4}} + \| | \cdot |^{-1} \psi \|_{\dot{W}^{1 , \frac{5}{4}}( B_1^c ) }^{\frac{5}{4}} .
\end{align}
We argue, separately, that both terms on the right hand side of (\ref{eq:Estimate-Coulomb-appen_3}) are bounded by $\| \psi \|_{2,2}$. For this it will be useful to remind ourselves of the identity
\begin{align*}
(- \Delta)^{\frac{1}{2}} | \v{x} |^{-1} = C |\v{x}|^{-2} ,
\end{align*}
where $C$ is a nonessential constant. To show 
\begin{align}\label{eq:Estimate-Coulomb-appen_6}
\| | \cdot |^{-1} \psi \|_{\dot{W}^{1 , \frac{5}{4}}(B_1^c) } \lesssim \| \psi \|_{2,2} 
\end{align}
we use Lemma \ref{lem:Kato-Ponce} to find
\begin{align*}
& \|  | \cdot |^{-1} \psi \|_{\dot{W}^{1 , \frac{5}{4}}(B_1^c) } \\
& \hspace{0.1cm} \lesssim \| (-\Delta)^{\frac{1}{2}} |\cdot|^{-1} \|_{L^{\frac{10}{3}} (B_1^c)} \| \psi \|_{L^2 (B_1^c)} +  \| |\cdot|^{-1} \|_{L^{\frac{10}{3}} (B_1^c)} \| (-\Delta)^{\frac{1}{2}} \psi \|_{L^{2} (B_1^c)}  \\
& \hspace{0.1cm} \lesssim \| |\cdot|^{-2} \|_{L^{\frac{10}{3}} (B_1^c)} \| \psi \|_2 +  \| |\cdot|^{-1} \|_{L^{\frac{10}{3}} (B_1^c)} \| \psi \|_{1 , 2} . \numberthis  \label{eq:Estimate-Coulomb-appen_6.1}
\end{align*}
Since $\| |\cdot|^{-k} \|_{L^{\frac{10}{3}} (B_1^c)} < \infty$ for $k = 1,2$, (\ref{eq:Estimate-Coulomb-appen_6.1}) implies (\ref{eq:Estimate-Coulomb-appen_6}). 

Showing
\begin{align}\label{eq:Estimate-Coulomb-appen_7}
\| |\cdot|^{-1} \psi \|_{\dot{W}^{1 , \frac{5}{4}}(B_1) } \lesssim \| \psi \|_{2,2} .
\end{align}
follows in a similar fashion. Indeed, using Lemma \ref{lem:Kato-Ponce} we find
\begin{align*}
& \| |\cdot|^{-1} \psi \|_{\dot{W}^{1 , \frac{5}{4}}(B_1) } \\
& \hspace{0.1cm} \lesssim \| (-\Delta)^{\frac{1}{2}} |\cdot|^{-1} \|_{L^{\frac{5}{4}} (B_1)} \| \psi \|_{L^{\infty} (B_1)} + \| |\cdot|^{-1} \|_{L^{\frac{30}{19}} (B_1)} \| (-\Delta)^{\frac{1}{2}} \psi \|_{L^{6} (B_1)} \\
& \hspace{0.1cm} \lesssim \| |\cdot|^{-2} \|_{L^{\frac{5}{4}} (B_1)} \| \psi \|_{L^{\infty} (B_1)} + \| |\cdot|^{-1} \|_{L^{\frac{30}{19}} (B_1)} \| \Delta \psi \|_{2} . \numberthis \label{eq:Estimate-Coulomb-appen_8}
\end{align*}
Estimate (\ref{eq:Estimate-Coulomb-appen_8}), together with the Sobolev inequality $\| \psi \|_{\infty} \lesssim \| \psi \|_{2,2}$ and  the observation that $\max{ \{ \| |\cdot|^{-2} \|_{L^{\frac{5}{4}} (B_1)} , \| |\cdot|^{-1} \|_{L^{\frac{30}{19}} (B_1)} \} } < \infty$, implies (\ref{eq:Estimate-Coulomb-appen_7}). With (\ref{eq:Estimate-Coulomb-appen_5}), (\ref{eq:Estimate-Coulomb-appen_6}), and (\ref{eq:Estimate-Coulomb-appen_7}) we are able to conclude $\| |\cdot|^{-1} \psi \|_{\dot{W}^{1,\frac{5}{4}}} \lesssim \| \psi \|_{2,2}$.

Proving (\ref{eq:Estimate-Coulomb-appen_3}) is similar to showing (\ref{eq:Estimate-Coulomb-appen_1}). Indeed, using H\"{o}lder's inequality and the Sobolev inequality $\| f \|_6 \lesssim \| \nabla f \|_2$ we find
\begin{align*}
& \int_{\R^3} \left( \int_{\R^3} \left| \frac{ \psi (\v{x}_1 , \v{x}_2) }{| \v{x}_1 - \v{x}_2 |} \right|^{\frac{3}{2}} \dd \v{x}_1 \right)^{\frac{4}{3}} \dd \v{x}_2 \\
& =  \int_{\R^3} \left( \int_{B_1} \left| \frac{ \psi (\v{y} + \v{x}_2 , \v{x}_2) }{| \v{y} |} \right|^{\frac{3}{2}} \dd \v{y} + \int_{B_1^c} \left| \frac{ \psi (\v{y} + \v{x}_2 , \v{x}_2) }{| \v{y} |} \right|^{\frac{3}{2}} \dd \v{y}  \right)^{\frac{4}{3}} \dd \v{x}_2 \\
& \lesssim \int_{\R^3} \left( \left( \int_{\R^3} | \v{p}_1 \psi (\v{x}_1 , \v{x}_2) |^2 \dd \v{x}_1 \right)^{\frac{3}{4}} + \left( \int_{\R^3} | \psi (\v{x}_1 , \v{x}_2) |^2 \dd \v{x}_1 \right)^{\frac{3}{4}}  \right)^{\frac{4}{3}} \dd \v{x}_2 \\
& \lesssim \| (1 - \Delta_{\v{x}_1})^{\frac{1}{2}} \psi \|_{2}^2 .
\end{align*}
To show estimate (\ref{eq:Estimate-Coulomb-appen_4}) one combines the strategy used to show (\ref{eq:Estimate-Coulomb-appen_2}) and (\ref{eq:Estimate-Coulomb-appen_3}).

With estimates (\ref{eq:Estimate-Coulomb-appen_1}) through (\ref{eq:Estimate-Coulomb-appen_4}) at our disposal we may prove (\ref{eq:Estimate-Coulomb}). We split $V(\ul{\v{R}} , \Z)$ into three pieces: $V (\ul{\v{R}} , \Z) = \sum_{n = 1}^3 V_n  (\ul{\v{R}} , \Z) $ where 
\begin{align*}
& V_1  (\ul{\v{R}} , \Z) = \sum_{1 \leq i < j \leq N} \frac{1}{|\v{x}_i - \v{x}_j|} , \\
& V_2  (\ul{\v{R}} , \Z) =  - \sum_{i = 1}^N \sum_{j = 1}^K \frac{Z_j}{| \v{x}_i - \v{R}_j |} , \\
& V_3  (\ul{\v{R}} , \Z) = \sum_{1 \leq i < j \leq K} \frac{Z_i Z_j}{|\v{R}_i - \v{R}_j|} .
\end{align*} 
We show (\ref{eq:Estimate-Coulomb}) with $V(\ul{\v{R}} , \Z)$ replaced by $V_n (\ul{\v{R}} , \Z)$, $n = 1 , 2 , 3$. The estimate is trivial for $V_3 (\ul{\v{R}} , \Z)$ since $\ul{\v{R}}$ is fixed. Indeed, we find
\begin{align}\label{eq:Estimate-Coulomb-1}
\| e^{t \Delta} V_3 (\ul{\v{R}} , \Z) \phi \|_{m , 2} \leq \left( \sum_{i,j = 1}^K \frac{Z_i Z_j}{|\v{R}_i - \v{R}_j|} \right) \| e^{t \Delta} \phi \|_{m,2} \lesssim \| \phi \|_{m,2} .
\end{align}
For $V_2 (\ul{\v{R}} , \Z)$, the desired estimate is equivalent to controlling $\| e^{t \Delta} | \v{x}_i |^{-1} \phi \|_{m,2}$ by $\| \phi \|_{m,2}$ for each $i = 1 , \cdots , N$. For this, fix $i \in \{ 1 , \cdots , N \}$ and note that 
\begin{align}\label{eq:Estimate-Coulomb-2}
\| e^{t \Delta} | \v{x}_i |^{-1} \phi \|_{m,2} & \lesssim \sum_{k = 1}^N \| (1 - \Delta_{\v{x}_k})^{\frac{m}{2}} e^{t \Delta} | \v{x}_i |^{-1} \phi \|_{2} .
\end{align}
To estimate the right hand side of (\ref{eq:Estimate-Coulomb-2}) we consider two cases: (a) $k \neq i$ and (b) $k = i$. For case (a), we use Lemma \ref{lem:Heat-Kernel} and the estimate (\ref{eq:Estimate-Coulomb-appen_1}) to find 
\begin{align*}
& \| (1 - \Delta_{\v{x}_k})^{\frac{m}{2}} e^{t \Delta} | \v{x}_i |^{-1} \phi \|_{2} \\
& \hspace{0.1cm} \leq \| e^{t \Delta_{\v{x}_i}} | \v{x}_i |^{-1} (1 - \Delta_{\v{x}_k})^{\frac{m}{2}} e^{t \Delta_{\v{x}_k}} \phi \|_2 \\
& \hspace{0.1cm} \lesssim t^{- \frac{1}{4}} \left( \int_{\R^{3(N-1)}} \left( \int_{\R^3} \left| | \v{x}_i |^{-1} (1 - \Delta_{\v{x}_k})^{\frac{m}{2}} e^{t \Delta_{\v{x}_k}} \phi ( \ul{\v{x}} )  \right|^{\frac{3}{2}} \dd \v{x}_i \right)^{\frac{4}{3}} \dd \ul{\v{x}}_i' \right)^{\frac{1}{2}} \\
& \hspace{0.1cm} \lesssim t^{- \frac{1}{4}} \| (1 - \Delta_{\v{x}_k})^{\frac{1}{2}}  e^{t \Delta_{\v{x}_k}} (1 - \Delta_{\v{x}_i} )^{\frac{1}{2}} (1 - \Delta_{\v{x}_k})^{\frac{m-1}{2}}  \phi \|_2  \\
& \hspace{0.1cm} \lesssim t^{- \frac{1}{4}} \left[1 + t^{-\frac{1}{2}} \right] \| (1 - \Delta_{\v{x}_i} )^{\frac{1}{2}} (1 - \Delta_{\v{x}_k})^{\frac{m-1}{2}} \phi \|_2 \\
& \hspace{0.1cm} \lesssim t^{- \frac{1}{4}} \left[1 + t^{-\frac{1}{2}} \right] \| \phi \|_{m,2} . \numberthis \label{eq:Estimate-Coulomb-3}
\end{align*}
For case (b) the estimating is similar to that of (\ref{eq:Estimate-Coulomb-3}). Using (\ref{eq:Estimate-Coulomb-appen_2}) we find
\begin{align*}
& \|  (1 - \Delta_{\v{x}_i})^{\frac{m}{2}} e^{t \Delta} | \v{x}_i |^{-1} \phi \|_2 \\
& \hspace{0.1cm} \lesssim \| (1 - \Delta_{\v{x}_i})^{\frac{1}{2}} e^{t \Delta_{\v{x}_i}} (1 - \Delta_{\v{x}_i})^{\frac{m-1}{2}} | \v{x}_i |^{-1} \phi \|_2 \\
& \hspace{0.1cm} \lesssim t^{- \frac{9}{20}} \left[1 + t^{- \frac{1}{2} } \right] \left( \int_{\R^{3(N-1)}} \left( \int_{\R^3} \left| (1 - \Delta_{\v{x}_i})^{\frac{m-1}{2}} \frac{\phi ( \ul{\v{x}} )}{|\v{x}_i|}  \right|^{\frac{5}{4}} \dd \v{x}_i \right)^{\frac{8}{5}} \dd \ul{\v{x}}_i' \right)^{\frac{1}{2}} \\
& \hspace{0.1cm} \lesssim  t^{- \frac{9}{20}} \left[1 + t^{- \frac{1}{2} } \right] \| (1 - \Delta_{\v{x}_i} )^{\frac{m}{2}} \phi \|_2 \\
& \hspace{0.1cm} \lesssim t^{- \frac{9}{20}} \left[1 + t^{- \frac{1}{2} } \right] \| \phi \|_{m,2} . \numberthis \label{eq:Estimate-Coulomb-4}
\end{align*}
Combining estimates (\ref{eq:Estimate-Coulomb-3}) and (\ref{eq:Estimate-Coulomb-4}) we arrive at
\begin{align*}
\| e^{t \Delta} V_2 (\ul{\v{R}} , \Z) \phi \|_{m , 2} & \leq  \sum_{i = 1}^N \sum_{j = 1}^K Z_j \| e^{t \Delta} | \v{x}_i - \v{R}_j |^{-1} \phi \|_{m,2} \\
& \lesssim \left( 1 + t^{- \frac{1}{2} } \right) \left( t^{- \frac{9}{20}} + t^{-\frac{1}{4}} \right) \| \phi \|_{m,2} . \numberthis \label{eq:Estimate-Coulomb-5}
\end{align*}

Finally we need to control $\| e^{t \Delta} | \v{x}_i - \v{x}_j |^{-1} \phi \|_{m,2}$ by $\| \phi \|_{m,2}$ for each $i,j = 1 , \cdots , N$ with $i \neq j$. The estimates involved are similar to those involved with controlling $\| e^{t \Delta} V_2 (\ul{\v{R}} , \Z) \phi \|_{m , 2}$, and thus we choose to be brief with the computations. Fix $( i , j ) \in \{ 1 , \cdots , N \}^2$ with $i \neq j$. Note that
\begin{align}\label{eq:Estimate-Coulomb-6}
\| e^{t \Delta} | \v{x}_i - \v{x}_j |^{-1} \phi \|_{m,2} & \lesssim \sum_{k = 1}^N \| (1 - \Delta_{\v{x}_k})^{\frac{m}{2}} e^{t \Delta} | \v{x}_i - \v{x}_j |^{-1} \phi \|_{2} .
\end{align}
Estimating the right hand side of (\ref{eq:Estimate-Coulomb-6}) is similar to estimating the right hand side of (\ref{eq:Estimate-Coulomb-2}). We again consider two cases: (a) $k \neq j, i$ and (b) $k = j, i$. For case (a) we use Lemma \ref{lem:Heat-Kernel} and (\ref{eq:Estimate-Coulomb-appen_3}) to find
\begin{align*}
& \| (1 - \Delta_{\v{x}_k})^{\frac{m}{2}} e^{t \Delta} \frac{\phi}{| \v{x}_i - \v{x}_j |} \|_{2} \\
& \hspace{0.1cm} \leq \| e^{t \Delta_{\v{x}_i}} \frac{(1 - \Delta_{\v{x}_k})^{\frac{m}{2}} e^{t \Delta_{\v{x}_k}} \phi}{|\v{x}_i - \v{x}_j|} \|_{2} \\
& \hspace{0.1cm} \lesssim t^{- \frac{1}{4}} \left( \int_{\R^{3(N-1)}} \left( \int_{\R^3} \left| \frac{(1 - \Delta_{\v{x}_k})^{\frac{m}{2}} e^{t \Delta_{\v{x}_k}} \phi ( \ul{\v{x}} ) }{ | \v{x}_i - \v{x}_j | } \right|^{\frac{3}{2}} \dd \v{x}_i \right)^{\frac{4}{3}}  \dd \ul{\v{x}}'_{i} \right)^{\frac{1}{2}} \\
& \hspace{0.1cm} \lesssim t^{- \frac{1}{4}} \| (1 - \Delta_{\v{x}_i})^{\frac{1}{2}} (1 - \Delta_{\v{x}_k})^{\frac{m}{2}} e^{t \Delta_{\v{x}_k}} \phi \|_2 \\
& \hspace{0.1cm} \lesssim t^{- \frac{1}{4}} [1 + t^{-\frac{1}{2}}] \| (1 - \Delta_{\v{x}_i})^{\frac{1}{2}} (1 - \Delta_{\v{x}_k})^{\frac{m-1}{2}} \phi \|_2 \\
& \hspace{0.1cm} \lesssim t^{- \frac{1}{4}} [1 + t^{-\frac{1}{2}}] \| \phi \|_{m,2} . \numberthis \label{eq:Estimate-Coulomb-7}
\end{align*}
For case (b) the estimating is similar. We choose $k = i$, and note that the case $k = j$ is identical by symmetry. Using Lemma \ref{lem:Heat-Kernel} and (\ref{eq:Estimate-Coulomb-appen_4}) we find
\begin{align*}
& \| (1 - \Delta_{\v{x}_i})^{\frac{m}{2}} e^{t \Delta} \frac{\phi}{| \v{x}_i - \v{x}_j |}  \|_{2} \\
& \hspace{0.1cm} \leq \| (1 - \Delta_{\v{x}_i})^{\frac{1}{2}}  e^{t \Delta_{\v{x}_i}} (1 - \Delta_{\v{x}_i})^{\frac{m-1}{2}}  | \v{x}_i - \v{x}_j |^{-1} \phi \|_{2} \\
& \hspace{0.1cm} \lesssim t^{-\frac{9}{20}} [ 1 + t^{-\frac{1}{2}} ] \left(  \int_{\R^{3(N-1)}} \left( \int_{\R^3} \left| (1 - \Delta_{\v{x}_i})^{\frac{m-1}{2}} \frac{\phi ( \ul{\v{x}} ) }{ | \v{x}_i - \v{x}_j | } \right|^{\frac{5}{4}} \dd \v{x}_j \right)^{\frac{8}{5}} \dd \ul{\v{x}}_i' \right)^{\frac{1}{2}} \\
& \hspace{0.1cm} \lesssim t^{- \frac{9}{20}} [1 + t^{- \frac{1}{2} }] \| (1 - \Delta_{\v{x}_i})^{\frac{m}{2}} \phi \|_2 \\
& \hspace{0.1cm} \lesssim t^{- \frac{9}{20}} [1 + t^{- \frac{1}{2} }] \| \phi \|_{m,2}  . \numberthis \label{eq:Estimate-Coulomb-8}
\end{align*}
Combining estimates (\ref{eq:Estimate-Coulomb-7}) and (\ref{eq:Estimate-Coulomb-8}) we arrive at
\begin{align*}
\| e^{t \Delta} V_3 (\ul{\v{R}} , \Z) \phi \|_{m , 2} & \leq  \sum_{1 \leq i < j \leq N} \| e^{t \Delta} | \v{x}_i - \v{x}_j |^{-1} \phi \|_{m,2} \\
& \lesssim \left( 1 + t^{- \frac{1}{2} } \right) \left( t^{- \frac{9}{20}} + t^{-\frac{1}{4}} \right) \| \phi \|_{m,2} . \numberthis \label{eq:Estimate-Coulomb-9}
\end{align*}
Collecting estimates (\ref{eq:Estimate-Coulomb-1}), (\ref{eq:Estimate-Coulomb-5}), and (\ref{eq:Estimate-Coulomb-9}) we arrive at (\ref{eq:Estimate-Coulomb}). 

\end{proof}


\begin{lem}[Estimates for the Energy] \label{lem:Estimates-Energy}
Fix $\varepsilon > 0$, $N , K \geq 1$, and let $\Z$ and $\ul{\v{R}}$ be defined as in \S\ref{sec:intro}. For all $(\phi , \v{A} , \dot{\v{A}} ) \in [H^1 (\R^{3N}) ]^{2^N} \times \dot{H}^1 (\R^3 ; \R^3) \times L^2 (\R^3 ; \R^3)$, with $\diver{\v{A}} = 0$, the kinetic energy $T = T[\phi , \v{A}]$, as defined in (\ref{def:epsilon_tot_kinetic}), and the potential energy $V = V[\phi]$, as defined in (\ref{def:epsilon_tot_coulomb}), satisfy the estimates
\begin{align}\label{eq:Estimate-Kinetic-Coulomb}
T \lesssim ( 1 + \| \nabla \v{A} \|_2 )^2 \| \phi \|_{1,2}  \hspace{0.5cm} \mr{and} \hspace{0.5cm} V \lesssim \| \phi \|_{1,2}^2 ,
\end{align}
respectively. Consequently, the total energy $E [\phi , \v{A} , \dot{\v{A}}] = T[ \phi , \v{A} ] + V[\phi] + F [\v{A} , \dot{\v{A}}] \| \phi \|_2^2$ satisfies the estimate
\begin{align} \label{eq:Estimate-tot-Energy}
E [\phi , \v{A} , \dot{\v{A}}] \leq C_1 \left\lbrace 1 + ( 1 + \| \nabla \v{A} \|_{2} )^2  + \| \nabla \v{A} \|_{2}^2 + \| \dot{\v{A}} \|_2^2 \right\rbrace \| \phi \|_{1,2}^2 ,
\end{align}
where $C_1$ is a constant depending on $N$, $K$, $\ul{\v{R}}$, $\Z$, and $\alpha$, but independent of $(\phi , \v{A} , \dot{\v{A}} )$. Moreover, for all $(\phi , \v{A} , \dot{\v{A}} ) , (\phi' , \v{A}' , \dot{\v{A}}' ) \in [H^1 (\R^{3N}) ]^{2^N} \times \dot{H}^1 (\R^3 ; \R^3) \times L^2 (\R^3 ; \R^3)$, the difference of the total energies $E - E' \equiv E [\phi , \v{A} , \dot{\v{A}}] - E[\phi' , \v{A}' , \dot{\v{A}}']$ satisfies the estimate 
\begin{align*}
| E - E' | & \leq C_2 \omega ( \| \phi \|_{1,2} , \| \phi' \|_{1,2} , \| \nabla \v{A} \|_{2} , \| \nabla \v{A}' \|_{2} , \| \dot{\v{A}} \|_{2} , \| \dot{\v{A}}' \|_{2} ) \\
& \hspace{2cm} \times \max{ \{ \| \phi - \phi' \|_{1,2} , \|\nabla ( \v{A} - \v{A}' ) \|_{2} , \|  \dot{\v{A}} - \dot{\v{A}}' \|_{2} \} } , \numberthis \label{eq:Estimate-Energy-2} 
\end{align*}
where 
\begin{align*}
\omega (x_1 , x_2 , x_3 , x_4 , x_5 , x_6) & =  \left( 1 + x_2  + x_3 \right) \left[ ( 1 + x_3 ) x_1 + ( 1 + x_4 ) x_2 \right] \\
& \hspace{2cm} + ( x_1 + x_2 ) + ( x_3 + x_4 ) ( x_5 + x_6 ), \numberthis \label{eq:def_omega}
\end{align*}
and $C_2$ is a constant depending on $N$, $K$, $\ul{\v{R}}$, $\Z$, and $\alpha$, but independent of $(\phi , \v{A} , \dot{\v{A}})$ and $(\phi' , \v{A}' , \dot{\v{A}}')$.
\end{lem}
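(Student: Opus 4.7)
The plan is to control each of the three pieces of $E = T + V + F\|\phi\|_2^2$ separately by Sobolev embedding and Hardy-type estimates, and then deduce the Lipschitz bound by expanding each difference as a telescoping sum. For the kinetic piece I would invoke the Pauli identity (\ref{eq:expandedPauli}) to write
\[
\| \gvsig_j \cdot (\v{p}_j + \v{A}_j) \phi \|_2^2 = \| \v{p}_j \phi \|_2^2 + 2 \re \langle \v{p}_j \phi , \v{A}_j \phi \rangle + \| \v{A}_j \phi \|_2^2 + \langle \phi , \gvsig_j \cdot \v{B}_j \phi \rangle,
\]
and handle the magnetic terms by performing the $\v{x}_j$-integration first: H\"older with $\v{A}\in L^6$ together with the one-dimensional Sobolev embedding $\| \phi \|_{L^3_{\v{x}_j}} \lesssim \| \phi \|_{L^2_{\v{x}_j}}^{1/2} \| \phi \|_{L^6_{\v{x}_j}}^{1/2}$, followed by Cauchy--Schwarz in $\ul{\v{x}}_j'$, yields $\| \v{A}_j \phi \|_2^2 \lesssim \|\nabla \v{A}\|_2^2 \| \phi \|_{1,2}^2$, and similarly $|\langle \phi , \gvsig_j \cdot \v{B}_j \phi \rangle | \lesssim \| \nabla \v{A} \|_2 \| \phi \|_{1,2}^2$ using $\|\v{B}\|_2 = \|\nabla \v{A}\|_2$. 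The cross term is absorbed by Cauchy--Schwarz. Summing over $j$ yields the first estimate in (\ref{eq:Estimate-Kinetic-Coulomb}).

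For the Coulomb piece I would bound each of the at most $O(N^2 + NK + K^2)$ summands of $V(\ul{\v{R}},\Z)$ by Cauchy--Schwarz together with Hardy's inequality: writing, e.g., $\langle \phi, |\v{x}_i - \v{R}_j|^{-1}\phi\rangle \leq \| |\v{x}_i-\v{R}_j|^{-1} \phi\|_{L^2_{\v{x}_i}L^2_{\ul{\v{x}}_i'}} \|\phi\|_2$ and applying Hardy in the $\v{x}_i$ variable (or the $m=1$ case of (\ref{eq:Estimate-Coulomb-appen_1})--(\ref{eq:Estimate-Coulomb-appen_3}) from Lemma \ref{lem:Estimates-Coulomb}) gives $|V[\phi]| \lesssim \|\phi\|_{1,2}^2$. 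The constant nuclear self-energy is trivial. Combining with $F[\v{A},\dot{\v{A}}]\|\phi\|_2^2 \lesssim (\alpha^{-2}\|\nabla \v{A}\|_2^2 + \|\dot{\v{A}}\|_2^2)\|\phi\|_{1,2}^2$ yields (\ref{eq:Estimate-tot-Energy}).

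For the Lipschitz bound (\ref{eq:Estimate-Energy-2}) I would use the elementary identity $\|u\|_2^2 - \|v\|_2^2 = \re\langle u - v, u + v\rangle$ applied to $u = \gvsig_j \cdot (\v{p}_j + \v{A}_j)\phi$, $v = \gvsig_j \cdot (\v{p}_j + \v{A}_j')\phi'$, and decompose
\[
u - v = \gvsig_j \cdot (\v{p}_j + \v{A}_j)(\phi - \phi') + \gvsig_j \cdot (\v{A}_j - \v{A}_j')\phi'.
\]
Applying the same Sobolev estimates as in the first paragraph produces
\[
\|u-v\|_2 \lesssim (1 + \|\nabla \v{A}\|_2)\|\phi - \phi'\|_{1,2} + \|\nabla(\v{A}-\v{A}')\|_2 \|\phi'\|_{1,2},
\]
while $\|u\|_2 + \|v\|_2 \lesssim (1+\|\nabla\v{A}\|_2)\|\phi\|_{1,2} + (1+\|\nabla\v{A}'\|_2)\|\phi'\|_{1,2}$; this reproduces the first line of $\omega$. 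The Coulomb difference is linear, hence $|V[\phi]-V[\phi']| \lesssim (\|\phi\|_{1,2}+\|\phi'\|_{1,2})\|\phi-\phi'\|_{1,2}$, giving the $(x_1+x_2)$ term. Finally $F\|\phi\|_2^2 - F'\|\phi'\|_2^2$ splits as $F(\|\phi\|_2^2 - \|\phi'\|_2^2) + (F - F')\|\phi'\|_2^2$; the difference $F - F'$ is handled by another application of $a^2-b^2 = (a-b)(a+b)$ to $\|\nabla \v{A}\|_2^2 - \|\nabla \v{A}'\|_2^2$ and $\|\dot{\v{A}}\|_2^2 - \|\dot{\v{A}}'\|_2^2$, producing the term $(x_3+x_4)(x_5+x_6)$. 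There is no serious obstacle here: all estimates are routine once one notes that only $\dot{H}^1$-control of $\v{A}$ is assumed, so every occurrence of $\v{A}$ must be reduced to $\|\nabla \v{A}\|_2$ via $\|\v{A}\|_6 \lesssim \|\nabla \v{A}\|_2$, which is precisely what dictates the shape of the function $\omega$ in (\ref{eq:def_omega}).
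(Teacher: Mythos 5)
Your proposal is correct and follows essentially the same route as the paper: H\"older/Sobolev (reducing every occurrence of $\v{A}$ to $\| \v{A} \|_6 \lesssim \| \nabla \v{A} \|_2$) for the kinetic term, the Hardy-type bound $\langle \psi , |\v{x}|^{-1} \psi \rangle \leq \| \psi \|_2 \| \nabla \psi \|_2$ for the Coulomb term, and a telescoping/polarization decomposition for the differences $T - T'$, $V - V'$, $F - F'$. The only cosmetic differences are that the paper bounds $\| \gvsig \cdot (\v{p} + \v{A}) \phi \|_2$ directly by the triangle inequality rather than expanding the square via (\ref{eq:expandedPauli}), and organizes $T - T'$ into six explicit bilinear terms rather than using $\| u \|_2^2 - \| v \|_2^2 = \re \langle u - v , u + v \rangle$; your version is, if anything, slightly more careful about the factor $\| \phi \|_2^2$ multiplying $F$.
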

\begin{proof}
To show the first estimate in (\ref{eq:Estimate-Kinetic-Coulomb}) it suffices to prove the $N = 1$ case, as for general $N \geq 1$ the estimating goes in a similar fashion. Using H\"{o}lder's inequality and Sobolev's inequality $H^1 (\R^3) \subset L^r (\R^3)$, $1 \leq r \leq 6$, we find
\begin{align*}
\| \gvsig \cdot (\v{p} + \v{A}) \phi \|_2 & \leq \| \v{p} \phi \|_2 + \| \v{A} \phi \|_2 \\
& \lesssim \| \phi \|_{1,2} + \| \v{A} \|_6 \| \phi \|_3 \\
& \lesssim ( 1 + \| \nabla \v{A} \|_2 ) \| \phi \|_{1,2} .
\end{align*}

To show the second estimate in (\ref{eq:Estimate-Kinetic-Coulomb}), first note that
\begin{align}\label{eq:Estimate-Energy-4} 
V[\phi] \leq \sum_{1 \leq i < j \leq N} \langle \phi , |\v{x}_i - \v{x}_j|^{-1} \phi \rangle_{L^2} + \left( \sum_{1 \leq i < j \leq K} \frac{Z_i Z_j}{|\v{R}_i - \v{R}_j|} \right) \| \phi \|_{2}^2 .
\end{align}
Considering (\ref{eq:Estimate-Energy-4}) we focus on controlling the electron-electron repulsion energy since the nuclei-nuclei repulsion energy is trivially bounded by $\| \phi \|_{1,2}$. The desired estimate on the electron-electron repulsion energy follows from the uncertainty principle for Hydrogen, namely $\langle \psi , |\v{x}|^{-1} \psi \rangle \leq \| \psi \|_2 \| \nabla \psi \|_2$. It suffices to consider the case $N = 2$. Using H\"{o}lder's inequality and Sobolev's inequality we find
\begin{align*}
& \langle \phi , |\v{x}_1 - \v{x}_2|^{-1} \phi \rangle_{L^2} \\
& \hspace{0.1cm} = \int_{\R^3} \int_{\R^3} \frac{|\phi (\v{y} + \v{x}_2 , \v{x}_2)|^2}{|\v{y}|} \dd \v{y} \dd \v{x}_2 \\
& \hspace{0.1cm} \leq \int_{\R^3} \left( \int_{\R^3} | \phi (\v{y} + \v{x}_2 , \v{x}_2) |^2 \dd \v{y} \right)^{\frac{1}{2}} \left( \int_{\R^3} | \v{p}_1 \phi (\v{y} + \v{x}_2 , \v{x}_2) |^2 \dd \v{y} \right)^{\frac{1}{2}} \dd \v{x}_2  \\
& \hspace{0.1cm} \leq \frac{1}{2} \left( \| \phi \|_2^2 + \| \v{p}_1 \phi \|_2^2 \right) . \numberthis \label{eq:Estimate-Energy-4.1}
\end{align*}
Estimates (\ref{eq:Estimate-Energy-4}) and (\ref{eq:Estimate-Energy-4.1}) imply the second estimate in (\ref{eq:Estimate-Kinetic-Coulomb}). Combining (\ref{eq:Estimate-Kinetic-Coulomb}) with the observation that $F [\v{A} , \dot{\v{A}}] = (\alpha^{-2} \| \nabla \v{A} \|_{2}^2 + 4 \| \dot{\v{A}} \|_2^2)/(8 \pi)$ we arrive at (\ref{eq:Estimate-tot-Energy}).

To prove (\ref{eq:Estimate-Energy-2}) write $E - E' = T - T' + V - V' + (F - F')$, with hopefully obvious notation. We will estimate $T - T'$, $V - V'$, and $F - F'$, separately. As before, to estimate $T - T'$ it suffices to consider the $N = 1$ case. Write $T - T' = \sum_{k = 1}^6 T_k$ where
\begin{align*}
& T_1 [ \phi , \phi' , \v{A} , \v{A}' ] = \langle \gvsig \cdot \v{p} ( \phi - \phi' ) ,  \gvsig \cdot (\v{p} + \v{A}) \phi \rangle , \\
& T_2 [ \phi , \phi' , \v{A} , \v{A}' ] = \langle \gvsig \cdot (\v{A} - \v{A}') \phi' ,  \gvsig \cdot (\v{p} + \v{A}) \phi \rangle , \\
& T_3 [ \phi , \phi' , \v{A} , \v{A}' ] =  \langle \gvsig \cdot \v{A} (\phi -  \phi')  ,  \gvsig \cdot (\v{p} + \v{A}) \phi \rangle , \\
& T_4 [ \phi , \phi' , \v{A} , \v{A}' ] = \langle \gvsig \cdot (\v{p} + \v{A}') \phi' , \gvsig \cdot \v{p} ( \phi - \phi' )  \rangle  ,\\
& T_5 [ \phi , \phi' , \v{A} , \v{A}' ] = \langle  \gvsig \cdot (\v{p} + \v{A}') \phi' ,  \gvsig \cdot (\v{A} - \v{A}') \phi'  \rangle ,  \\
& T_6 [ \phi , \phi' ,\v{A} , \v{A}' ] =  \langle \gvsig \cdot (\v{p} + \v{A}') \phi'  ,  \gvsig \cdot \v{A} (\phi -  \phi') \rangle .
\end{align*}
Using Cauchy-Schwartz together with first estimate in (\ref{eq:Estimate-Kinetic-Coulomb}) we find
\begin{align}
& T_1 [ \phi , \phi' , \v{A} , \v{A}' ]  \lesssim ( 1 + \| \nabla \v{A} \|_2 ) \| \phi \|_{1,2} \| \phi - \phi' \|_{1,2} , \label{eq:Estimate-Energy-5}  \\
& T_2 [ \phi , \phi' , \v{A} , \v{A}' ]  \lesssim  ( 1 + \| \nabla \v{A} \|_2 ) \| \phi \|_{1,2} \| \phi' \|_{1,2} \| \nabla ( \v{A} - \v{A}' ) \|_2 , \label{eq:Estimate-Energy-6}  \\
&  T_3 [ \phi , \phi' , \v{A} , \v{A}' ]  \lesssim ( 1 + \| \nabla \v{A} \|_2 ) \| \phi \|_{1,2} \| \nabla \v{A} \|_2 \| \phi - \phi' \|_{1,2} , \label{eq:Estimate-Energy-7}  \\
& T_4 [ \phi , \phi' , \v{A} , \v{A}' ]   \lesssim ( 1 + \| \nabla \v{A}' \|_2 ) \| \phi' \|_{1,2} \| \phi - \phi' \|_{1,2} , \label{eq:Estimate-Energy-8}  \\
& T_5 [ \phi , \phi' , \v{A} , \v{A}' ]  \lesssim ( 1 + \| \nabla \v{A}' \|_2 ) \| \phi' \|_{1,2}^2 \| \nabla ( \v{A} - \v{A}' ) \|_2 , \label{eq:Estimate-Energy-9}  \\
& T_6 [ \phi , \phi' , \v{A} , \v{A}' ]  \lesssim ( 1 + \| \nabla \v{A}' \|_2 ) \| \phi' \|_{1,2} \| \nabla \v{A} \|_2 \| \phi - \phi' \|_{1,2} . \label{eq:Estimate-tot-Energy0}  
\end{align}
Collecting estimates (\ref{eq:Estimate-Energy-5}) through (\ref{eq:Estimate-tot-Energy0}) we conclude
\begin{align*}
| T - T' | & \lesssim \omega_1 ( \|\phi\|_{1,2} , \| \phi' \|_{1,2} , \| \nabla \v{A} \|_2 , \| \nabla \v{A}' \|_2 ) \\
& \hspace{1cm} \times \max{\{ \| \phi - \phi' \|_{1,2} , \| \nabla ( \v{A} - \v{A}' ) \|_2 \}} \numberthis \label{eq:Estimate-tot-Energy1} 
\end{align*}
where $\omega_1$ function
\begin{align*}
\omega_1 ( x , y , z , w ) = \left( 1 + y  + z \right) \left[ ( 1 + z ) x + ( 1 + w ) y \right] .
\end{align*}

To estimate $V - V'$, write $V - V' = V_1 + V_2$ where
\begin{align*}
V_1 [ \phi , \phi' ] = \langle \phi - \phi' , V (\ul{\v{R}} , \Z) \phi \rangle_{L^2} , \hspace{1cm} V_2 [\phi , \phi'] = \langle \phi' , V (\ul{\v{R}} , \Z) (\phi - \phi') \rangle_{L^2} .
\end{align*}
We want to control $\max{ \{ V_1 , V_2 \} }$ by $\| \phi \|_{1,2}$, $\| \phi' \|_{1,2}$, and $\| \phi - \phi' \|_{1,2}$. Therefore we show that
\begin{align}\label{eq:Estimate-tot-Energy1.5}
| \langle h , V(\ul{\v{R}} , \Z) g \rangle | \lesssim \| h \|_{1,2} \| g \|_{1,2} , \hspace{1cm} \forall h,g \in H^1 (\R^{3N} , \C) .
\end{align}
Note that 
\begin{align*}
\langle h , V( \ul{\v{R}} , \Z ) g \rangle_{L^2} & = \sum_{i < j}^N \langle h , | \v{x}_i - \v{x}_j |^{-1} g \rangle_{L^2} - \sum_{i = 1}^N \sum_{j = 1}^K Z_j \langle h , | \v{x}_i - \v{R}_j |^{-1} g \rangle_{L^2} \\
& \hspace{0.25cm} + \sum_{i < j}^K \frac{Z_i Z_j}{|\v{R}_i - \v{R}_j|} \langle h , g \rangle_{L^2} . \numberthis  \label{eq:Estimate-tot-Energy2}
\end{align*}
The third term on the right hand side of (\ref{eq:Estimate-tot-Energy2}) is bounded by $\| g \|_2 \| h \|_2$ via Cauchy-Schwartz. To estimate the second term on the right hand side of (\ref{eq:Estimate-tot-Energy2}) it suffices to consider the case $N, K = 1$ and $\v{R}_1 = 0$. Indeed, in this situation $\langle h , |\v{x}|^{-1} g \rangle \lesssim \sqrt{ \| h \|_6 \| g \|_6 \| h \|_2 \| g \|_2 }$. This follows by writing $\langle h , |\v{x}|^{-1} g \rangle$ as the sum of an integral over the ball of radius $R$ and its complement, using H\"{o}lder's inequality, and then optimizing over $R$. The desired estimate (\ref{eq:Estimate-tot-Energy1.5}) then follows from the Sobolev inequality. Estimating the first term on the right hand side of (\ref{eq:Estimate-tot-Energy2}) by $\| h \|_{1,2} \| g \|_{1,2}$ follows the same proof as that of (\ref{eq:Estimate-Energy-4.1}). Hence (\ref{eq:Estimate-tot-Energy1.5}) holds, and therefore
\begin{align}\label{eq:Estimate-tot-Energy3}
|V - V'| \lesssim |V_1| + |V_2| \lesssim ( \| \phi \|_{1,2} + \| \phi' \|_{1,2} ) \| \phi - \phi' \|_{1,2} .
\end{align}
Finally, noting that
\begin{align}\label{eq:Estimate-tot-Energy4} 
|F - F'| & \lesssim ( \| \nabla \v{A} \|_2 + \| \nabla \v{A}' \|_2 ) ( \| \dot{\v{A}} \|_{2} + \| \dot{\v{A}}' \|_{2} ) \\
& \hspace{1cm} \times \max{ \{\| \nabla (\v{A} - \v{A}' )\|_{2} , \| \dot{\v{A}} - \dot{\v{A}}' \|_{2} \} } ,
\end{align}
we collect estimates (\ref{eq:Estimate-tot-Energy1}), (\ref{eq:Estimate-tot-Energy3}), and (\ref{eq:Estimate-tot-Energy4}) and arrive at (\ref{eq:Estimate-Energy-2}).
\end{proof}


\begin{lem}[Estimates for the Probability Current Density] \label{lem:Estimates-KG}
Fix $m \in [1 , \infty)$ and $N \geq 1$. For all $(\phi , \v{A}) \in [H^{m} (\R^{3N})]^{2^N} \times H^{m} (\R^3 ; \R^3)$, with $\diver{\v{A}} = 0$, and each $j \in \{1 , \cdots , N\}$, the probability current density $\v{J}_j [\phi , \v{A}]$ as given by (\ref{def:prob_current_compact}) is in the Sobolev space $H^{m - 2} (\R^3 ; \R^3)$ and satisfies the estimate
\begin{align} \label{eq:Estimate-KG-1}
\| \v{J}_j [\phi , \v{A} ] \|_{m-2,2} \leq C_1 (1 + \| \v{A} \|_{m,2} ) \| \phi \|_{m,2}^2 ,
\end{align}
where $C_1$ is a constant depending on $m$ and $N$, but independent of $j$, $\phi$, and $\v{A}$. Moreover, for $(\phi , \v{A}) , (\phi' , \v{A}') \in [H^1 (\R^{3N})]^{2^N} \times H^1 (\R^3 ; \R^3)$, with $\diver{\v{A}} = \diver{\v{A}'} = 0$, and each $j \in \{1 , \cdots , N\}$, we have
\begin{align*} 
\| \v{J}_j [\phi , \v{A}] - \v{J}_j [\phi' , \v{A}'] \|_{-1 , 2} & \leq C_2 \left\lbrace \left[ (1 + \| \v{A} \|_{1,2}) \| \phi \|_{1,2} + (1 + \| \v{A}' \|_{1,2}) \|\phi'\|_{1,2} \right] \right. \\
& \hspace{0.25cm} \left. + \| \phi \|_{1,2} \| \phi' \|_{1,2} \right\rbrace \max{\{ \| \phi - \phi' \|_{1,2} \| \v{A} - \v{A}' \|_{1,2} \}} , \numberthis \label{eq:Estimate-KG-2}
\end{align*}
where $C_2$ is a constant depending on $N$, but independent of $j$, $\phi$, $\phi'$, $\v{A}$, and $\v{A}'$.
\end{lem}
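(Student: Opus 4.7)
The plan is to exploit the spinor identity (\ref{eq:expandedCurrent}), applied after integrating out the $(N-1)$ extra variables, to decompose $\v{J}_j[\phi,\v{A}]$ into three structurally distinct pieces: a paramagnetic current $\v{J}^{\mr{p}}_j[\phi](\v{x}) = \re\int\langle\phi_{\ul{\v{z}}_j'}, \v{p}\phi_{\ul{\v{z}}_j'}\rangle_{\C^2}(\v{x})\dd\ul{\v{z}}_j'$, a diamagnetic current $-\v{A}(\v{x})\rho_j[\phi](\v{x})$ with one-body density $\rho_j[\phi](\v{x}) = \int|\phi_{\ul{\v{z}}_j'}(\v{x})|^2_{\C^2}\dd\ul{\v{z}}_j'$, and a spin current $\v{J}^{\mr{s}}_j[\phi](\v{x}) = \tfrac{1}{2}\curl\int\langle\phi_{\ul{\v{z}}_j'}, \gvsig\phi_{\ul{\v{z}}_j'}\rangle_{\C^2}(\v{x})\dd\ul{\v{z}}_j'$. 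Each summand is a bilinear (or trilinear, once $\v{A}$ is included) form in the wave function, and I will bound each separately, pushing norms inside the $\ul{\v{z}}_j'$ integral via Minkowski and invoking Lemma \ref{lem:Kato-Ponce} together with the Sobolev embedding $H^1(\R^3)\hookrightarrow L^r(\R^3)$ for $r\in[2,6]$ in the $\v{x}$ variable.

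For estimate (\ref{eq:Estimate-KG-1}) at the endpoint $m = 1$ the target space is negative-index, so I would use duality: $\|\v{F}\|_{-1,2} = \sup\{|\langle\v{F},\gv{\eta}\rangle|:\|\gv{\eta}\|_{1,2}\leq 1\}$. For the paramagnetic piece the pairing becomes an absolutely convergent multi-variable integral, and Fubini together with Cauchy--Schwarz in $\ul{\v{z}}_j'$ reduces matters to H\"older-type bounds of the form $\|\phi\|_6\|\nabla\phi\|_2\|\gv{\eta}\|_3\lesssim\|\phi\|_{1,2}^2\|\gv{\eta}\|_{1,2}$; the diamagnetic piece is similar with an extra $L^6$ factor from $\v{A}$, producing the factor $\|\v{A}\|_{1,2}$. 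For the spin current, the elementary bound $\|\curl\v{G}\|_{-1,2}\lesssim\|\v{G}\|_2$ reduces matters to an $L^2$ estimate on $\int\langle\phi_{\ul{\v{z}}_j'},\gvsig\phi_{\ul{\v{z}}_j'}\rangle\dd\ul{\v{z}}_j'$, which is controlled by $\|\phi\|_4^2\lesssim\|\phi\|_{1,2}^2$. For general $m\in[1,\infty)$, the same argument goes through: for $m\geq 2$ one applies Lemma \ref{lem:Kato-Ponce} to distribute the $(1-\Delta)^{(m-2)/2}$ weight across the bilinear forms, and the curl-gains-one-derivative structure continues to absorb one derivative for the spin piece; intermediate $m\in(1,2)$ then follows by interpolation.

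For the difference estimate (\ref{eq:Estimate-KG-2}) I would telescope each of the three contributions in the standard way, using identities such as $\phi\v{p}\phi - \phi'\v{p}\phi' = (\phi-\phi')\v{p}\phi + \phi'\v{p}(\phi-\phi')$ and $\v{A}\rho_j[\phi] - \v{A}'\rho_j[\phi'] = (\v{A}-\v{A}')\rho_j[\phi] + \v{A}'(\rho_j[\phi]-\rho_j[\phi'])$, so that every resulting summand carries exactly one factor involving a difference of primed and unprimed arguments. The $m=1$ duality argument then applies verbatim to every summand, and collecting terms yields the three-part coefficient in (\ref{eq:Estimate-KG-2}): the paramagnetic-plus-spin piece contributes the $(1+\|\v{A}\|_{1,2})\|\phi\|_{1,2}$ term, its primed analogue produces $(1+\|\v{A}'\|_{1,2})\|\phi'\|_{1,2}$, and the density-difference term gives $\|\phi\|_{1,2}\|\phi'\|_{1,2}$.

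The main obstacle I anticipate is not analytic but combinatorial: organizing the telescoping and tracking the precise dependence of the constants on the four inputs carefully enough to match the form (\ref{eq:Estimate-KG-2}). The only genuinely analytic subtlety is the treatment of the spin current at the $H^{-1}$ level, where one uses that $\curl$ is bounded $L^2\to H^{-1}$ to avoid needing a derivative on $\langle\phi,\gvsig\phi\rangle$; this is precisely the structural observation that saves these estimates, in contrast to the failure of $\|\langle\psi,\curl\gvsig\varphi\rangle\|_2\lesssim\|\psi\|_{1,2}\|\varphi\|_2$ that was identified in the introduction as the obstruction to a direct adaptation of the Nakamura--Wada contraction mapping.
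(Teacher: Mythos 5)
Your proposal is correct, and the analytic content is sound, but it takes a different organizational route from the paper's proof, and one of your structural observations, while true, is not actually what makes the estimate work. The paper never invokes the decomposition (\ref{eq:expandedCurrent}) into paramagnetic, diamagnetic, and spin currents. Instead it estimates the undivided Pauli current directly: pointwise one has $|\v{J}_j[\phi,\v{A}]| \lesssim \int |\phi_{\ul{\v{z}}_j'}|\, |(\v{p}+\v{A})\phi_{\ul{\v{z}}_j'}|\, \dd\ul{\v{z}}_j'$, so after Minkowski's integral inequality in $\ul{\v{z}}_j'$ and H\"older in $\v{x}_j$ one gets $\|\v{J}_j\|_{6/5} \lesssim \int \|\phi_{\ul{\v{z}}_j'}\|_3 \|(\v{p}+\v{A})\phi_{\ul{\v{z}}_j'}\|_2\, \dd\ul{\v{z}}_j' \lesssim (1+\|\v{A}\|_{1,2})\|\phi\|_{1,2}^2$, and the embedding $L^{6/5}(\R^3)\hookrightarrow H^{-1}(\R^3)$ finishes the $m=1$ case. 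Your duality pairing against $\gv{\eta}\in H^1$ is exactly the dual formulation of that embedding, so the two arguments are equivalent in substance; but note that in the paper's version the spin term requires no special treatment whatsoever, because the derivative it carries lands harmlessly in the $L^{6/5}$ bound just like the derivative in $\v{p}\phi$. In other words, what saves the estimate at this level is the negative-index target space, not the fact that the derivative sits inside a curl; your remark that $\curl$ is bounded $L^2\to H^{-1}$ is a correct alternative handling of that one term, but the contrast you draw with the obstruction discussed in the introduction is slightly off target, since that obstruction concerns an $L^2$ (not $H^{-1}$) bound on the current. For $m\geq 2$ the paper likewise applies Minkowski and then Kato--Ponce to the whole current rather than to the three pieces separately, and your telescoping for (\ref{eq:Estimate-KG-2}) matches the paper's splitting into four difference terms $\v{F}^1_j,\dots,\v{F}^4_j$ in spirit and yields the same coefficient structure. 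The net effect of your decomposition is extra bookkeeping with no change in the constants; its only payoff is physical transparency.
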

\begin{proof}
To prove (\ref{eq:Estimate-KG-1}) we split into two cases: (a) $1 \leq m \leq 2$ and (b) $m > 2$. For (a), we specialize to $m = 1$ and note that the general case $1 \leq m \leq 2$ follows in a similar fashion. Since
\begin{align*}
\| \v{J}_j [\phi , \v{A}] \|_{-1 , 2} \lesssim \| \v{J}_j [\phi , \v{A}] \|_{\frac{6}{5}} 
\end{align*}
we need to estimate $\| \v{J}_j [\phi , \v{A}] \|_{\frac{6}{5}}$ by $(1 + \| \v{A} \|_{1,2}) \| \phi \|_{1,2}^2$. Using Minkowski's integral inequality, H\"{o}lder's inequality, and the Sobolev inequality $H^1 (\R^3) \subset L^r (\R^3)$, $2 \leq r \leq 6$, we have
\begin{align*}
\| \v{J}_j [\phi , \v{A}] \|_{\frac{6}{5}} & = \left( \int_{\R^3} \left| \int \langle \gvsig \phi_{\ul{\v{z}}_j'} , \gvsig \cdot (\v{p} + \v{A}_j) \phi_{\ul{\v{z}}_j'} \rangle_{\C^2} (\v{x}_j) \dd \ul{\v{z}}_j' \right|^{\frac{6}{5}} \dd \v{x}_j \right)^{\frac{5}{6}} \\
& \leq \int \left( \int_{\R^{3}} \left| \langle \gvsig \phi_{\ul{\v{z}}_j'} , \gvsig \cdot (\v{p} + \v{A}_j) \phi_{\ul{\v{z}}_j'} \rangle_{\C^2} (\v{x}_j) \right|^{\frac{6}{5}}  \dd \v{x}_j \right)^{\frac{5}{6}} \dd \ul{\v{z}}_j' \\
&  \lesssim \int \left[ \| \phi_{\ul{\v{z}}_j'}  \|_{3} \| (\v{p} + \v{A}) \phi_{\ul{\v{z}}_j'}  \|_{2} \right] \dd \ul{\v{z}}_j' \\
& \lesssim ( 1 + \| \v{A} \|_{1,2} ) \| \phi \|_{1,2}^2 . \numberthis \label{eq:Estimate-KG-3}
\end{align*}
The estimate (\ref{eq:Estimate-KG-3}) thus yields $\| \v{J}_j [\phi , \v{A}] \|_{-1 , 2} \lesssim (1 + \| \v{A} \|_{1,2}) \| \phi \|_{1,2}^2$ For case (b), we use Minkowski's integral inequality, Lemma \ref{lem:Kato-Ponce}, and the Sobolev inequality to find 
\begin{align*}
& \| \v{J}_j [\phi , \v{A}] \|_{m-2,2} \\
& = \left( \int_{\R^3} \left| \int (1 - \Delta_{\v{x}_j})^{\frac{m-2}{2}} \langle \gvsig \phi_{\ul{\v{z}}_j'} , \gvsig \cdot (\v{p} + \v{A}_j) \phi_{\ul{\v{z}}_j'} \rangle_{\C^2} (\v{x}_j) \dd \ul{\v{z}}_j' \right|^2 \dd \v{x}_j \right)^{\frac{1}{2}} \\
& \leq \int \left( \int_{\R^{3}} \left| (1 - \Delta_{\v{x}_j})^{\frac{m-2}{2}} \langle \gvsig \phi_{\ul{\v{z}}_j'} , \gvsig \cdot (\v{p} + \v{A}_j) \phi_{\ul{\v{z}}_j'} \rangle_{\C^2} (\v{x}_j) \right|^2  \dd \v{x}_j \right)^{1/2} \dd \ul{\v{z}}_j' \\
& \lesssim \int \left[ \| \phi_{\ul{\v{z}}_j'} \|_{m-2,6} \| \phi_{\ul{\v{z}}_j'} \|_{1 , 3} + \| \phi_{\ul{\v{z}}_j'} \|_{3} \| \phi_{\ul{\v{z}}_j'} \|_{m-1,6} \right. \\
& \hspace{3cm} \left. + \| \v{A} \|_{m-2 , 6} \| \phi_{\ul{\v{z}}_j'} \|_{6}^2  + \| \v{A} \|_{6} \| \phi_{\ul{\v{z}}_j'} \|_{m-2 , 6} \| \phi_{\ul{\v{z}}_j'} \|_{3} \right] \dd \ul{\v{z}}_j' \\
& \lesssim (1 + \| \v{A} \|_{m,2} ) \int \| \phi_{\ul{\v{z}}_j'} \|_{m,2}^2 \dd \ul{\v{z}}_j' \lesssim (1 + \| \v{A} \|_{m,2}) \| \phi \|_{m,2}^2 . \numberthis \label{eq:Estimate-KG-4}
\end{align*}
Combining (\ref{eq:Estimate-KG-3}) and (\ref{eq:Estimate-KG-4}) we arrive at (\ref{eq:Estimate-KG-1}).

Arguing (\ref{eq:Estimate-KG-2}) in similar to the case $m = 1$ in proving (\ref{eq:Estimate-KG-1}). Specifically, we need to estimate $ \v{J}_j [\phi , \v{A}] - \v{J}_j [\phi' , \v{A}']$ in $L^{\frac{6}{5}}$-norm. We write
\begin{align}\label{eq:Estimate-KG-9}
\v{J}_j [\phi , \v{A}] - \v{J}_j [\phi' , \v{A}'] = - \re{ \sum_{\alpha = 1}^4 \v{F}_j^{\alpha} [\phi , \phi' , \v{A} , \v{A}']  } 
\end{align}
where
\begin{align}
\v{F}_j^1 [\phi , \phi' , \v{A} , \v{A}'] ( \v{x}_j ) = \int \langle \gvsig \left( \phi_{\ul{\v{z}}_j'} - \phi'_{\ul{\v{z}}_j'} \right) , \gvsig \cdot (\v{p} + \v{A}_j) \phi_{\ul{\v{z}}_j'} \rangle_{\C^2} (\v{x}_j) \dd \ul{\v{z}}_j' ,
\end{align}
\begin{align}
\v{F}^2_j [\phi , \phi' , \v{A} , \v{A}'] ( \v{x}_j ) = \int \langle \gvsig \phi_{\ul{\v{z}}_j'}' , \gvsig \cdot \v{p} \left( \phi_{\ul{\v{z}}_j'} - \phi'_{\ul{\v{z}}_j'} \right) \rangle_{\C^2} (\v{x}_j) \dd \ul{\v{z}}_j' ,
\end{align}
\begin{align}
\v{F}^3_j [\phi , \phi' , \v{A} , \v{A}'] ( \v{x}_j ) = \int \langle \gvsig \phi'_{\ul{\v{z}}_j'} , \gvsig \cdot \left( \v{A}_j - \v{A}'_j \right) \phi_{\ul{\v{z}}_j'} \rangle_{\C^2} (\v{x}_j) \dd \ul{\v{z}}_j' ,
\end{align}
\begin{align}
\v{F}^4_j [\phi , \phi' , \v{A} , \v{A}'] ( \v{x}_j ) = \int \langle \gvsig \phi'_{\ul{\v{z}}_j'} , \gvsig \cdot \v{A}'_j \left( \phi_{\ul{\v{z}}_j'} - \phi'_{\ul{\v{z}}_j'} \right) \rangle_{\C^2} (\v{x}_j) \dd \ul{\v{z}}_j' .
\end{align}
Estimating $\v{F}^{\alpha}_j$, for $\alpha = 1, \cdots ,4$, in $L^{\frac{6}{5}}$-norm is straightforward and involves the same strategy used to show (\ref{eq:Estimate-KG-3}). We find
\begin{align}
& \| \v{F}^1_j [\phi , \phi' , \v{A} , \v{A}'] \|_{\frac{6}{5}} \lesssim ( 1 + \| \v{A} \|_{1,2} ) \| \phi \|_{1,2} \| \phi - \phi' \|_{1,2} . \numberthis \label{eq:Estimate-KG-5} \\
& \| \v{F}_j^2 [\phi , \phi' , \v{A} , \v{A}'] \|_{\frac{6}{5}} \lesssim \| \phi' \|_{1,2} \| \phi - \phi' \|_{1,2} , \label{eq:Estimate-KG-6} \\
& \| \v{F}_j^3 [\phi , \phi' , \v{A} , \v{A}'] \|_{\frac{6}{5}} \lesssim \| \phi \|_{1,2} \|\phi' \|_{1,2} \| \v{A} - \v{A}' \|_{1,2} , \label{eq:Estimate-KG-7} \\
& \| \v{F}_j^4 [\phi , \phi' , \v{A} , \v{A}'] \|_{\frac{6}{5}} \lesssim \| \v{A}' \|_{1,2} \| \phi' \|_{1,2} \| \phi - \phi' \|_{1,2} . \label{eq:Estimate-KG-8} 
\end{align}
Estimates (\ref{eq:Estimate-KG-5}) through (\ref{eq:Estimate-KG-8}) imply (\ref{eq:Estimate-KG-2}).
\end{proof}


\subsection{Metric Space, Linearization, and Proof of Theorem \ref{thm:local_exist_MBMP_epsilon}}\label{sub:metricspace}

Let $N , K \geq 1$, $m \in [1 , \infty)$, $\varepsilon > 0$, and $(\phi_0 , \v{a}_0 , \dot{\v{a}}_0) \in \X^m_0$, where $\X_0^m$ is defined by (\ref{def:initconds}), and let $\Z$ and $\ul{\v{R}}$ be defined as in \S\ref{sec:intro}. Given $T , R \in (0, \infty)$, let $\I_T = [0,T]$, and consider the $(T, R)$-dependent space
\begin{align*}
\X^m (T , R) = & \{ (\phi , \v{A} ) \in C_{\I_T} [H^m (\R^{3N})]^{2^N} \times [C_{\I_T} H^m (\R^3 ; \R^3) \cap C_{\I_T}^1 H^{m-1} (\R^3 ; \R^3)] :  \\
& ~~ \max{ \{ \| \phi \|_{\infty ; m, 2 } , \| \v{A} \|_{\infty ; m, 2 } , \| \partial_t \v{A} \|_{\infty ; m-1 , 2 } \} } \leq R , ~ \diver{\v{A} (t)} = 0 ~ \text{a.e.} ~ t \} .
\end{align*}
We emphasize that the $L^{\infty}$-norm in the definition of $\X^m (T , R)$ is taken over the time interval $\I_T$. For $(\phi , \v{A}) \in \X^m (T , R)$ we denote $\v{B} = \curl{\v{A}}$, the magnetic field, and $\tilde{\v{A}} = \Lambda_{\varepsilon}^{-1} \v{A}$, the regularized vector potential. Consider the mapping
\begin{align*}
\Psi : \X^m (T , R) \ni (\phi , \v{A}) \mapsto (\xi , \v{K})
\end{align*}
where
\begin{align*}
\xi (t) & = e^{(i + \varepsilon) t \Delta} \phi_0 + \int_0^t e^{(i + \varepsilon) (t - \tau) \Delta} \left[ \varepsilon \phi (\tau) \E [\phi , \v{A} , \partial_t \v{A}] (\tau) \right. \\
& \hspace{1cm}  \left. - (i + \varepsilon) \left( [ \fL ( \tilde{\v{A}} ) \phi ](\tau) + V(\ul{\v{R}} , \Z) \phi (\tau) + F [\v{A} , \partial_t \v{A}] (\tau) \phi (\tau) \right) \right] \dd \tau  \numberthis \label{eq:CA1.1}
\end{align*}
and
\begin{align*}
\v{K} (t) & = \dot{\s} (t) \v{a}_0 + \s (t) \dot{\v{a}}_0 + 2 \pi \int_0^t \s (t -\tau ) \Lambda_{\varepsilon}^{-1} \Hproj{ \J [ \phi , \tilde{\v{A}} ] } (\tau) \dd \tau . \numberthis \label{eq:CA1.2}
\end{align*}
In (\ref{eq:CA1.1}) and (\ref{eq:CA1.2}), $\E [\phi , \v{A} , \partial_t \v{A}]$ is given by (\ref{def:epsilon_tot_energy}), $\fL (\tilde{\v{A}}) = \sum_{j = 1}^N \cL_j (\tilde{\v{A}})$ where $\cL_j ( \tilde{\v{A}} )$ is given by (\ref{def:D_j}), $\dot{\s}$ and $\s$ are defined in Lemma \ref{lem:energy_for_A_general}, and $\J [\phi , \tilde{\v{A}} ] = \sum_{j = 1}^N \v{J}_j [ \phi , \tilde{\v{A}} ]$ given by (\ref{def:prob_current_compact}). In other words, $\Psi$ maps $(\phi , \v{A}) \in \X^m (T , R)$ into the solution of the linearized system
\begin{align*}
& \partial_t \xi - (i + \varepsilon) \Delta \xi = - (i + \varepsilon) \left( \fL ( \tilde{\v{A}} ) - V(\ul{\v{R}} , \Z) + F [\v{A} , \partial_t \v{A}] \right) \phi + \varepsilon \phi \E [\phi , \v{A} , \partial_t \v{A}]  \\
& \square \v{K} = 8 \pi \alpha^2 \Lambda_{\varepsilon}^{-1} \Hproj{ \J [\phi , \tilde{\v{A}}] } \\
& \xi (0) = \phi_0 , ~~~~ \v{K} (0) = \v{a}_0 , ~~~~ \partial_t \v{K} (0) = \dot{\v{a}}_0 
\end{align*}
At this point we observe that a fixed point of $\Psi$ would give us a proof of the first part of Theorem \ref{thm:local_exist_MBMP_epsilon}. Hence the strategy is to equip $\X^m (T , R)$ with an appropriate metric, prove that, for small enough $T > 0$, $\Psi$ is a contraction on $\X^m (T , R)$ with respect to that metric, and thereby prove that $\Psi$ has a fixed point via the Banach fixed point theorem. We equip $\X^m (T , R)$ with the metric
\begin{align}
d( (\phi , \v{A}) , (\phi', \v{A}') ) = \max{ \{ \| \phi - \phi' \|_{\infty ; 1,2} , \| \v{A} - \v{A}' \|_{\infty ; 1,2} , \| \partial_t \v{A} - \partial_t \v{A}' \|_{\infty; 2} \} }.
\end{align}
Standard functional analysis arguments show that $(\X^m (T , R), d)$ is a complete metric space.

\begin{proof}[Proof of Theorem \ref{thm:local_exist_MBMP_epsilon}]
Fix $\varepsilon > 0$, $m \in [1 , 2]$, and let $(\phi_0 , \v{a}_0 , \dot{\v{a}}_0) \in \X_0^m$. The first task is to demonstrate that we can make $\Psi$ map $\X^m (T , R)$ into itself by choosing $R$ and $T$ appropriately. Indeed, we claim that there exists $R , T_* > 0$ such that for all $T \in (0 , T_*]$ the function $\Psi$ maps $\X^m (T , R)$ into itself, where the time $T_* > 0$ depends on $\varepsilon$, $m$, $N$, $K$, $\alpha$, $\Z$, $\ul{\v{R}}$, and $\| (\phi_0 , \v{a}_0 , \dot{\v{a}}_0 ) \|_{ m,2 \oplus m,2 \oplus m-1,2}$. To this end, let $(\phi_0 , \v{a}_0 , \dot{\v{a}}_0 ) \in \X_0^m$ and $(\phi , \v{A}) \in \X^m (T , R)$, and consider $\Psi (\phi , \v{A}) = (\xi , \v{K})$. 

Observe that $\v{K}$ is divergence-free using the formula (\ref{eq:CA1.2}). Fix $j \in \{1, \cdots , N\}$ and note that
\begin{align*}
\| \Lambda_{\varepsilon}^{-1} \Hproj{ \v{J}_j [\phi , \tilde{\v{A}}] } \|_{m-1 , 2} \leq \frac{1}{\varepsilon} \| \v{J}_j [\phi , \tilde{\v{A}}] \|_{\dot{H}^{m-2}} \lesssim \frac{1}{\varepsilon} \| \v{J}_j [\phi , \tilde{\v{A}}] \|_{m-2 , 2} .
\end{align*}
Therefore estimate (\ref{eq:Estimate-KG-1}) of Lemma \ref{lem:Estimates-KG} gives us
\begin{align}\label{eq:PhiXtoX0}
\| \Lambda_{\varepsilon}^{-1} \Hproj{ \v{J}_j [\phi , \tilde{\v{A}}] (t) } \|_{m-1 , 2} \lesssim (1 + R) R^2 , \hspace{1cm} \forall t \in \I_T,
\end{align}
and thus $\Lambda_{\varepsilon}^{-1} \Hproj{ \v{J}_j [\phi , \tilde{\v{A}}] } \in L_{\I_T}^1 H^{m-1}$. With the previous conclusion we've satisfied the hypotheses of Lemma \ref{lem:energy_for_A_general} and, as a consequence, we have $\v{K} \in C_{\I_T} H^m \cap C_{\I_T}^1 H^{m-1}$ and
\begin{align}\label{eq:PhiXtoX0.5}
\max_{k \in \{0,1\}} \| \partial_t^k \v{K} \|_{\infty ; m-k ,2} \lesssim \| (\v{a}_0 , \dot{\v{a}}_0) \|_{m,2 \oplus m-1 , 2} + \| \Lambda_{\varepsilon}^{-1} \Hproj{ \J [\phi , \tilde{\v{A}} ] } \|_{1; m-1 , 2} 
\end{align}
Combining (\ref{eq:PhiXtoX0.5}) with (\ref{eq:PhiXtoX0}), we conclude the existence of a constant $C_1 > 0$ depending on $\varepsilon , m , N$, and $\alpha$ such that
\begin{align}\label{eq:PhiXtoX1}
\max_{k \in \{0,1\}} \| \partial_t^k \v{K} \|_{\infty ; m-k ,2} \leq C_1 \left[ \| (\v{a}_0 , \dot{\v{a}}_0) \|_{m , 2 \oplus m-1 , 2} + T (1 + R ) R^2 \right] .
\end{align}

We turn to estimating $\| \xi (t) \|_{m,2}$. For notational simplicity, in what follows we abbreviate $\E = \E [\phi , \v{A} , \partial_t \v{A}]$ and $F =  F [\v{A} , \partial_t \v{A}]$. To estimate $\| \xi (t) \|_{m,2}$, we take the $H^m$-norm of the defining formula (\ref{eq:CA1.1}) for $\xi (t)$ and apply (\ref{eq:Estimate-tot-Energy}), (\ref{eq:Estimate-Pauli-3}), (\ref{eq:Estimate-Coulomb}) of Lemmas \ref{lem:Estimates-Energy}, \ref{lem:Estimates-Pauli}, and \ref{lem:Estimates-Coulomb}, respectively. This yields
\begin{align*}
& \| \xi (t) \|_{m, 2} \\
& \hspace{0.1cm} \lesssim \| \phi_0 \|_{m, 2} + \int_0^t \left( \left| (\E + F) (\tau) \right|  \| \phi (\tau) \|_{m,2}  + \| e^{(i + \varepsilon) (t - \tau) \Delta} [ \fL ( \tilde{\v{A}} ) \phi ] (\tau) \|_{m,2}  \right. \\
& \hspace{6cm} \left. + \| e^{(i + \varepsilon) (t - \tau) \Delta}  V (\ul{\v{R}} , \Z) \phi (\tau) \|_{m ,2} \right) \dd \tau \\
& \hspace{0.1cm} \lesssim \| \phi_0 \|_{m, 2} + \int_0^t \left\lbrace \left[ 1 + ( 1 + \| \tilde{\v{A}} (\tau) \|_{1,2} )^2  + \| \v{A} (\tau) \|_{1,2}^2 + \| \partial_t \v{A} (\tau) \|_2^2 \right] \| \phi (\tau) \|_{1,2}^2 \right. \\
& \hspace{6cm} \left. + \| \v{A} (\tau) \|_{1,2}^2 + \| \partial_t \v{A} (\tau) \|_2^2 \right\rbrace  \| \phi (\tau) \|_{m,2} \dd \tau  \\
& \hspace{1cm} + \int_0^t (t-\tau)^{- \frac{1}{4}} \left[ 1 + (t-\tau)^{-\frac{1}{2}} \right] \left( 1 + \| \tilde{\v{A}}  (\tau) \|_{m,2} \right) \| \tilde{\v{A}}  (\tau) \|_{m,2} \| \phi (\tau) \|_{m,2} \dd \tau \\
& \hspace{1cm}  + \int_0^t \left\lbrace 1 +  \left(1 + (t- \tau)^{- \frac{1}{2} }\right) \left((t-\tau)^{- \frac{9}{20}} + (t-\tau)^{-\frac{1}{4}} \right) \right\rbrace \| \phi (\tau) \|_{m,2} \dd \tau  \numberthis \label{eq:PhiXtoX2} . 
\end{align*}
The last estimate (\ref{eq:PhiXtoX2}) allow us to conclude the existence of a constant $C_2 > 0$, depending on $\varepsilon$, $m$, $N$, $K$, $\alpha$, $\ul{\v{R}}$, and $\Z$, such that
\begin{align*}
\| \xi \|_{\infty ; m, 2} & \leq C_2 \left[ \| \phi_0 \|_{m,2} + T \left( 4 +  2 R + 3 R^2 \right) R^3 + \left( T^{\frac{3}{4}} + T^{\frac{1}{4}} \right) (1 + R) R^2 \right. \\
& \left. \hspace{4.5cm} + \left( T + T^{\frac{3}{4}} + T^{\frac{11}{20}} + T^{\frac{1}{4}} + T^{\frac{1}{20}} \right) R \right] , \numberthis \label{eq:PhiXtoX3}
\end{align*}

Considering estimates (\ref{eq:PhiXtoX1}) and (\ref{eq:PhiXtoX3}) choose $R > 0$ such that
\begin{align}
& \max{ \{ \| \phi_0 \|_{m,2} , \| (\v{a}_0 , \dot{\v{a}}_0) \|_{m,2 \oplus m-1 ,2} \} } \leq \frac{R}{2 \max{\{C_1 , C_2\}}} , \label{eq:PhiXtoX4} 
\end{align}
and choose $T_* > 0$ such that
\begin{align*} 
&  T_* (1 + 5 R + 2 R^2 + 3 R^3) R + ( T_*^{\frac{3}{4}} + T_*^{\frac{1}{4}} ) (1+R)R   \\
& \hspace{3cm} + ( T_* + T_*^{\frac{3}{4}} + T_*^{\frac{11}{20}} + T_*^{\frac{1}{4}} + T_*^{\frac{1}{20}} ) ] \leq \frac{1}{2 \max{\{C_1 , C_2\}}} . \numberthis \label{eq:PhiXtoX5} 
\end{align*}
Equations (\ref{eq:PhiXtoX4}) and (\ref{eq:PhiXtoX5}) ensure that $\Psi$ maps $\X^m (T , R)$ into itself for each $T \in (0 , T_*]$. 

We claim that one may further choose a $0 < T_{**} < T_*$ so that $\Psi$ becomes a contraction on $(\X^m (T , R) , d)$ for any $T \in (0 , T_{**}]$. Indeed, fix $T \in (0, T_*]$ and consider two $(\phi , \v{A}) , (\phi' , \v{A}') \in \X^m (T , R)$ and write $\Psi (\phi , \v{A}) = (\xi , \v{K})$ and $\Psi (\phi' , \v{A}' ) = (\xi' , \v{K}')$. Again for notational simplicity we write $\E = \E [\phi , \v{A} , \partial_t \v{A}]$, $\E' = \E [\phi' , \v{A}' , \partial_t \v{A}']$, $F =  F [\v{A} , \partial_t \v{A}]$, and $F' = F[\v{A}' , \partial_t \v{A}']$. Noting (\ref{eq:CA1.1}), (\ref{eq:CA1.2}), $\xi (0) = \xi' (0) = \phi_0$, $\v{K} (0) = \v{K}' (0) = \v{a}_0$, and $\partial_t \v{K} (0) = \partial_t \v{K}' (0) = \dot{\v{a}}_0$, we observe that the difference $\xi - \xi'$ satisfies
\begin{align*}
& (\xi - \xi') (t) \\
& \hspace{0.1cm} = \int_0^t e^{(i + \varepsilon) (t-\tau) \Delta} \lbrace \left[ \varepsilon \left( \phi \E  - \phi' \E' \right) (\tau) - (i + \varepsilon) \left( F \phi - F' \phi' \right) (\tau) \right] \\
& \hspace{0.75cm} - (i + \varepsilon) \left( [ \fL ( \tilde{\v{A}} ) \phi ](\tau) - [ \fL ( \tilde{\v{A}}' ) \phi' ](\tau) - V(\ul{\v{R}} , \Z) \left( \phi - \phi' \right) (\tau) \right) \rbrace \dd \tau , \numberthis \label{eq:Phicontract1}
\end{align*}
and that the difference $\v{K} - \v{K}'$ satisfies
\begin{align}\label{eq:Phicontract1.25}
(\v{K} - \v{K}') (t) =  2 \pi \int_0^t \s (t - \tau) \Lambda_{\varepsilon}^{-1} \Hproj{ \left( \J [ \phi , \tilde{\v{A}} ] - \J [\phi' , \tilde{\v{A}}'] \right) } (\tau) \dd \tau .
\end{align}

We need to control $d((\xi , \v{K}) , (\xi' , \v{K}'))$ by $d((\phi , \v{A}) , (\phi' , \v{A}'))$ to ultimately argue that $\Psi$ can be turned into a contraction. Estimating $\| \v{K} - \v{K}' \|_{\infty ; 1,2}$ and $\| \partial_t ( \v{K} - \v{K}') \|_{\infty;2}$ is a straightforward application of Lemma \ref{lem:energy_for_A_general} and estimate (\ref{eq:Estimate-KG-2}) of Lemma \ref{lem:Estimates-KG}. We find
\begin{align*}
\max_{k = 0 ,1} \| \partial_t^k \left( \v{K} - \v{K}' \right) \|_{\infty ; 1 - k , 2} & \lesssim \| \J [\phi , \tilde{\v{A}}] - \J [\phi', \tilde{\v{A}}'] \|_{1 ; -1 , 2} \\
& \lesssim T R \left[ 2 + 3 R \right] d ( (\phi , \v{A}) , (\phi' , \v{A}') ) . \numberthis \label{eq:K-diff} 
\end{align*}

To estimate $\| \xi - \xi' \|_{\infty ; 1 , 2}$ we start with the formula (\ref{eq:Phicontract1}) for $\xi - \xi'$ and use the triangle inequality to find 
\begin{align*}
& \| (\xi - \xi') (t) \|_{1, 2} \\
& \lesssim \int_0^t \left( \left| (\E + F) (\tau) \right| \| (\phi - \phi') (\tau) \|_{1,2} + \left| (\E  -  \E' + F - F') (\tau) \right| \| \phi' (\tau) \|_{1,2} \right. \\
& \hspace{1cm} + \|  e^{(i + \varepsilon) (t-\tau) \Delta} \left( [ \fL ( \tilde{\v{A}} ) \phi ] - [ \fL ( \tilde{\v{A}}' ) \phi' ] \right) (\tau) \|_{1,2} \\
& \hspace{4cm} \left. + \| e^{(i + \varepsilon) (t-\tau) \Delta} V(\ul{\v{R}} , \Z) \left( \phi - \phi' \right) (\tau) \|_{1,2} \right) \dd \tau . \numberthis \label{eq:Phicontract1.5}
\end{align*}
Using the same strategy that yielded (\ref{eq:PhiXtoX2}) and then (\ref{eq:PhiXtoX3}), we apply (\ref{eq:Estimate-tot-Energy}) through (\ref{eq:Estimate-Energy-2}), (\ref{eq:Estimate-Pauli-3}), (\ref{eq:Estimate-Coulomb}) of Lemmas \ref{lem:Estimates-Energy}, \ref{lem:Estimates-Pauli}, and \ref{lem:Estimates-Coulomb}, respectively, to find
\begin{align*}
\| (\xi - \xi') \|_{\infty ; 1,2} & \lesssim \{ T (2 + 10 R + 12 R^2 + 7 R^3) R + ( T^{\frac{3}{4}} + T^{\frac{1}{4}}) ( 2 + 3 R )R  \\
& \hspace{1.5cm} + T + T^{\frac{3}{4}} + T^{\frac{11}{20}} + T^{\frac{1}{4}} + T^{\frac{1}{20}} \} d ( (\phi , \v{A}) , (\phi' , \v{A}') ) , \numberthis \label{eq:xi-diff} 
\end{align*}

Combining estimates (\ref{eq:K-diff}) through (\ref{eq:xi-diff}) we find
\begin{align}\label{eq:Phicontract2}
d( (\xi , \v{K}) , (\xi', \v{K}') ) \leq C f(T,R) d( (\psi , \v{A}) , (\psi', \v{A}') ) ,
\end{align}
where $C > 0$ is a constant depending on $\varepsilon$, $N$, $K$, $\alpha$, $\ul{\v{R}}$, and $\Z$, and
\begin{align*}
f(T , R) & = T (4 + 13 R + 12 R^2 + 7 R^3) R + ( T^{\frac{3}{4}} + T^{\frac{1}{4}} ) (2 + 3 R)R \\
& \hspace{5cm} + T + T^{\frac{3}{4}} + T^{\frac{11}{20}} + T^{\frac{1}{4}} + T^{\frac{1}{20}} . \numberthis \label{eq:PhiContrac-def-f} 
\end{align*}
Choosing $0 < T_{**} < T_*$ so that $f(T_{**},R) = \frac{1}{2 C}$ ensures that $\Psi$ satisfies
\begin{align*} 
d( \Psi (\psi , \v{A})  , \Psi (\psi' , \v{A}')  ) \leq \frac{1}{2} d( (\psi , \v{A}) , (\psi', \v{A}') ) .
\end{align*} 
Consequently, $\Psi$ is a contraction mapping on $(\X^m (T , R) , d)$ for each $T \in (0, T_{**}]$. By the Banach fixed point theorem, for each $T \in (0,T_{**}]$, there exists a unique $(\phi , \v{A}) \in \X^m (T , R)$ that satisfies $\Psi (\phi , \v{A}) = (\phi , \v{A})$. In other words, the pair $(\phi , \v{A})$ satisfies the equations
\begin{align}
& \p_t \phi =  - (i + \varepsilon) \Ham (\v{A}) \phi + \varepsilon \phi \E [ \phi , \v{A} , \partial_t \v{A} ] , \label{eq:proof_thm2_1} \\
&  \square \v{A} = 8 \pi \alpha^2 \Lambda^{-1}_{\varepsilon} \Hproj{ \J [ \phi , \tilde{\v{A}} ] } , \label{eq:proof_thm2_2}  \\
& \diver{\v{A}} = 0 , \label{eq:proof_thm2_3}  \\
& (\phi , \v{A} , \partial_t \v{A}) |_{t = 0} = (\phi_0 , \v{a}_0 , \dot{\v{a}}_0) , \label{eq:proof_thm2_4} 
\end{align}
where $\Ham (\v{A})$ is given by (\ref{def:epsilon_Hamiltonian}). By a standard extension argument, we have the existence of a maximal time $T_{\mr{max}} > 0$ for which we have a unique solution
\begin{align*}
(\phi , \v{A}) \in C_{[0, T_{\mr{max}})} [H^m (\R^{3N}) ]^{2^N} \times [ C_{[0, T_{\mr{max}})} H^m (\R^3 ; \R^3) \cap C^1_{ [0,T_{\mr{max}})} H^{m-1} (\R^3 ; \R^3) ]
\end{align*}
to (\ref{eq:proof_thm2_1}) through (\ref{eq:proof_thm2_4}), and the blow-up alternative holds. This gives us the first portion of Theorem \ref{thm:local_exist_MBMP_epsilon}. What is left to show is the convergence part of Theorem \ref{thm:local_exist_MBMP_epsilon}.

First, note that it suffices to prove the convergence part of Theorem \ref{thm:local_exist_MBMP_epsilon} for each $t \in (0 , T_{**}]$. Let $(\phi_0 , \v{a}_0 , \dot{\v{a}}_0) \in \X_0^1$ and consider the corresponding solution $(\phi , \v{A}) \in \X^m (T_{**} , R)$, where $R$ satisfies (\ref{eq:PhiXtoX4}). Consider a sequence of initial data $\{ (\phi_{0}^j , \v{a}^{j}_0 , \v{a}^{j}_1) \}_{j \geq 1}\subset \X_0^m$ and let $\{ ( \phi^j , \v{A}^j ) \}_{j \geq 1} \subset C_{\I_{T_{**}}} H^m \times [C_{\I_{T_{**}}} H^m \cap C_{\I_{T_{**}}}^1 H^{m-1}]$ denote the corresponding sequence of solutions. Suppose that
\begin{align*}
\| ( \phi_0 - \phi_{0}^j , \v{a}_0 - \v{a}^{j}_0 , \dot{\v{a}}_0 - \v{a}^{j}_1)  \|_{1,2 \oplus 1,2 \oplus 2} \xrightarrow{j \rightarrow \infty} 0 .
\end{align*}
Observe that if $j$ is sufficiently large then (\ref{eq:PhiXtoX4}) holds with $(\phi_0 , \v{a}_0 , \dot{\v{a}}_0)$ replaced by $(\phi_0^j , \v{a}_0^j , \dot{\v{a}}_0^j)$, and therefore $(\phi^j , \v{A}^j) \in \X^m (T_{**} , R)$ when $j$ is sufficiently large. 

Using identical estimates that yielded (\ref{eq:xi-diff}) we arrive at
\begin{align*}
& \| \phi - \phi^j \|_{\infty ; 1,2} \\
& \lesssim \| \phi_0 - \phi^j_0 \|_{1,2} + \{ T_{**} (4 + 13 R + 12 R^2 + 7 R^3) R + ( T_{**}^{\frac{3}{4}} + T_{**}^{\frac{1}{4}}) ( 2 + 3 R )R  \\
& \hspace{3.5cm} + T_{**} + T_{**}^{\frac{3}{4}} + T_{**}^{\frac{11}{20}} + T_{**}^{\frac{1}{4}} + T_{**}^{\frac{1}{20}} \} d ( (\phi , \v{A}) , (\phi^j , \v{A}^j) ) , \numberthis \label{eq:phi-phij-diff} 
\end{align*}
where the $L^\infty$-norm in time is taken over the interval $(0,T_{**}]$. Likewise,
\begin{align*}
\max_{k = 0 ,1} \| \partial_t^k \left( \v{A} - \v{A}^j \right) \|_{\infty ; 1 - j , 2} & \lesssim \| ( \v{a}_0 - \v{a}^{j}_0 , \dot{\v{a}}_0 - \dot{\v{a}}_0^j ) \|_{1,2 \oplus 1,2} \\
& \hspace{1cm} + T_{**} R \left[ 2 + 3 R \right] d ( (\phi , \v{A}) , (\phi^j , \v{A}^j) ) . \numberthis \label{eq:A-Aj-diff} 
\end{align*}
Estimates (\ref{eq:phi-phij-diff}) and (\ref{eq:A-Aj-diff}) together yield
\begin{align*}
& \| ( \psi - \psi^j , \v{A} - \v{A}^j , \partial_t \v{A} - \partial_t \v{A}^j ) \|_{\infty ; 1,2 \oplus 1,2 \oplus 2} \\
& \hspace{1cm} \leq S_1 \| ( \phi_0 - \psi_{0}^j , \v{a}_0 - \v{a}^{j}_0 , \dot{\v{a}}_0 - \v{a}^{j}_1) \|_{1,2 \oplus 1,2 \oplus 2} \\
& \hspace{2cm} + S_2 f(T_{**} , R) \| ( \psi - \psi^j , \v{A} - \v{A}^j , \partial_t \v{A} - \partial_t \v{A}^j ) \|_{\infty ; 1,2 \oplus 1,2 \oplus 2} ,
\end{align*}
where the function $f$ is defined by (\ref{eq:PhiContrac-def-f}) and $S_2$ is the \textit{same} constant appearing in (\ref{eq:Phicontract2}). Since $T_{**}$ was choosen so that $f(T_* , R) = 1 / (2S_2)$, we conclude
\begin{align*}
\| ( \psi - \psi^j , \v{A} - \v{A}^j , \partial_t \v{A} - \partial_t \v{A}^j ) \|_{\infty ; 1,2 \oplus 1,2 \oplus 2} \xrightarrow{j \rightarrow \infty} 0 .
\end{align*}
This gives us the desired convergence for each $T \in (0 , T_{**}]$.

\end{proof}


\section{Charge Conservation, Energy Dissipation, and Uniform Bounds in the Energy Class}\label{sec:conserve}

In this section we prove the conservation and dissipation laws for the $\varepsilon$-modified system (\ref{eq:MBMP1_epsilon}) through (\ref{eq:MBMP3_epsilon}) as stated in Theorem \ref{thm:MBMP_epsilon_dissipation-laws}. It will be useful to recall that if $\phi$ is of a definite symmetry type (e.g., $\phi$ is completely antisymmetric, as will be the case in the proof of Theorem \ref{thm:MBMP_epsilon_dissipation-laws}), then the kinetic energy $T$, as defined in (\ref{def:epsilon_tot_kinetic}), of the state $(\phi , \v{A})$ reduces to $T[\phi , \v{A}] = N \| \gvsig_1 \cdot (\v{p}_1 + \v{A}_1) \phi \|_2^2$. Likewise, the total probability current density $\J [ \phi , \v{A} ] = \sum_{j = 1}^N \v{J}_j [ \phi , \v{A} ]$, as defined in (\ref{def:prob_current_compact}), will reduce to
\begin{align*}
\J [\phi , \v{A}] = - N \re{ \int \langle \gvsig \psi_{\ul{\v{z}}_1'} (t) , \gvsig \cdot (\v{p} + \v{A} (t)) \psi_{\ul{\v{z}}_1'} (t) \rangle_{\C^2} \dd \ul{\v{z}}_1' } .
\end{align*} 
Such compact formulas will be convenient for us in the proof of Theorem \ref{thm:MBMP_epsilon_dissipation-laws}. 

The crucial result that is needed to derive the uniform bounds in Theorem \ref{thm:MBMP_epsilon_dissipation-laws} is the following uniform bound on the Coulomb energy $V[\phi]$, as defined in (\ref{def:epsilon_tot_coulomb}). Such a bound is a direct consequence of the energetic stability estimates described in (\ref{eq:one-electron_stability_estimate}) and (\ref{eq:general_stability_estimate}).
\begin{lem}[Bound on the Coulomb Energy]\label{lem:bound_on_coulomb}
Suppose $\alpha \leq 0.06$ and $\alpha^2 \max{\Z} \leq 0.041$. Let $\{ (\phi^{n} , \v{A}^{\!n}) \}_{n \geq 1} \subset  \fC_N$, where $\fC_N$ is defined by (\ref{def:function_space_C}), and assume that
\begin{align*}
T[\phi^{n} , \v{A}^{\!n}] + V[\phi^{n}] + F [\v{A}^{\!n} , \v{0}] \leq E_0 (\alpha) ,
\end{align*}
where $E_0 (\alpha)$ is a constant depending only on $\alpha$, $N$, $K$, $\Z$, and $\ul{\v{R}}$. Then the sequence of Coulomb energies $\{ V[\phi^{n}] \}_{n=1}^{\infty}$ is uniformly bounded, $\sup_n |V[\phi^n]| < \infty$.
\end{lem}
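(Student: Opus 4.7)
The plan is to show that $\{V[\phi^n]\}_{n \geq 1}$ is bounded uniformly in $n$, in both directions. The upper bound is essentially free: writing $T^n := T[\phi^n, \v{A}^n]$ and $F^n := F[\v{A}^n, \v{0}]$, the positivity $T^n, F^n \geq 0$ together with the hypothesis yields $V[\phi^n] \leq E_0(\alpha)$. The substantive part is the uniform lower bound, which I would derive via two applications of the Lieb-Loss-Solovej stability estimate (\ref{eq:general_stability_estimate}): one with a spatial dilation to control $T^n$, and one with a perturbation of the fine-structure constant to control $F^n$.

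For the bound on $T^n$, the key point is that the stability constant $C(\alpha) N^{1/3} K^{2/3}$ in (\ref{eq:general_stability_estimate}) is independent of the nuclear positions $\ul{\v{R}}$. For $\lambda > 0$, introduce the dilated pair $\phi^n_\lambda(\ul{\v{x}}) = \lambda^{3N/2} \phi^n(\lambda \ul{\v{x}})$ and $\v{A}^n_\lambda(\v{x}) = \lambda \v{A}^n(\lambda \v{x})$, which remains in $\fC_N$ (antisymmetry, normalization, and $\diver{\v{A}^n_\lambda} = 0$ are preserved). Straightforward changes of variable yield the scaling relations $T[\phi^n_\lambda, \v{A}^n_\lambda] = \lambda^2 T^n$, $F[\v{A}^n_\lambda, \v{0}] = \lambda F^n$, and, after scaling the nuclei, $V[\phi^n_\lambda, \Z, \ul{\v{R}}/\lambda] = \lambda V^n$. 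Applying (\ref{eq:general_stability_estimate}) to $(\phi^n_\lambda, \v{A}^n_\lambda)$ with the admissible nuclear configuration $\ul{\v{R}}/\lambda$ and dividing by $\lambda$ then gives the one-parameter family
\[ \lambda T^n + V^n + F^n \geq -C_0/\lambda \qquad \text{for all } \lambda > 0, \]
with $C_0 = C(\alpha) N^{1/3} K^{2/3}$. Subtracting the hypothesis $T^n + V^n + F^n \leq E_0$ yields $(1 - \lambda) T^n \leq E_0 + C_0/\lambda$; choosing any $\lambda \in (0,1)$, say $\lambda = 1/2$, produces the uniform bound $T^n \leq 4 C_0 + 2 E_0$.

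For the bound on $F^n$, I would exploit the fact that the numerical thresholds $0.06$ and $0.041$ in (\ref{eq:general_stability_estimate}) are not sharp, which lets us select a slightly larger effective coupling $\alpha' > \alpha$ satisfying $\alpha' \leq 0.06$ and $(\alpha')^2 \max \Z \leq 0.041$. Since the field energy depends on the coupling only through the prefactor $(8\pi\alpha^2)^{-1}$, applying stability with $\alpha'$ to the unaltered pair $(\phi^n, \v{A}^n)$ reads $T^n + V^n + (\alpha/\alpha')^2 F^n \geq -C(\alpha')$. Subtracting the hypothesis and using $\alpha' > \alpha$ yields $[1 - (\alpha/\alpha')^2] F^n \leq E_0 + C(\alpha')$, which bounds $F^n$ uniformly. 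Combining the resulting uniform bounds on $T^n$ and $F^n$ with the original stability inequality $T^n + V^n + F^n \geq -C_0$ gives $V^n \geq -C_0 - T^n - F^n$, the desired uniform lower bound.

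The main obstacle is the coupling-perturbation step: it relies on the non-sharpness of the LLS numerical thresholds in order to produce slack $\alpha' > \alpha$. If the hypotheses on $\alpha$ and $\max \Z$ were exactly saturated and the LLS bound degenerated at the threshold, one would need either a jointly optimized dilation-coupling scaling (tracking both $\lambda$ and $\alpha'$ simultaneously in a single application of stability) or a sharpened stability inequality. Under the stated hypotheses and given the mild non-sharpness of the constants produced by the LLS method, however, the two-step argument above should proceed without further modification.
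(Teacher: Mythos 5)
Your upper bound $V[\phi^n]\le E_0(\alpha)$ and your dilation argument for the kinetic energy are both correct: the pair $(\phi^n_\lambda,\v{A}^n_\lambda)$ with nuclei moved to $\ul{\v{R}}/\lambda$ stays in the admissible class, the constant in (\ref{eq:general_stability_estimate}) is independent of the nuclear positions, and $(1-\lambda)T^n\le E_0+C_0/\lambda$ with $\lambda=1/2$ gives a uniform bound on $T^n$. This is in fact a clean, direct version of what the paper extracts from the very same scaling (the paper instead optimizes over $\lambda$ to get $(V+F)^2\le 4C(\alpha)T$ and then runs a proof by contradiction, rescaling a putative bad sequence by $\lambda_n=1/|V_n|$).

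The genuine gap is in your bound on $F^n$. You need an admissible coupling $\alpha'>\alpha$, i.e.\ $\alpha'\le 0.06$ and $(\alpha')^2\max\Z\le 0.041$, and such an $\alpha'$ exists only when the hypotheses of the lemma hold with \emph{strict} inequality. The lemma permits $\alpha=0.06$ or $\alpha^2\max\Z=0.041$ exactly, and then there is no slack; the appeal to ``non-sharpness'' of the Lieb--Loss--Solovej thresholds is not something you can cite, since (\ref{eq:general_stability_estimate}) is only available for couplings in the stated range and your argument would require an unproven strengthening of it. Reversing the perturbation to $\alpha'<\alpha$ (which is always admissible) does not rescue your step: stability at $\alpha'<\alpha$ carries a \emph{larger} field-energy coefficient $(\alpha/\alpha')^2>1$, so subtracting the hypothesis produces only a lower bound on $F^n$, which is vacuous. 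For comparison, the paper does take $\alpha'<\alpha$, but deploys it inside the contradiction argument, comparing the two-sided energy bounds at the two couplings for the rescaled sequence to force the rescaled field energy to vanish; the weight of the $\alpha$-perturbation is thus carried at a different point of the argument. To keep your direct architecture you must either assume strict inequalities in the hypotheses or replace the coupling-perturbation step by an argument that bounds $F^n$ (equivalently, bounds $V^n$ from below) without enlarging $\alpha$.
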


\begin{proof}
Throughout we abbreviate $F [ \v{A} , \v{0} ] = F [ \v{A} ]$. Consider $(\phi , \v{A}) \in \BW{N} [H^1 (\R^3 )]^2 \times \dot{H}^1 (\R^3 ; \R^3)$ with $\| \phi \|_2 = 1$. The energetic stability estimates (\ref{eq:one-electron_stability_estimate}) and (\ref{eq:general_stability_estimate}) give us the lower bound
\begin{align}\label{eq:lem10_1}
T[\phi , \v{A}] + V[\phi] + F [\v{A}] \geq - C(\alpha) ,
\end{align}
where $C(\alpha) > 0$ is a constant that depends on $\alpha$, $\Z$, $N$, and $K$ but is independent of $\ul{\v{R}}$, $\phi$, and $\v{A}$. We claim (\ref{eq:lem10_1}) implies
\begin{align}\label{eq:lem10_2}
\left( V[\phi] + F [\v{A}] \right)^2 \leq 4 C (\alpha) T[\phi , \v{A}] .
\end{align}
Indeed, for $\lambda > 0$, consider the scaling $\phi_{\lambda} (\ul{\v{z}}) = \lambda^{3N/2} \phi (\lambda \ul{\v{z}})$ and $\v{A}_{\lambda} (\v{y}) = \lambda \v{A} (\lambda \v{y})$. Under this scaling
\begin{align*}
T[\phi_{\lambda} , \v{A}_{\lambda}] + V[\phi_{\lambda}] + F [\v{A}_{\lambda}] = \lambda^2 T[\phi , \v{A}] + \lambda \left( V[\phi] + F [\v{A}] \right) \geq - C(\alpha) 
\end{align*}  
Minimizing over $\lambda$ in the previous expression yields (\ref{eq:lem10_2}).

Let $\{ (\phi^{n} , \v{A}^{\!n}) \}_{n \geq 1} \subset  \BW{N} [H^1 (\R^3 )]^2 \times \dot{H}^1 (\R^3 ; \R^3)$ be a sequence such that $\| \phi^{n} \|_2 = 1$ and
\begin{align}\label{eq:lem10_4}
T_n + V_n +  F_n \leq E_0 (\alpha) ,
\end{align}
where $T_n \equiv T[\phi^{n} , \v{A}^{\!n}]$, $V_n \equiv V[\phi^{n}]$, and $F_n \equiv F[\v{A}^{\!n}]$. Suppose, to the contrary, that $|V_n| \rightarrow \infty$ as $n \rightarrow \infty$. Condition (\ref{eq:lem10_4}) then implies that we necessarily have $V_n \rightarrow - \infty$. Set $\lambda_n = 1/ |V_n|$ and note $\lambda_n \rightarrow 0$ as $n \rightarrow \infty$. Consider the scaling $\Phi^{n} (\ul{\v{z}}) = \lambda_n^{3N/2} \phi^{n} ( \lambda_n \ul{\v{z}} )$ and $\gv{\alpha}_n (\v{y}) = \lambda_n \v{A}^{\!n} (\lambda_n \v{y})$. Moreover, from (\ref{eq:lem10_1}) and (\ref{eq:lem10_4}) we have (for large enough $n$)
\begin{align}\label{eq:lem10_5}
- C(\alpha) \lambda_n \leq \frac{t_n}{\lambda_n} - 1 + \frac{1}{\alpha^2} f_n \leq E_0 (\alpha) \lambda_n 
\end{align}
where $t_n = T [\Phi^{n} , \gv{\alpha}_n] = \lambda_n^2 T_n$ and $f_n = \| \gv{\alpha}_n \|_2^2 / (8 \pi) = \lambda_n \alpha^2 F_n$. Moreover, if we pick $\alpha > \alpha'$, then we likewise have 
\begin{align}\label{eq:lem10_6}
- C(\alpha') \lambda_n \leq \frac{t_n}{\lambda_n} - 1 + \frac{1}{\alpha'^2} f_n \leq E_0 (\alpha') \lambda_n .
\end{align}
Subtracting (\ref{eq:lem10_5}) from (\ref{eq:lem10_6}) we conclude
\begin{align}
- ( C(\alpha') + E_0 (\alpha) ) \lambda_n \leq \left( \frac{1}{\alpha'^2} - \frac{1}{\alpha^2} \right) f_n \leq (E_0 (\alpha') + C(\alpha)) \lambda_n ,
\end{align} 
and thus $f_n \rightarrow 0$ as $n \rightarrow \infty$. Feeding this back into (\ref{eq:lem10_6}) we conclude
\begin{align}\label{eq:lem10_8}
\lim_{n \rightarrow \infty} \frac{t_n}{\lambda_n}  = 1 .
\end{align}
However, (\ref{eq:lem10_2}) implies
\begin{align*}
\left( \frac{f_n}{\alpha^2}  - 1 \right)^2 \leq 4 C(\alpha) t_n ,
\end{align*}
and as a consequence
\begin{align*}
\liminf_{n \rightarrow \infty} t_n \geq \frac{1}{4 C(\alpha)} .
\end{align*}
This implies that $t_n / \lambda_n \rightarrow \infty$ as $n \rightarrow \infty$, which contradicts (\ref{eq:lem10_8}). 
\end{proof}

\begin{proof}[Proof of Theorem \ref{thm:MBMP_epsilon_dissipation-laws}]
Fix $\varepsilon > 0$ and $m \in [1 , 2]$. Let $(\phi_0 , \v{a}_0 , \dot{\v{a}}_0) \in \X^m_0$, $\phi_0 \in \bigwedge^N [H^m (\R^3)]^2$, and $\| \phi_0 \|_2 = 1$. Let $(\phi , \v{A})$ be the corresponding solution on $[0 , T)$ to (\ref{eq:MBMP1_epsilon}) through (\ref{eq:MBMP3_epsilon}) as given by Theorem \ref{thm:local_exist_MBMP_epsilon}. It is straightforward to verify that $\partial_t \phi (t) \in [ H^{-m} (\R^{3N}) ]^{2^N}$ since $\Ham (\v{A} (t)) \psi (t) \in H^{-m} (\R^{3N})$ for each $t \in [0,T)$. Therefore, we may compute
\begin{align}\label{eq:proof_thm_3(0)}
\frac{\dd }{ \dd t} \| \phi \|_{2}^2 = 2 \re{ \langle \phi , \partial_t \phi \rangle } = 2 \varepsilon ( \| \phi \|_{2}^2 - 1 ) \langle \phi , \Ham (\v{A}) \phi \rangle_{L^2} .
\end{align}
Since $| \langle \phi (t) , \Ham (\v{A}(t)) \phi (t) \rangle | \leq \| \phi (t) \|_{m,2} \| \Ham (\v{A} (t)) \phi (t) \|_{-m,2} < \infty$ for each $t \in [0,T)$, and $\| \phi_0 \|_{2} = 1$, (\ref{eq:proof_thm_3(0)}) implies $\| \phi (t) \|_{2} = 1$.

Consider the case $m = 2$. In this case we may take the time-derivaitve of the total energy $\E = \E [ \phi , \v{A} , \partial_t \v{A}]$, as defined in (\ref{def:epsilon_tot_energy}), to find
\begin{align*}
\frac{\dd \E}{ \dd t} & = 2 \re{ \langle \partial_t \phi , \Ham (\v{A}) \phi \rangle } + 2 N \re{ \langle \gvsig \cdot (\v{p} + \v{A}) \phi , ( \gvsig \cdot \partial_t \v{A} ) \phi \rangle } + \partial_t F [\v{A} , \partial_t \v{A}] \\
& = - 2 \varepsilon (  \| \Ham (\v{A}) \phi \|_{2}^2 - \langle \phi  , \Ham (\v{A}) \phi \rangle^2 ) \\
& \hspace{0.5cm} + 2 N \re{ \langle \gvsig \cdot (\v{p} + \v{A}) \phi , ( \gvsig \cdot \partial_t \v{A} ) \phi \rangle } + \partial_t F [\v{A} , \partial_t \v{A}] \numberthis \label{eq:proof_thm_3(1)} .
\end{align*}
Using that $\v{A}$ satisfies the wave equation (\ref{eq:MBMP2_epsilon}) we can show that the last two terms in (\ref{eq:proof_thm_3(1)}) cancel each other. From (\ref{eq:MBMP2_epsilon}) through (\ref{eq:MBMP3_epsilon}),
\begin{align*}
\partial_t F [\v{A} , \partial_t \v{A}] & = 2 \frac{1}{8 \pi \alpha^2} \langle \square \v{A} , \partial_t \v{A} \rangle_{L^2} \\
& = 2 \langle \Lambda_{\varepsilon}^{-1} \Hproj{ \J [\phi , \tilde{\v{A}}] } , \partial_t \v{A} \rangle_{L^2} \\
& = - 2 N \langle \re{ \int \langle \gvsig \phi_{\ul{\v{z}}_1'} , \gvsig \cdot (\v{p} + \v{A}) \phi_{\ul{\v{z}}_1'} \rangle_{\C^2} \dd \ul{\v{z}}_1' } , \partial_t \tilde{\v{A}} \rangle_{L^2} \\
& = - 2 N \re{ \langle \gvsig \cdot (\v{p} + \v{A}) \phi , ( \gvsig \cdot \partial_t \v{A} ) \phi \rangle_{L^2} } \numberthis \label{eq:proof_thm_3(2)} .
\end{align*}
Plugging (\ref{eq:proof_thm_3(2)}) into (\ref{eq:proof_thm_3(1)}) we arrive at
\begin{align*}
\frac{\dd \E}{\dd t} = - 2 \varepsilon ( \| \Ham (\v{A}) \phi \|_{2}^2 - \langle \phi  , \Ham (\v{A}) \phi \rangle^2_{L^2} ) ,
\end{align*}
which upon integrating yields (\ref{eq:Diss_Energy}). 

Suppose $\alpha \leq 0.06$ and $\alpha^2 \max{\Z} \leq 0.041$. For $m = 2$, the bounds (\ref{eq:uniform-bounds}) follow from the energy dissipation (\ref{eq:Diss_Energy}) and Lemma \ref{lem:bound_on_coulomb} as follows. First we verify that hypothesis of Lemma \ref{lem:bound_on_coulomb}. For a while we include the $\varepsilon$ and $t$ dependence of $\phi$ and $\v{A}$ for clarity. By previous results $\| \phi^{\varepsilon} (t) \|_{2} = 1$ (this holds for any $m \in [1 , 2]$). Moreover, we note that
\begin{align*}
F[\tilde{\v{A}}^{\!\varepsilon} , \v{0}] \leq F[\v{A}^{\!\varepsilon} , \v{0}] \leq F[ \v{A}^{\!\varepsilon} , \partial_t \v{A}^{\!\varepsilon}] ,
\end{align*} 
and $\langle \phi^{\varepsilon} , \Ham^{\varepsilon} (\v{A}^{\!\varepsilon}) \phi^{\varepsilon} \rangle^2_{L^2} \leq \| \Ham^{\varepsilon} (\v{A}^{\!\varepsilon}) \phi^{\varepsilon} \|_{2}^2$. Therefore, by the results of \cite{LiebLossSolovej1995} and the dissipation of energy (\ref{eq:Diss_Energy}), we arrve at
\begin{align*}
- C(\alpha) \leq T[\phi^{\varepsilon} (t) , \tilde{\v{A}}^{\!\varepsilon} (t)] - V[\phi^{\varepsilon} (t)] + F[ \tilde{\v{A}}^{\!\varepsilon} (t) , \v{0} ] \leq \E_0 (\alpha) .
\end{align*}
Consequently, Lemma \ref{lem:bound_on_coulomb} tells us that
\begin{align}\label{eq:proof_thm_3_3}
|V[\phi^{\varepsilon} (t)]| = |\langle \phi^{\varepsilon} (t) , V (\ul{\v{R}} , \Z) \phi^{\varepsilon} (t) \rangle_{L^2} | \leq C
\end{align}
where $C$ is a finite constant depending on the initial data and $\alpha$, but \textit{independent} of $\varepsilon$ and $t$. Proceeding we will drop the $\varepsilon$ and $t$ dependence.

The bound (\ref{eq:proof_thm_3_3}) immediately gives us the second estimate in (\ref{eq:uniform-bounds}). Indeed, using the bound on the Coulomb energy we find
\begin{align*}
F [\v{A} , \partial_t \v{A}] \leq |\E_0| + |V[\phi]| \leq C_2 ,
\end{align*}
where $C_2 = |\E_0| + C$. This, in turn, yields the third estimate in (\ref{eq:uniform-bounds}) by differentiation:
\begin{align*}
\frac{d}{dt} \| \v{A} \|_{2}^2 = 2 \langle \v{A} , \partial_t \v{A} \rangle \leq 2 \| \v{A} \|_{2} \| \partial_t \v{A} \|_{2} \leq 2 \| \v{A} \|_{2} \sqrt{C_2} .
\end{align*}
Hence,
\begin{align*}
\| \v{A} \|_{2} \leq C_3 (1 + t) , 
\end{align*}
where $C_3 = \max{\{ \|\v{a}_0 \|_2 , \sqrt{C_2} \} }$. The first estimate in (\ref{eq:uniform-bounds}) requires more care, but essentially boils down to the estimate $\max{\{ T , F \}} \leq C_2$. First note that $\| \nabla \phi \|_2 = \sqrt{N} \| \nabla_{1} \phi \|_{2}$, and hence we focus on estimating $\| \nabla_1 \phi \|_2$. Let $\delta > 0$. Using H\"{o}lder's inequality, Sobolev's inequality, and Young's inequality we find
\begin{align*}
\| \nabla_1 \phi \|_2 & \leq \| \gvsig_1 \cdot (\v{p}_1 + \tilde{\v{A}}_1) \phi \|_2 + \| \tilde{\v{A}}_1 \phi \|_2 \\
& \leq \sqrt{ \frac{ T [\phi , \tilde{\v{A}}] }{N} } + \| \tilde{\v{A}} \|_6 \left( \int \sum_{s_1 = 1}^2 \left( \int_{\R^3} |\phi_{\ul{\v{z}}_1'} (\v{x}_1 , s_1) |^3 \dd \v{x}_1 \right)^{\frac{2}{3}} \dd \ul{\v{z}}_1' \right)^{\frac{1}{2}} \\
&  \leq \sqrt{ \frac{C_2}{N} } + \sqrt{\frac{S_3^{3} C_2}{16 \pi \alpha^2}} \left( \frac{1}{\delta} + \delta \| \v{p}_1 \phi \|_2 \right) , \numberthis \label{eq:proof_thm_3_4}
\end{align*} 
where $S_3$ is the constant appearing in Sobolev's inequality: $\| f \|_6 \leq S_3 \| \nabla f \|_2$, $f : \R^3 \rightarrow \C$. Choosing $\delta = \sqrt{ 4 \pi \alpha^2 / ( S_3^3 C_2 )}$ in (\ref{eq:proof_thm_3_4}) and rearranging, we arrive at the first estimate in (\ref{eq:uniform-bounds}). That the uniform estimates in (\ref{eq:uniform-bounds}) hold for $1 \leq m < 2$ follows immediately from the convergence result in Theorem \ref{thm:local_exist_MBMP_epsilon}. The last claim of Theorem \ref{thm:MBMP_epsilon_dissipation-laws} follows immediately from the uniform estimates in the energy class (\ref{eq:uniform-bounds}) and the blow-up alternative in Theorem \ref{thm:local_exist_MBMP_epsilon}.

\end{proof}


\section{Proof of Theorem \ref{thm:weak_solns_MBMP}}\label{sec:compact}

The proof of Theorem \ref{thm:weak_solns_MBMP} below follows the proof of Theorem 4.1 of \cite{guo1995} with small modifications.

\begin{proof}[Proof of Theorem \ref{thm:weak_solns_MBMP}]
Consider
\begin{align*}
(\psi_0 , \v{a}_0 , \dot{\v{a}}_0) \in \BW{N} [H^1 (\R^3 )]^2  \times H^1 (\R^3 ; \R^3) \times L^2 (\R^3 ; \R^3) ,
\end{align*}
with $\| \psi_0 \|_2 = 1$ and $\diver{\v{a}_0} = \diver{\dot{\v{a}}_0} = 0$. Let $\{ \varepsilon_n \}_{n \geq 1} \subset \R_+$ with $\varepsilon_n \rightarrow 0$. Combining Theorem \ref{thm:local_exist_MBMP_epsilon} and \ref{thm:MBMP_epsilon_dissipation-laws}, there exists a sequence of solutions
\begin{align*}
\{ (\phi^{n} , \v{A}^{\!n}) \}_{n \geq 1} \subset C ( \R_+ ; \BW{N} [H^1 (\R^3 )]^2 ) \times [C (\R_+ ; H^1 (\R^3 ; \R^3)) \cap C^1 (\R_+ ; L^2 (\R^3 ; \R^3) ] 
\end{align*}
of the modified equations
\begin{align*}
& \p_t \phi^{n} - (i + \varepsilon_n) \Delta \phi^{n} = \varepsilon_n \E_n \phi^{n} - (i + \varepsilon_n) \left( \fL ( \tilde{\v{A}}^{\!n} ) - V(\ul{\v{R}} , \Z) + F_n \right) \phi^{n} , \\
& \square \v{A}^{\!n} = 8 \pi \alpha^2 N \Lambda_{\varepsilon_n}^{-1} \Hproj{ \v{J}_1 [ \phi^{n} , \tilde{\v{A}}^{\!n} ] } ,   \\
& \diver{\v{A}^{\!n}} = 0 , ~~~~  \tilde{\v{A}}^{\!n} = \Lambda_{\varepsilon_n}^{-1} \v{A}^{\!n} , ~~~~  \tilde{\v{B}}^{\!n} = \curl{\tilde{\v{A}}^{\!n}} , \\
& ( \phi^{n} , \v{A}^{\!n} , \partial_t \v{A}^{\!n} ) \Big|_{t = 0} = (\psi_0 , \v{a}_0 , \dot{\v{a}}_0) 
\end{align*} 
where $\E_n = \E [\phi^{n} , \v{A}^{\!n} , \partial_t \v{A}^{\!n}]$, $F_n = F[\v{A}^{\!n} , \partial_t \v{A}^{\!n}]$, and $\fL (\tilde{\v{A}}^{\!n}) = \sum_{j=1}^N \cL_j ( \tilde{\v{A}}^{\!n} )$ is given by (\ref{def:D_j}). Moreover, the bounds 
\begin{align*}
 \| \nabla \phi^{n} (t) \|_2 \leq C_1 , \hspace{1cm} F [\v{A}^{\!n} , \partial_t \v{A}^{\!n}] (t) \leq C_2 , \hspace{1cm} \| \v{A}^{\!n} (t) \|_2 \leq C_3 ( 1 +  t )  
\end{align*}
are satisfied. It will be more convenient to perform the gauge transformation $\phi^{n} \mapsto \phi^{n} e^{i \int_0^t F_n (s) \dd s} \equiv \varphi^{n}$ and instead consider the sequence of solutions $\{ (\varphi^{n} , \v{A}^{\!n}) \}_{n \geq 1}$ to 
\begin{align}
& \p_t \varphi^{n} - (i + \varepsilon_n) \Delta \varphi^{n} = \varepsilon_n (\E_n - F_n) \varphi^{n} - (i + \varepsilon_n) \left( \fL ( \tilde{\v{A}}^{\!n} ) - V(\ul{\v{R}} , \Z) \right) \varphi^{n} , \label{eq:weaksolns1} \\
& \square \v{A}^{\!n} = 8 \pi \alpha^2 N \Lambda_{\varepsilon_n}^{-1} \Hproj{ \v{J}_1 [ \varphi^{n} , \tilde{\v{A}}^{\!n} ] } ,  \label{eq:weaksolns2} \\
& \diver{\v{A}^{\!n}} = 0 , ~~~~  \tilde{\v{A}}^{\!n} = \Lambda_{\varepsilon_n}^{-1} \v{A}^{\!n} , ~~~~  \tilde{\v{B}}^{\!n} = \curl{\tilde{\v{A}}^{\!n}} , \label{eq:weaksolns3} \\
& ( \varphi^{n} , \v{A}^{\!n} , \partial_t \v{A}^{\!n} ) \Big|_{t = 0} = (\psi_0 , \v{a}_0 , \dot{\v{a}}_0) , \label{eq:weaksolns4}
\end{align} 
where $\| \varphi^{n} \|_2 = 1$ and
\begin{align}\label{eq:weaksolns5}
 \| \nabla \varphi^{n} (t) \|_2 \leq C_1 , \hspace{1cm} F_n (t) \leq C_2 , \hspace{1cm} \| \v{A}^{\!n} (t) \|_2 \leq C_3 ( 1 +  t )  
\end{align}

The estimates (\ref{eq:Estimate-Pauli-1}) and (\ref{eq:Estimate-Coulomb-appen_1}) of Lemmas \ref{lem:Estimates-Pauli} and \ref{lem:Estimates-Coulomb}, respectively, yield
\begin{align}\label{eq:weaksolns6}
\| [ \fL ( \tilde{\v{A}}^{\!n} ) - V(\ul{\v{R}} , \Z) ] \varphi^n \|_{\frac{3}{2}} \lesssim ( 1 + \| \v{A}^{\!n} \|_{1,2} ) \| \v{A}^{\!n} \|_{1,2} \| \varphi^{n} \|_{1,2} + \| \varphi^{n} \|_{1,2}  .
\end{align}
Furthermore, in the same way we estimated (\ref{eq:Estimate-KG-3}), we have
\begin{align}\label{eq:weaksolns7}
\| \v{J}_1 [ \varphi^{n} , \tilde{\v{A}}^{\!n} ]  \|_{\frac{3}{2}} \lesssim (1 + \| \v{A}^{\!n} \|_{1,2} ) \| \varphi^{n} \|_{1,2} .
\end{align}
The bounds (\ref{eq:weaksolns5}) through (\ref{eq:weaksolns7}) allow us to apply the Banach-Alaoglu Theorem, and, thus, we may extract a subsequence, still denoted by $\{ (\varphi^{n} , \v{A}^{\!n}) \}_{n \geq 1}$, such that
\begin{align}
& \v{A}^{\!n} \xrightarrow{\mr{w}^*} \v{A} ~~~~ \text{in} ~~~ L^{\infty} ([0,T] ; H^1 ) , \label{eq:weaksolns8} \\
& \partial_t \v{A}^{\!n} \xrightarrow{\mr{w}^*} \partial_t \v{A} ~~~~ \text{in} ~~~ L^{\infty} (\R_+ ; L^2  ) \label{eq:weaksolns9} \\
& \varphi^{n} \xrightarrow{\mr{w}^*} \psi ~~~~ \text{in} ~~~ L^{\infty} (\R_+ ; H^1) , \label{eq:weaksolns10} \\
& \v{J}_1 [\varphi^{n} , \tilde{\v{A}}^{\!n} ] \xrightarrow{\mr{w}^*} \beta ~~~~ \text{in} ~~~ L^{\infty} ([0,T] ; L^{\frac{3}{2}} ) \label{eq:weaksolns11} \\
& [ \fL ( \tilde{\v{A}}^{\!n} ) - V(\ul{\v{R}} , \Z) ] \varphi^{n} \xrightarrow{\mr{w}^*} \gamma ~~~~ \text{in} ~~~ L^{\infty} ([0,T] ; L^{\frac{3}{2}}  ), \label{eq:weaksolns12} 
\end{align}
for all $0 < T < \infty$. Passing to the limit in (\ref{eq:weaksolns1}) through (\ref{eq:weaksolns3}), and using (\ref{eq:weaksolns8}) through (\ref{eq:weaksolns12}), we find
\begin{align}
& \p_t \psi - i \Delta \psi = - i \gamma , \label{eq:weaksolns13} \\
& \square \v{A} = 8 \pi \alpha^2 N \Hproj{ \beta } , \hspace*{0.5cm}  \diver{\v{A}} = 0 , \label{eq:weaksolns14}
\end{align}
as equations in $\fD' (\R_+ ; H^{-1} (\R^{3N} ; \C^{2^N}))$ and $\fD' (\R_+ ; H^{-1} (\R^3 ; \R^3))$, respectively. We note that in passing to the limit we've used Lemma \ref{lem:bound_on_coulomb} so assure that $|\E_n - F_n| \nrightarrow \infty$ as $\varepsilon_n \rightarrow 0$. Now, $\partial_t \v{A} \in L^{\infty} (\R_+ ; L^2 (\R^3 ; \R^3))$, $\partial_t^2 \v{A} \in L^{\infty} (\R_+ ; H^{-1} ( \R^3 ; \R^3))$, and $\partial_t \psi \in L^{\infty} (\R_+ ; H^{-1} (\R^{3N} ; \C^{2^N}))$ by (\ref{eq:weaksolns10}), (\ref{eq:weaksolns13}), and (\ref{eq:weaksolns14}), respectively. Thus
\begin{align*}
(\psi , \v{A} , \partial_t \v{A}) \in C (\R_+ ; H^{-1} \oplus L^2 \oplus H^{-1}) \cap L^{\infty}_{\mr{loc}} (\R_+ ; H^1 \oplus H^1 \oplus L^2) ,
\end{align*}
and this implies the weak continuity $(\psi , \v{A} , \partial_t \v{A}) \in C^{\mr{w}} (\R_+ ; H^1 \oplus H^1 \oplus L^2)$.

Next we show that $\gamma = [ \fL ( \v{A} ) - V(\ul{\v{R}} , \Z) ] \psi$ and $\beta = \v{J}_1 [\psi , \v{A}]$. It suffices to show these equalities on bounded sets. Let $I \subset \R_+$ be a bounded interval and $\Omega \subset \R^3$, $S \subset \R^{3N}$ be bounded and open. It suffices to show that $\gamma$ and $\beta$ coincide with $[ \fL ( \v{A} ) - V(\ul{\v{R}} , \Z) ] \psi$ and $\v{J}_1 [\psi , \v{A}]$ on $I \times S$ and $I \times \Omega$, respectively. Now, by (\ref{eq:weaksolns5}), $\{ ( \v{A}^{\!n} , \partial_t \v{A}^{\!n} ) \}_{n \geq 1}$ is a bounded sequence in $L^4 (I ; H^1 (\Omega ; \R^3) \times L^2 (\Omega ; \R^3))$. Since $H^1 (\Omega ; \R^3) \hookrightarrow L^4 (\Omega ; \R^3) \hookrightarrow L^2 (\Omega ; \R^3)$, with the first embedding compact and the second one continuous, the Aubin-Lions lemma \cite[Theorem 1.20]{barbu2010nonlinear} then asserts that there is a subsequence of $\{ \v{A}^{\!n} \}_{n \geq 1}$, still denoted by $\{ \v{A}^{\!n} \}_{n \geq 1}$, such that
\begin{align}\label{eq:weaksolns15}
\v{A}^{\!n} \xrightarrow{n \rightarrow \infty} \v{A} ~~~~ \text{in} ~~~ L^4 (I \times \Omega)
\end{align}
Further, note that $\{ \partial_t \varphi^{n} \}_{n \geq 1}$ is bounded in $L^{\infty} (I ; H^{-1} (S ; \C^{2^N}))$ by equation (\ref{eq:weaksolns1}) and (\ref{eq:weaksolns2}), respectively. This implies that $\{ (\varphi^{n} , \partial_t \varphi^{n} ) \}_{n \geq 1}$ is bounded in 
\begin{align*}
L^2 (I ; H^1 (S ; \C^{2^N}) \times H^{-1} (S ; \C^{2^N})).
\end{align*}
Again using the Aubin-Lions lemma, we conclude
\begin{align}\label{eq:weaksolns16}
\varphi^{n} \xrightarrow{n \rightarrow \infty} \psi ~~~~ \text{in} ~~~ L^4 (I \times S )
\end{align}
From (\ref{eq:weaksolns8}), (\ref{eq:weaksolns10}), (\ref{eq:weaksolns15}), and (\ref{eq:weaksolns16}) it is straightforward to show that
\begin{align*}
& \Lambda_{\varepsilon_n}^{-1} \v{J}_1 [\varphi^{n} , \tilde{\v{A}}^{\!n}] \rightharpoonup \v{J}_1 [\psi , \v{A}] ~~~~ \text{in} ~~~ L^{\frac{4}{3}} (I \times \Omega) , \\
& [ \fL ( \tilde{\v{A}}^{\!n} ) - V(\ul{\v{R}} , \Z) ] \varphi^{n} \rightharpoonup [ \fL ( \v{A} ) - V(\ul{\v{R}} , \Z) ] \psi ~~~~ \text{in} ~~~ L^{\frac{4}{3}} (I \times S )  .
\end{align*}
Moreover (\ref{eq:weaksolns11}) through (\ref{eq:weaksolns12}) imply 
\begin{align*}
& \Lambda_{\varepsilon_n}^{-1} \v{J} [\varphi^{n} , \tilde{\v{A}}^{\!n}] \rightharpoonup \beta ~~~~ \text{in} ~~~ L^{\frac{4}{3}} (I \times \Omega) , \\
& [ \fL ( \tilde{\v{A}}^{\!n} ) - V(\ul{\v{R}} , \Z) ] \varphi^{n} \rightharpoonup \gamma ~~~~ \text{in} ~~~ L^{\frac{4}{3}} (I \times S )  .
\end{align*}
Since weak limits are unique we conclude $\gamma = [ \fL ( \v{A} ) - V(\ul{\v{R}} , \Z) ] \psi$ and $\beta = \v{J} [\psi , \v{A}]$ on $I \times \Omega$ and $I \times S$, respectively. 

It remains to show that $(\psi , \v{A} , \partial_t \v{A})$ satisfies the initial conditions (\ref{eq:weaksolns4}). Since
\begin{align*}
( \v{A}^{\!n} , \partial_t \v{A}^{\!n} ) \in L^2 ([0,T] ; H^1 (\R^3 ; \R^3)) \times L^2 ([0,T] ; L^2 (\R^3 ; \R^3)) ,
\end{align*}
we may integrate by parts to find
\begin{align*}
\int_0^T \langle \v{A}^{\!n} (s) \partial_t f(s) + \partial_t \v{A}^{\!n} (s) f(s) , \varphi \rangle_{H^1 , H^{-1}} ds = - \langle \v{a}_0 , \varphi \rangle_{H^1 , H^{-1}} 
\end{align*}
for all $\varphi \in L^2$ and $f \in C^{\infty} (\R)$ with $f(0) = 1$ and $f(T) = 0$. Passing to the limit $\varepsilon_n \rightarrow 0$ and using (\ref{eq:weaksolns9}) and (\ref{eq:weaksolns10}) we find
\begin{align*}
\int_0^T \left\lbrace \v{A} (s) \partial_t f(s) + \partial_t \v{A} (s) f(s) \right\rbrace ds = - \v{a}_0 
\end{align*}
in $L^2 (\R^3)$, which implies that $\v{A} (0) = \v{a}_0$. Likewise,
\begin{align*}
& - \langle \dot{\v{a}}_0 , \eta
\rangle_{H^{-1} , H^1} = \\
& \int_0^T \langle \partial_t \v{A}^{\!n} (s) \partial_t f(s) + ( \Delta \v{A}^{\!n} (s) + 8 \pi \alpha^2 N \Lambda_{\varepsilon_n}^{-1} \Hproj{ \v{J}_1 [\varphi^{n} (s) , \tilde{\v{A}}_n (s)] } ) f(s) , \eta \rangle_{H^{-1} , H^1} ds
\end{align*}
for all $\eta \in H^1$  and $f \in C^{\infty} (\R)$ with $f(0) = 1$ and $f(T) = 0$. Again, passing to the limit as $n \rightarrow \infty$ and using (\ref{eq:weaksolns10}) and (\ref{eq:weaksolns13}), we arrive at
\begin{align*}
\int_0^T \left\lbrace \partial_t \v{A} (s) \partial_t f(s) + \partial_t^2 \v{A} (s) f(s) \right\rbrace ds = - \dot{\v{a}}_0
\end{align*}
in $H^{-1}$, which implies $\partial_t \v{A} (0) = \dot{\v{a}}_0$. An identical argument implies that $\phi (0) = \phi_0$. 
\end{proof}

\bibliography{MaxPauli}
\bibliographystyle{unsrt}

\end{document}